%% file: main.tex
\documentclass[12pt]{article}
\usepackage{bqf-paper}
\usepackage{quiver}
\usepackage{algorithm}
\usepackage{algpseudocode}
\usepackage{amsmath}
\usepackage[utf8]{inputenc} 
\usepackage[T1]{fontenc}    
\usepackage{hyperref}       
\usepackage{url}            
\usepackage{booktabs}       
\usepackage{amsfonts}       
\usepackage{nicefrac}       
\usepackage{microtype}      
\usepackage{xcolor}         
\usepackage{tikz}
\usetikzlibrary{arrows.meta,positioning,calc,decorations.pathreplacing,shapes.geometric}
\usepackage[table]{xcolor}
\usepackage{listings}

\newcommand{\sgn}{\operatorname{sgn}}

\newcommand{\D}{\mathbb D}

\newcommand{\arsinh}{\operatorname{arsinh}}

\newcommand{\simplex}{\Delta}

\newcommand{\T}{\mathbb T}
\newcommand{\K}{\mathcal K}

\newcommand{\ip}[2]{\langle #1,#2\rangle}

\newcommand{\ignore}[1]{{}}

\title{\paperTitleSmallCaps{
The Grothendieck Constant is Less Than
\(
\frac{\pi}{2\log(1+\sqrt2)}
- {\scriptstyle 10^{-5}}
\)
}}

\author{
  \normalsize
  \setlength{\tabcolsep}{15pt}
  \begin{tabular}{ccc}
    Alan Li\thanks{Equal contribution.} & Rahul Saha\footnotemark[1] & Anton Xue \\
    \small\texttt{alanli@cs.utexas.edu} & \small\texttt{rahul.saha@utexas.edu} & \small\texttt{anton.xue@utexas.edu} \\[3ex]
    Swarat Chaudhuri & Adam Klivans & Pravesh K. Kothari \\
    \small \texttt{swarat@cs.utexas.edu} & \small\texttt{klivans@cs.utexas.edu} & \small\texttt{kothari@cs.princeton.edu} \\[3ex]
    \multicolumn{3}{c}{Raghu Meka} \\
    \multicolumn{3}{c}{\small\texttt{raghum@cs.ucla.edu }}
  \end{tabular}
}

\date{}
\begin{document}


\maketitle

\input{paper/abstract}

\input{paper/introduction}

\input{paper/krivine_schemes}

\input{paper/coefficient_computation_formulas}

\input{paper/bounding_the_remainder}

\input{paper/head_computation}

\input{paper/generalization}

\input{paper/ai_guided_search}

\input{paper/tiger_partitions}

\input{paper/koenig}

\input{paper/discussion}

\input{paper/acknowledgements}

\bibliographystyle{amsalpha}
\bibliography{references}

\input{paper/appendix}

\end{document}

%% file: paper/abstract.tex
\abstract{
\noindent

 We prove that the Grothendieck constant $K_G < \frac{\pi}{2 \log (1 + \sqrt{2})} - 10^{-5}$. This improves on the work of \cite{braverman2011grothendieckconstantstrictlysmaller} who showed that $K_G < \frac{\pi}{2 \log(1 + \sqrt{2})} - \varepsilon$ for some non-explicit $\varepsilon > 0$.

\ignore{
A landmark result was due to Krivine in 1977, who proved that \(K_G \leq \tfrac{\pi}{2 \log (1 + \sqrt{2})}\) via the analysis of hyperplane rounding schemes.
A rounding scheme is used to discretize continuous vectors, and improving the upper bound on \(K_G\) can be reduced to the discovery and analysis of new schemes.
Krivine conjectured that this bound was optimal --- a widely held belief for over thirty years until its refutation by Braverman-Makarychev-Makarychev-Naor (BMMN) in 2011.
Specifically, BMMN used a technique called mixing to analyze a new class of rounding schemes, and showed that \(K_G < \tfrac{\pi}{2 \log (1 + \sqrt{2})} - \varepsilon\) for some \(\varepsilon > 0\).
However, BMMN's mixing provides only a partial understanding of rounding schemes, and does not translate to strong numerical improvements: extending their analysis only gives an improvement on the order of \(\varepsilon \approx 10^{-53}\).

In this work, we overcome the limitations of BMMN's mixing argument through direct analytic control of the rounding and its associated local inverse. This allows us to better analyze rounding schemes and achieve a tighter bound on $K_G$.
Building on this approach, we automate the discovery of schemes through an AI-guided search, through which we prove that
\[
K_G < \frac{\pi}{2 \log (1 + \sqrt{2})} - 10^{-5}.
\]
Furthermore, we establish asymptotic results on the K\"onig bilinear form, an object closely related to $K_G$.
In summary, this work establishes the best-known upper bound on the Grothendieck constant and improves our understanding of rounding schemes.}}
    
\medskip

\newpage 
\tableofcontents
\newpage

%% file: paper/introduction.tex
\section{Introduction}


The Grothendieck inequality~\cite{Gro53} from functional analysis shows that~\cite{LP68}, there exists a universal constant \(K < \infty\) such 
that for every \(m,n \in \N\) and every matrix \(A=(a_{ij}) \in \R^{m \times n}\),
\begin{equation}
\label{eqn:KG_inequality}
    \max_{\varepsilon_i,\delta_j \in \{\pm 1\}}
    \bigg|
    \sum_{i=1}^m \sum_{j=1}^n a_{ij}\varepsilon_i\delta_j
    \bigg|
    \leq 
    \sup_{\substack{u_i,v_j \in S^{m+n-1}}}
    \bigg|
    \sum_{i=1}^m \sum_{j=1}^n a_{ij}\angles{u_i,v_j}
    \bigg|
    \leq 
    K
    \max_{\varepsilon_i,\delta_j \in \{\pm 1\}}
    \bigg|
    \sum_{i=1}^m \sum_{j=1}^n a_{ij}\varepsilon_i\delta_j
    \bigg|.
\end{equation}
Here \(S^{m+n-1}\) denotes the unit sphere in \(\R^{m+n}\).
The Grothendieck constant $K_G$ is the infimum of \(K\) for which this inequality holds.

The Grothendieck inequality is equivalent to the statement that the integrality gap of the canonical SDP relaxation for maximizing the bilinear form $x^{\top} Ay$ over $x \in \{-1,1\}^m$ and $y \in \{-1,1\}^n$ is an absolute constant $K_G$. As a result, one obtains efficient algorithms for approximating the matrix \emph{cut norms} that, among other applications, lead to algorithms for finding Szemer\'{e}di partitions of graphs~\cite{AN06}. Beyond this natural role in the design of approximation algorithms in computer science, the inequality is important in several subfields, including the geometry of Banach spaces, $ C^*$-algebras, harmonic analysis, operator spaces, and quantum mechanics (see the monographs~\cite{lindenstrauss2013classical,  pisier2011grothendieckstheorempastpresent, khot2011grothendiecktypeinequalitiescombinatorialoptimization} for more).



Despite a long line of work on the problem, the exact value of \(K_G\) remains unknown. Grothendieck proved that \(K_G \leq \sinh(\tfrac{\pi}{2}) = 2.3012 \ldots\). The next advance was made in 1977 by Krivine~\cite{krivine1977constante}, who  proved that
\begin{equation}
    K_G \leq \tfrac{\pi}{2 \log(1 + \sqrt{2})} = 1.7822 \ldots
\end{equation}
Krivine's proof involved developing an explicit \textit{rounding scheme}, an algorithm that maps an input set of vectors \(u_i, v_j \in S^{d-1}\)
into signs
\(\varepsilon_i, \delta_j \in \{\pm 1\}\).
Krivine's scheme consisted of first lifting \(\{u_i, v_j\}\) to an infinite-dimensional space and then partitioning them by a random hyperplane into signed \(\{\varepsilon_i, \delta_j\}\). He showed that this hyperplane partition yields the above bound and conjectured that it is optimal, leading to subsequent efforts to find matching lower bounds~\cite{Davie1984LowerBoundKG, Reeds1991LowerBoundKG}, with notable recent efforts being~\cite{heilman2026lowerboundgrothendiecksconstant, jones2026grothendieckconstantstrictlylarger}.

Krivine's conjecture was refuted three decades later by Braverman,
Makarychev, Makarychev, and Naor~\cite{braverman2011grothendieckconstantstrictlysmaller}, who showed that, in fact, \(K_G < \tfrac{\pi}{2 \log (1 + \sqrt{2})} - \varepsilon\) for some \(\varepsilon > 0\).
Their proof relied on a ``mixing'' of two rounding strategies, i.e., applying a new non-hyperplane scheme with a small probability and the hyperplane rounding scheme with the rest. 
This idea of mixing is natural in the context of analyzing rounding algorithms and is informally analogous to first-order methods that obtain a better guess for an optimum by exploiting a nonzero gradient. Such a first-order analysis suffices to prove the \textit{existence} of a scheme that outperforms Krivine's rounding. 
However, this technique makes it difficult to derive an explicit improvement. 

In this work, we develop new rounding techniques to derive an explicit improvement to the Grothendieck constant. 

\begin{theorem}[Main]
\label{thm:explicit-improvement}
The Grothendieck constant $K_G$ satisfies:
\[
    K_G
    \leq
    \frac{\pi}{2\log(1+\sqrt2)}
    -
    6.039\cdot 10^{-5}
\]
\end{theorem}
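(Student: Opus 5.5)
We follow the Krivine--BMMN framework in its ``local inverse'' form. By the standard reduction (Krivine's lemma as used in \cite{braverman2011grothendieckconstantstrictlysmaller}), it suffices to exhibit a rounding scheme with the following property. Take two functions \(f,g:\R^n\to\{\pm1\}\) of a Gaussian vector \(G\), and let \(H(\rho)=\E[f(G)g(G')]\) where \(G,G'\) are \(\rho\)-correlated. Suppose \(H\) is odd and analytic near \(0\), with a local inverse \(H^{-1}(t)=\sum_{k\ \mathrm{odd}} b_k t^k\). Suppose further that \(c>0\) satisfies \(\sum_k |b_k| c^k = 1\). Then \(K_G \le 1/c\).

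The reason is as follows. Given unit vectors \(u_i,v_j\), we embed them (via tensor powers with signs) into new unit vectors \(u_i',v_j'\) satisfying \(\langle u_i',v_j'\rangle = H^{-1}(c\langle u_i,v_j\rangle)\). We then apply the scheme \((f,g)\) with a shared Gaussian, which gives \(\E[\varepsilon_i\delta_j]=c\langle u_i,v_j\rangle\).

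Krivine's case is \(f=g=\sgn\) in one dimension. There \(H(\rho)=\tfrac{2}{\pi}\arcsin\rho\) and \(H^{-1}(t)=\sin(\pi t/2)\), so \(\sum_k |b_k| c^k=\sinh(\pi c/2)\), which gives \(c=\tfrac{2}{\pi}\log(1+\sqrt2)\). The target bound corresponds to \(1/c\) smaller by \(6.039\cdot10^{-5}\), that is, \(c\) larger by roughly \(1.9\cdot10^{-5}\).

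The plan has four steps.

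\textbf{Step 1: fix the scheme.} We take an explicit two-dimensional partition scheme: \(f(x)=g(x)=\sgn(\phi(x))\) for a piecewise-defined boundary curve found by search, a perturbation of a half-plane (a ``tiger''-type partition). For such a scheme we write
\[
H(\rho)=\sum_{k\ \mathrm{odd}} a_k \rho^k,
\]
where \(a_k=\|\hat f_k\|^2\) is the weight of \(f\) on the degree-\(k\) Hermite level. In two dimensions these coefficients reduce to Fourier integrals along the boundary. This gives exact or rigorously enclosed formulas for each \(a_k\).

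\textbf{Step 2: invert the series.} We compute \(b_k\) by series reversion for \(k\le N\) (the ``head''), for \(N\) in the hundreds. All arithmetic is done in interval arithmetic, so that each \(b_k\) comes with certified bounds.

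\textbf{Step 3: bound the tail.} We must show
\[
\sum_{k\le N}|b_k|c^k+\sum_{k>N}|b_k|c^k\le 1 .
\]
For the tail we need a priori control of \(|b_k|\) without computing them. Here is how we would get it. First, use \(|a_k|\le 1\) together with decay of the \(a_k\), obtained from smoothness of the boundary. This shows that \(H\) extends analytically to a complex disk, and that \(H'\) stays bounded away from \(0\) on a slightly smaller disk of radius \(r\). Then, by Lagrange inversion and Cauchy estimates,
\[
|b_k|\le M\,R^{-k},
\]
where \(R\) is the radius on which \(H^{-1}\) is analytic (bounded via Rouché's theorem). Since \(c<R\), the tail is geometric and can be made below the slack left by the head.

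\textbf{Step 4: certify.} Choose \(c\) so that \(1/c\) equals the stated bound, and verify numerically that the head sum plus the tail bound is at most \(1\).

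The main obstacle is Step 3 together with the delicacy of Step 1. The perturbation from Krivine's scheme must be large enough to give a gain of order \(10^{-5}\), but small enough that the inverse series keeps absolute convergence at the larger \(c\). Krivine's \(c\) sits exactly at the point where \(\sinh\) reaches \(1\), and the absolute-value signs make the problem non-smooth. So I would first search over perturbation families for which the signs of the \(b_k\) match Krivine's alternating pattern for small \(k\). Then I would prove a complex-analytic lower bound on \(|H'|\) on a disk of radius slightly larger than \(c\), since that single lemma controls the entire tail.
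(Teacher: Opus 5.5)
Your Krivine reduction, normalization and head/tail outline match the paper's. The gaps are in Step 3, where your tail bound cannot work as designed, and in Step 1, where your choice of scheme works against you.

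\textbf{Step 3.} No decay estimate on the $a_k$ can make $H$ analytic beyond the unit disk. For a nonconstant $\pm1$-valued $f=g$ one has $\sum_k k\,a_k=\infty$, since such an $f$ is never in the Gaussian Sobolev space. So the Taylor series of $H$ has radius exactly $1$, and smooth boundaries give only polynomial decay (about $k^{-3/2}$, as for $\arcsin$).

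More seriously, $H^{-1}$ must be controlled at points outside the unit disk. When the $b_k$ carry Krivine's alternating signs, as you intend, $H^{-1}(iy)=i\sum_k|b_k|y^k$. Hence $|H^{-1}(iy)|<1$ only for $y$ below the root $c^*$ of $\sum_k|b_k|(c^*)^k=1$; already for Krivine, $H^{-1}(ic)=i\sinh(\pi c/2)=i$. Any argument using $H$ only on a disk $|z|\le r<1$ (Rouch\'e, or a lower bound on $|H'|$) yields an inverse with $H^{-1}(R\mathbb{D})\subset r\mathbb{D}$, and therefore $R\le c^*$. So $R-c$ is at most the tiny amount by which your scheme beats the target. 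Then $c/R$ is essentially $1$, and no feasible head length makes the Cauchy tail small.

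The missing idea is to leave the unit disk through the strip, which the paper does as follows:
\begin{itemize}
\item $H$ is holomorphic on all of $|\operatorname{Re}z|<1$, with $|H(z)|\le\frac{\pi}{2}\frac{|1-z^2|}{1-(\operatorname{Re}z)^2}$ from the Gaussian integral representation.
\item In the coordinate $F(w)=H(\sin w)$, Krivine's scheme becomes the identity. Since $\sin$ maps $|w|\le1.1$ into the strip, $|F(w)-w|<15.3$ there.
\item A certified head of $F(w)-w$ plus a Cauchy tail gives $\sup_{|w|=0.98}|F(w)-w|<0.005$.
\item Rouch\'e then gives $F^{-1}$ on $0.975\,\mathbb{D}$ with $|F^{-1}|<0.98$, and $H^{-1}=\sin\circ F^{-1}$ is bounded by $\sinh(0.98)$ there.
\end{itemize}
In the normalization $H_{\mathrm{hyp}}=\arcsin$ this is $R=0.975$ against $\gamma\approx0.8814$, a ratio near $0.9$, so a head of about $200$ terms suffices.

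\textbf{Step 1.} With $f=g$, every $a_k=\|\hat f_k\|^2$ is a nonnegative square. Every non-hyperplane component the paper uses has $g\neq f$, for instance $f=\operatorname{sgn}(y-\eta\,\mathrm{He}_3(x))$ and $g=\operatorname{sgn}(y+\eta\,\mathrm{He}_3(x))$. This makes the Hermite components of even $y$-degree enter the coefficients of $H$ with a minus sign.

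This is not cosmetic. The inequality $|H^{-1}(i\gamma)|\le\sum_k|b_k|\gamma^k\le1$ shows that admissibility at $\gamma$ forces $\gamma=H(i\beta)/i$ for some $|\beta|\le1$ (normalization $H_{\mathrm{hyp}}=\arcsin$). For $f=g=\operatorname{sgn}(y-\eta\,\mathrm{He}_3(x))$, a second-order expansion gives
\[
H(i\beta)/i=\operatorname{arsinh}\beta-\eta^2\beta\sqrt{1+\beta^2}+O(\eta^4).
\]
For small $\eta$ this is strictly below Krivine. For the paper's pair $g\neq f$, the $\eta^2$ term is instead $-\eta^2\beta(1-\beta^2)/\sqrt{1+\beta^2}$, which vanishes at $\beta=1$.

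Tiger partitions are a two-cycle $f_\infty\neq g_\infty$, not a pair with $f=g$. The paper's experiments show they give $K_G$ bounds near $1.793$, worse than Krivine, despite their larger K\"onig value.

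Finally, $6.039\cdot10^{-5}$ is the certified output for one specific scheme, which your plan never pins down. That scheme mixes the hyperplane (weight $1-p$) with the degree-$9$ pair $(f_{P_9},g_{Q_9})$ at $p\approx0.2734$, checked at $\gamma=\rho_*+2.987\cdot10^{-5}$ with $N_0=106$ and $N_1=192$. The paper observes that its best mixtures are those in which an inverse coefficient (its $a_7$, your $b_7$) is driven through zero. That is the opposite of your heuristic of preserving Krivine's sign pattern.
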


Our main idea for achieving this improvement is a new technique for analyzing rounding schemes that goes beyond the first-order analysis in~\cite{braverman2011grothendieckconstantstrictlysmaller}. As a result, a \emph{single} simple new scheme already suffices to obtain a non-trivial improvement on $K_G$:

\begin{theorem}
\label{thm:pure-h3}
Let \(\Hermite(x)=(x^3-3x)/\sqrt{6}\) be the third-degree orthonormal probabilist's Hermite polynomial, and set
\[
 \tau=0.01825,
 \qquad
 \eta= \sqrt{6}\tau
 =0.0447031878\ldots .
\]
Define odd partitions \(f,g:\R^2\to\{\pm1\}\) by
\[
    f(x,y)=\sgn(y-\eta \Hermite(x)),
    \qquad
    g(x,y)=\sgn(y+\eta \Hermite(x)).
\]
The rounding scheme generated by this pair of partitions gives the upper bound,
\[
    K_G
    <
   \frac{\pi}{2\log(1+\sqrt2)} - 
    9.7\cdot 10^{-6}
\]
\end{theorem}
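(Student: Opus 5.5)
The proof follows the Krivine-scheme framework, which the paper develops after the introduction; here I describe it in its standard form. A pair of odd partitions $f,g:\R^2\to\{\pm1\}$ defines a rounding. We lift $u_i,v_j$ to unit vectors $\tilde u_i,\tilde v_j$ in a Hilbert space. We take two independent standard Gaussian vectors $G_1,G_2$ and set $\varepsilon_i=f(\langle G_1,\tilde u_i\rangle,\langle G_2,\tilde u_i\rangle)$ and $\delta_j=g(\langle G_1,\tilde v_j\rangle,\langle G_2,\tilde v_j\rangle)$. Then $\E[\varepsilon_i\delta_j]=F(\langle \tilde u_i,\tilde v_j\rangle)$, where
\[
F(t)=\E\bigl[f(X)g(Y)\bigr],
\]
and $(X,Y)$ are $t$-correlated pairs of standard Gaussians in $\R^2$. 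Because $f,g$ are odd, $F$ is an odd power series $F(t)=\sum_{k \text{ odd}} a_k t^k$, with coefficients given by Hermite expansions of $f$ and $g$. The Krivine argument then works as follows. If $F^{-1}$ (a local inverse near $0$) has Taylor coefficients $b_k$ and
\[
\sum_k |b_k|\, c^k = 1
\]
for some $c>0$, then one can construct lifts with $F(\langle\tilde u_i,\tilde v_j\rangle)=c\langle u_i,v_j\rangle$, and this gives $K_G\le 1/c$. For hyperplane rounding, $F(t)=\tfrac{2}{\pi}\arcsin t$ and $F^{-1}(s)=\sin(\pi s/2)$, which gives $c=\tfrac{2}{\pi}\log(1+\sqrt2)$. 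So the task is to show that for the specified $f,g$ the inverse series satisfies $\sum_k|b_k|c^k\le 1$ with $1/c< \tfrac{\pi}{2\log(1+\sqrt2)}-9.7\cdot10^{-6}$.

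\textbf{Step 1: compute the coefficients $a_k$.} Condition on the first coordinate. Since $f(x,y)=\sgn(y-\eta\Hermite(x))$ and the second coordinates are $t$-correlated independently of the first, conditioned on $(x,x')$ we get
\[
\E\bigl[f g\mid x,x'\bigr]=\Phi_t\bigl(\eta\Hermite(x),-\eta\Hermite(x')\bigr).
\]
Here $\Phi_t(\alpha,\beta)=\E[\sgn(Y-\alpha)\sgn(Y'-\beta)]$ for $t$-correlated $Y,Y'$. Expanding $\sgn(y-\alpha)$ in Hermite polynomials in $y$ uses the coefficients $\varphi(\alpha)He_{k-1}(\alpha)$, up to normalization. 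This gives
\[
F(t)=\sum_{m} t^m\, c_m\,\E_{x,x'}\bigl[\varphi(\eta H(x))He_{m-1}(\eta H(x))\cdot \varphi(\eta H(x'))He_{m-1}(-\eta H(x'))\bigr].
\]
Each inner function is then expanded in Hermite polynomials in $x$ with correlation $t$. The result is an explicit double series in $t$ with coefficients computable to rigorous precision, for example by Gaussian quadrature with interval arithmetic. Presumably the earlier sections' ``coefficient computation formulas'' provide exactly this.

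\textbf{Step 2: invert and bound.} Compute $b_1,\dots,b_N$ of $F^{-1}$ exactly or with interval bounds by Lagrange inversion, for a moderate $N$ (the ``head''). Then bound the tail $\sum_{k>N}|b_k|c^k$ (the ``remainder''). To do this, show that $F$ extends analytically to a disk or suitable domain, with $|F(z)|$ and $|F'(z)|$ controlled there. Cauchy estimates then give $|b_k|\le M r^{-k}$ with $r>c$, so the tail is geometrically small. Finally, verify numerically that $\sum_{k\le N}|b_k|c^k+\text{tail}<1$ at $1/c=\tfrac{\pi}{2\log(1+\sqrt2)}-9.7\cdot10^{-6}$.

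\textbf{Main obstacle.} The main obstacle is the remainder bound. The improvement is of order $10^{-5}$, so the tail estimate must be extremely tight. It also requires analytic control of $F^{-1}$ on a region reaching past $c\approx0.561$. Compare the $\sin$ case, whose coefficients decay only like $(\pi/2)^k/k!$, and note that our perturbation has size $\eta\approx0.045$. I would first bound $F-\tfrac{2}{\pi}\arcsin$ on a complex disk using the explicit series from Step 1 with crude absolute bounds. I would then apply Rouché or a quantitative inverse function theorem to get $F^{-1}$ analytic on a disk of radius somewhat above $c$, which yields the geometric tail bound.
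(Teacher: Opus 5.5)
Your reduction (Krivine lifting, the Hermite-diagonal formula for the Taylor coefficients, Lagrange inversion, a certified head) matches the paper. The remainder step as you describe it cannot succeed, however, and the missing piece is exactly the paper's key idea.

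You plan to bound $F-\frac{2}{\pi}\arcsin$ on a disk in the $t$-plane, using crude absolute bounds on the Hermite series, and then apply Rouch\'e. Those bounds, $\sum_{a,b}|\mathcal{A}_{a,b}\mathcal{B}_{a,b}|\,|t|^{a+b}\le\|f\|_2\|g\|_2=1$ (with $\mathcal{A}_{a,b},\mathcal{B}_{a,b}$ the Hermite coefficients of $f,g$), are available only for $|t|\le 1$. Also, $\arcsin$ is singular at $t=\pm1$. So the comparison can only be made on circles $|t|=\rho\le 1$.

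On such a circle, $\min_{|t|=\rho}\frac{2}{\pi}|\arcsin t|=\frac{2}{\pi}\operatorname{arsinh}\rho$, attained at $t=\pm i\rho$. For $\zeta=is$ with $s\ge\frac{2}{\pi}\operatorname{arsinh}\rho$, the comparison function $\frac{2}{\pi}\arcsin t-\zeta$ has no zero in $|t|<\rho$, since its only candidate root is $i\sinh(\pi s/2)$. Hence Rouch\'e on a circle in the $t$-plane certifies $F^{-1}$ at best on $|\zeta|<\frac{2}{\pi}\operatorname{arsinh}\rho\le\frac{2}{\pi}\log(1+\sqrt2)$. That is precisely Krivine's value, which your $c$ must exceed.

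Moreover, a usable Cauchy tail needs the inverse on a disk noticeably larger than $c$. The paper uses radius $0.975$ against $\gamma_0\approx 0.8814$, in the normalization $H=\frac{\pi}{2}F$. The corresponding preimages lie near $\sin(0.975\,\overline{\mathbb{D}})$. That region reaches $|t|\approx\sinh(0.975)\approx 1.14$ on the imaginary axis while staying inside the strip $|\operatorname{Re}t|<1$, so no control obtained from the power series on the unit disk reaches it.

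The missing ingredients are the following.
\begin{enumerate}
\item An analytic continuation of $H$ to the whole strip $S=\{|\operatorname{Re}z|<1\}$ with an explicit bound. The paper gets this from the Gaussian (Mehler-kernel) integral representation, not from the power series: $|H(z)|\le\frac{\pi}{2}\frac{|1-z^2|}{1-(\operatorname{Re}z)^2}$.
\item The substitution $t=\sin w$. Since $|\sin u|\cosh v<0.946$ on $|w|\le 1.1$, the function $\Psi(w):=H(\sin w)$ is holomorphic there with $|\Psi(w)-w|<15.3$, and the hyperplane scheme becomes the identity. Rouch\'e is then applied against $w$ on $|w|=0.98$, which in the $t$-plane is the non-circular contour $\sin(0.98\,\partial\mathbb{D})$.
\end{enumerate}

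Even in these coordinates, crude bounds alone do not close the argument, because the Rouch\'e margin is only $0.98-0.975=0.005$. The paper therefore computes the Taylor coefficients $d_n$ of $\Psi(w)-w$ rigorously through degree $144$, obtaining $\sum_{n\le 144}|d_n|0.98^n<0.0045142$. It uses the crude $15.3$ only for the Cauchy tail of that series.

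This yields $H^{-1}=\sin\circ\Psi^{-1}$ holomorphic on $0.975\,\mathbb{D}$ with $|H^{-1}|<\sinh(0.98)$. Consequently the tail of the inverse majorant beyond degree $220$ at $\gamma_0=\log(1+\sqrt2)+4.8\cdot 10^{-6}$ is below $2.5\cdot 10^{-9}$, and the certified head absorbs it.
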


Our main result above, elaborated in \cref{thm:main-explicit}, is then obtained via a more sophisticated version of the space partition above \emph{mixed} with the classical hyperplane rounding. This partitioning scheme is described by  a 
degree-\(9\) Hermite-threshold rounding of the form
\[
    f_\Phi(x_1,x_2)=\sgn(x_2-P(x_1)),
    \qquad
    g_\Phi(x_1,x_2)=\sgn(x_2-Q(x_1)),
\]
where \(P,Q:\mathbb R\to\mathbb R\) are polynomials. This scheme and the appropriate mixing weights were discovered by an AI-guided search. 

This AI-guided search uses AI agents to iteratively explore the space of partitions and propose strategies for parameter optimization.
Crucially, this allows us to automate the discovery and analysis of rounding schemes, which were previously laboriously done by hand.
We developed this search framework as part of a broader project (under submission) and apply it here to the Grothendieck constant.

\paragraph{K\"onig's bilinear form:} Our methods also allow us to obtain space partitions with improved K\"onig's value \(\mathcal{B}_K (f, g)\) (see Definition~\ref{def:konig} for formal definition). The K\"onig value is a bilinear form on pairs of space partitions \(f, g: \mathbb{R}^d \to \{\pm 1\}\) and was introduced by K\"onig in connection with questions of unconditional convergence, and later appeared as an analytic proxy to study Krivine's conjecture, following the work of Haagerup, K\"onig, and Tomczak-Jaegermann~\cite{Kon00, haagerup1987new}. In particular, partitions with good K\"onig's value are natural candidates for usage in rounding schemes for the Grothendieck inequality. Indeed, this is the proof strategy employed in the work of ~\cite{braverman2011grothendieckconstantstrictlysmaller}. 

Our main result improves on their work and shows: 
\begin{theorem}\label{th:konigintro}
There is an explicit family of partition pairs
\(
    f_d,g_d:\mathbb R^{d+2}\to\{\pm1\}
\)
whose normalized K\"onig values satisfy
\[
    \mathcal{B}_K(f_d,g_d)=C+O(d^{-1/3})
\]
for a constant satisfying
\(C > 0.59357 \ldots\). Notably, $C$ is a value obtained by optimizing a particular expression. 
\end{theorem}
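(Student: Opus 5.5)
The plan is to build $f_d,g_d$ as dimension-lifted versions of the two-dimensional Hermite-threshold partitions of Theorem~\ref{thm:pure-h3}, and then compute the K\"onig value by a Gaussian expansion. Write points of $\mathbb R^{d+2}$ as $(x,y,z)$ with $x,y\in\mathbb R$ and $z\in\mathbb R^d$. Set
\[
f_d(x,y,z)=\sgn\bigl(y-\eta\,\phi_d(x,z)\bigr),\qquad g_d(x,y,z)=\sgn\bigl(y+\eta\,\phi_d(x,z)\bigr),
\]
where $\phi_d$ is an odd perturbation. The natural choice is $\phi_d(x,z)=\Hermite(x)$, possibly averaged with a normalized sum $d^{-1/2}\sum_k \Hermite(z_k)$ to spread the perturbation over many coordinates. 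This keeps both functions odd.

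The first step is to recall the K\"onig form (Definition~\ref{def:konig}). As I understand it, it is an integral of $f(x)g(y)$ against a kernel $\langle x,y\rangle$-weighted Gaussian measure, normalized by dimension. I will expand $f_d,g_d$ in the Hermite basis. The value is then a quadratic form in the Hermite coefficients, where degree-$k$ coefficients get weights such as $\sqrt{2/\pi}$ times explicit factors.

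For a threshold $\sgn(y-\eta\psi)$, I will Taylor expand in $\eta$: the degree-one part in $y$ gives the hyperplane contribution, and the corrections are explicit Gaussian moments of $\psi$. This reduces $\mathcal B_K(f_d,g_d)$ to a function $F_d(\eta,\text{params})$. The normalized-sum structure makes the averaged moments of $\phi_d$ converge to Gaussian ones.

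The second step is the limit $d\to\infty$ with parameters scaled in $d$. My plan is to choose the optimal perturbation size $\eta_d$ so that the gain term and the remainder balance. The gain is of order $\eta^2$ and the error is of order $\eta^3$ plus a CLT term of order $d^{-1/2}/\eta$. Balancing these suggests the rate $O(d^{-1/3})$, with $C=\sup_{\text{params}}\Phi(\text{params})$ defined as the optimum of the limiting expression. The explicit constant $C>0.59357$ is then certified by evaluating $\Phi$ at one explicit parameter choice with rigorous interval arithmetic.

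The main obstacle is controlling the remainder uniformly. This means bounding the non-Gaussian correction terms, via Berry--Esseen or Hermite-tail estimates, sharply enough to get exactly the $d^{-1/3}$ rate with explicit constants. I would first attempt a direct Edgeworth-style bound on the sign function's Hermite coefficients.
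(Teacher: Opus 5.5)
There is a genuine gap. The family you propose has no mechanism that can push the limiting value anywhere near $0.59357$, and the plan never identifies a limiting expression whose optimum is that large.

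First, the K\"onig form is not a positively weighted quadratic form. The kernel in Definition~\ref{def:konig} is $e^{-(\|x\|^2+\|y\|^2)/2}\sin\langle x,y\rangle$, and $\frac{\pi}{2}\mathcal B_K(f,g)=H_{f^\sharp,g^\sharp}(i)/i$. So in the Hermite expansion of Lemma~\ref{lemma:hermite-mehler-diagonalization}, the degree-$n$ terms carry the sign $(-1)^{(n-1)/2}$.

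With $\phi_d=\Hermite(x)$ alone, the $z$-coordinates are dummies. Then $\mathcal B_K(f_d,g_d)$ is a single planar value, independent of $d$. The best planar partitions found (the tiger partitions of Section~\ref{sec:tiger-partitions}) give only about $0.5619$.

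In the Gaussian limit, the averaged term does not help either. Work with the rescaled pair. For real $t$, $S=d^{-1/2}\sum_k s(z_k)$ and its correlated copy $S'$ become jointly Gaussian with covariance $\sum_n\hat s_n^2t^n$; this equals $t^3$ for $s=\Hermite$. Absorb this Gaussian into $y$. Your opposite-sign pair $\sgn(y\mp\eta S)$ then has effective correlation $\frac{t-\eta^2t^3}{1+\eta^2}$, which equals $i$ at $t=i$. That is exactly the hyperplane value. For any odd $s$ the effective correlation at $t=i$ is $ir$ with $|r|\le1$, so the limit is at best the K\"onig value of a smoothed planar pair. In other words, the Gaussian, second-order effect of a CLT-normalized observable cancels at the K\"onig point.

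Your rate heuristic is also self-defeating. Any balancing that sends $\eta_d\to0$ makes the gain, of order $\eta_d^2$, vanish. Then $C$ collapses to $\frac{2}{\pi}\log(1+\sqrt2)$.

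What is missing is a mechanism that survives this cancellation. In the paper it is the \emph{third} cumulant of a radial observable, seen at scale $d^{-1/3}$.

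\begin{itemize}
\item The partitions are rotating half-spaces $\sgn(x_1\cos\psi(u)\pm x_2\sin\psi(u))$ with $\psi(u)=\beta+\alpha(4\|u\|^2-d)$.
\item Integrating out the active plane at $t=i$ leaves $\mathfrak a(\phi)=\arsinh(\cos\phi)$.
\item Expanding $\mathfrak a$ in Fourier modes makes the auxiliary Gaussian integrals factor exactly: $\K_d(\alpha,\beta)=\frac{2}{\pi}\sum_m\widehat{\mathfrak a}(m)e^{2im\beta}\Lambda(m\alpha)^d$.
\item The expansion $\log\Lambda(z)=\frac{16i}{3}z^3-16z^4+O(z^5)$ has no quadratic term; this is the cancellation above, now exact.
\item Hence $\alpha=\kappa d^{-1/3}$ gives the Airy-propagator limit $\frac{2}{\pi}(e^{-\tau_\kappa\partial_\theta^3}\mathfrak a)(2\beta)$ with $\tau_\kappa=\frac{16}{3}\kappa^3$.
\item The quartic term produces the $O(d^{-1/3})$ error, and $C$ is the maximum over $(\kappa,\beta)$ of the cosine series in Corollary~\ref{cor:cosine}.
\end{itemize}

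Your proposed Berry--Esseen or Edgeworth estimates for real Gaussian marginals would not control $\mathcal B_K$ directly. On the auxiliary block the relevant kernel is a signed (or complex) measure whose total variation grows exponentially in $d$. This is why the paper works with an exact product formula rather than a distributional limit theorem.
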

This improves on the hyperplane benchmark of $\frac{2}{\pi}\log(1+\sqrt2)=0.56109 \ldots$. Our approach also provides numerical evidence against a conjecture of ~\cite{braverman2011grothendieckconstantstrictlysmaller}, who conjectured that a class of space partitions called  ``tiger partitions'' are optimal schemes for \(K_G\). In particular, the numerical evidence we provide shows that while the tiger partitions attain a better \(\mathcal{B}_K\) value than the Krivine hyperplane baseline, they produce a worse upper bound on \(K_G\).

This paper is organized as follows.

\begin{enumerate}
    \item \cref{sec:krivine-rounding-scheme} recalls Krivine rounding schemes. 
    \item  \cref{sec:coefficient-computation-formulas} describes preliminary analytic tools that are then implemented in \texttt{Arb}, a C library that can provably track precision and error in numerical computations.
    \item \cref{first improvement for grothendieck constant}
    proves \cref{thm:pure-h3}. Then, this approach is generalized in \cref{sec:generalization}.
    \item 
    \cref{sec:generalization} proves the explicit \(10^{-5}\) improvement, and  \cref{sec:explicit-improvement} describes the AI-assisted search over Hermite-threshold partitions that led to the improvement.
     
    \item \cref{sec:tiger-partitions} introduces K\"onig's bilinear form, and provides numerical evidence against a conjecture in~\cite{braverman2011grothendieckconstantstrictlysmaller} on the extremizers of the form known as ``tiger partitions''. Finally,  \cref{sec:konig-not-sufficient} proves an asymptotic result for K\"onig's bilinear form.
\end{enumerate}

%% file: paper/krivine_schemes.tex
\section{Krivine Rounding Scheme}
\label{sec:krivine-rounding-scheme}

In this section, we recall Krivine schemes and their properties, as described in ~\cite{krivine1977constante, braverman2011grothendieckconstantstrictlysmaller, naor2014krivine}. A rounding scheme is a pair of partitions of $\R^d$, $f,g: \R^d \rightarrow \{-1,1\}$, such that $f,g$ are odd functions. We will work with $f,g$ that satisfy some additional properties. To describe these, we first introduce some basic setup~\cite{braverman2011grothendieckconstantstrictlysmaller}. 


\begin{definition}[Gaussian correlation function]\label{def:gaussian-correlation}
Fix $k \in \mathbb{N}$, and let $f,g:\mathbb{R}^k\to\{-1,1\}$ be odd measurable functions.
Let $G_1,G_2$ be independent standard Gaussian vectors in $\mathbb{R}^k$.

The \emph{arcsine-normalized correlation function} associated to $(f,g)$ is
\[
H_{f,g}(t)
:=
\frac{\pi}{2}\,
\mathbb{E}\left[
f\left(G_1\right)
g\left(
tG_1+\sqrt{1-t^2}G_2
\right)
\right],
\qquad -1<t<1.
\]
\end{definition}

We will use the following basic properties of the Gaussian correlation function and its inverse $H_{f,g}^{-1}$ in our analysis (see \cite{braverman2011grothendieckconstantstrictlysmaller}).

\begin{lemma}\label[lemma]{lemma:correlation-profile-analytic}
Let $f,g:\mathbb{R}^k\to\{-1,1\}$ be odd measurable functions. Then $H_{f,g}$ is odd and
extends holomorphically to the strip
\[
S:=\{z\in\mathbb{C}: |\operatorname{Re}z|<1\}.
\]
In particular, if $H_{f,g}'(0)\neq 0$, then $H_{f,g}$ has a unique \textbf{local inverse branch} at
the origin, i.e., there are open neighborhoods
$U,V\subset\C$ of $0$ such that
\(
    H_{f,g}:U\to V
\)
is biholomorphic. We write
\(
    H_{f,g}^{-1}:V\to U
\)
for this local inverse, chosen so that $H_{f,g}^{-1}(0)=0$. This inverse is
unique when restricted to this branch, and is an odd function.
\end{lemma}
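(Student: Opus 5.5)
Our plan is to first write the correlation as a Gaussian integral with an explicit kernel, and then read off oddness, holomorphy and the inverse branch from that formula. Write $\gamma_k$ for the standard Gaussian density on $\R^k$. For $-1<t<1$, the pair $(G_1, tG_1+\sqrt{1-t^2}G_2)$ is a centered Gaussian vector on $\R^{2k}$ with covariance $\begin{pmatrix} I & tI\\ tI & I\end{pmatrix}$. Hence
\[
H_{f,g}(t)=\frac{\pi}{2}\int_{\R^k}\int_{\R^k} f(x)g(y)\,\frac{1}{(2\pi)^k(1-t^2)^{k/2}}\exp\Big(-\frac{|x|^2-2t\langle x,y\rangle+|y|^2}{2(1-t^2)}\Big)\,dx\,dy .
\]

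Oddness is immediate. Replacing $t$ by $-t$ is the same as substituting $y\mapsto -y$. Since $g$ is odd, this gives $H_{f,g}(-t)=-H_{f,g}(t)$.

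For the holomorphic extension, the plan is to avoid the kernel above, because it degenerates near $t=\pm1$. Instead we use the Hermite (Mehler) expansion. Expand $f=\sum_\alpha \hat f_\alpha h_\alpha$ and $g=\sum_\alpha \hat g_\alpha h_\alpha$ in the orthonormal multivariate Hermite basis of $L^2(\gamma_k)$. By Mehler's formula,
\[
H_{f,g}(t)=\frac{\pi}{2}\sum_{\alpha}\hat f_\alpha\hat g_\alpha\, t^{|\alpha|}.
\]
Since $\|f\|_2=\|g\|_2=1$, Cauchy–Schwarz gives $\sum_\alpha|\hat f_\alpha\hat g_\alpha|\le 1$. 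So the power series $\sum_n c_n z^n$, with $c_n=\frac{\pi}{2}\sum_{|\alpha|=n}\hat f_\alpha\hat g_\alpha$, converges absolutely and uniformly on the closed unit disc. It defines a holomorphic function on the open disc that agrees with $H_{f,g}$ on $(-1,1)$. Oddness of $f$ and $g$ forces $\hat f_\alpha=0$ whenever $|\alpha|$ is even, so only odd powers appear, which gives a second proof of oddness.

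Extending past the disc to the whole strip $S$ is the step I expect to be the main obstacle. The coefficients $c_n$ need not decay, so the series alone gives only the disc. I would first handle the one-dimensional model. There Mehler's kernel $M_z(x,y)=(1-z^2)^{-1/2}\exp\big(\tfrac{2zxy-z^2(x^2+y^2)}{2(1-z^2)}\big)$ makes sense for complex $z$ with $1-z^2\notin(-\infty,0]$. Its Gaussian integrability must be checked via $\operatorname{Re}$ of the quadratic form. Failing that, I would defer to \cite{braverman2011grothendieckconstantstrictlysmaller}, where this strip analyticity is established, since the remaining arguments only use analyticity near $0$.

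Granting analyticity on a neighborhood of $0$, the local inverse is the holomorphic inverse function theorem. If $H_{f,g}'(0)\ne0$, there are neighborhoods $U,V$ of $0$ such that $H_{f,g}:U\to V$ is biholomorphic with $H_{f,g}(0)=0$. Shrinking $U$ to a disc makes it symmetric under $z\mapsto -z$, and then $V=H_{f,g}(U)$ is symmetric too, by oddness. For $w\in V$, set $\phi(w)=-H_{f,g}^{-1}(-w)$. Then $\phi(w)\in U$ and $H_{f,g}(\phi(w))=-H_{f,g}(H_{f,g}^{-1}(-w))=w$. Injectivity on $U$ then gives $\phi=H_{f,g}^{-1}$, so the inverse is odd. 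Uniqueness holds because any other inverse branch through $0$ agrees with this one near $0$, again by injectivity, and therefore agrees on $V$ by the identity theorem.
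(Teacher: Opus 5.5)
You have a genuine gap: the holomorphic extension of $H_{f,g}$ to the full strip $S$, which is the substantive part of the lemma, is never proved. The Hermite--Mehler series only gives holomorphy on the unit disc, as you acknowledge. The one-dimensional complex Mehler-kernel plan is left unverified, and the fallback is a citation.

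The disc is genuinely insufficient for how the lemma is used. Later the paper evaluates $H$ at $\sin w$ for $|w|\le 1.1$; for instance $\sin(1.1i)=i\sinh(1.1)$ lies in $S$ but has modulus about $1.34$. Your stated reason for dropping the explicit kernel is also mistaken. The kernel degenerates only at $z=\pm1$, which lie on the boundary of $S$, not in its interior.

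The missing step is to complexify the very kernel formula you wrote down, which is what the paper does in its strip-bound lemma. Start from the identity
\[
\frac{|x|^2+|y|^2-2z\langle x,y\rangle}{2(1-z^2)}=\frac{|x+y|^2}{4(1+z)}+\frac{|x-y|^2}{4(1-z)}.
\]
For $z=a+ib$ with $|a|<1$ one has $\operatorname{Re}\frac{1}{1\pm z}=\frac{1\pm a}{|1\pm z|^2}>0$. So the real part of the exponent is a positive definite quadratic form in $(x,y)$, with constants bounded below uniformly on compact subsets of $S$. The prefactor $(1-z)^{-k/2}(1+z)^{-k/2}$ (principal branches) is holomorphic on $S$, since $\operatorname{Re}(1\pm z)>0$ there.

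Since $|f|=|g|=1$, the integrand is then dominated, locally uniformly in $z$, by an integrable Gaussian. Hence the integral is holomorphic on $S$ by dominated convergence plus Morera. It agrees with $H_{f,g}$ on $(-1,1)$, and therefore with your power series on the disc by the identity theorem.

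This is exactly what your one-dimensional Mehler kernel becomes once the Gaussian weights are absorbed. It also shows that integrability, not the branch condition $1-z^2\notin(-\infty,0]$, is what cuts out $S$. The branch condition holds on a strictly larger region, e.g.\ at $z=2+i$, where the integral diverges.

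Your oddness argument and your construction of the odd local inverse via $\phi(w)=-H_{f,g}^{-1}(-w)$ are fine.
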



\begin{lemma}[Local Inverse Expansion]\label[lemma]{lemma:local-inverse-expansion}

Let $f,g:\mathbb{R}^k\to\{-1,1\}$ be odd and measurable functions with $H'_{f,g}(0) \neq 0$. Let $H^{-1}_{f,g}$ be the unique local inverse branch of $H_{f,g}$ at the origin. Then,
\begin{itemize}
\item $H^{-1}_{f,g}$ is odd. 
    \item $H^{-1}_{f,g}$ has a Taylor-series expansion at the origin: $H^{-1}_{f,g}(\zeta) = \sum_{n\geq 1} a_n \zeta^n$. 
\end{itemize}
Further, if $\gamma > 0$ is such that $\sum_{n \geq 1} |a_n| \gamma^n \leq 1$, then the Grothendieck constant 
$$ K_G \leq \frac{\pi}{2\gamma}.$$ 
\end{lemma}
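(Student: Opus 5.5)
The plan is to obtain the two structural claims from \cref{lemma:correlation-profile-analytic}, and then get the bound on $K_G$ by a Krivine-type argument: build new vectors whose inner products equal $H^{-1}_{f,g}$ of rescaled inner products, then round them with $f,g$.

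\textbf{Oddness and the expansion.} Since $H_{f,g}$ is odd and holomorphic near $0$ with $H'_{f,g}(0)\neq 0$, the inverse function theorem gives a holomorphic local inverse $H^{-1}_{f,g}$ on a neighbourhood $V$ of $0$. We may shrink $V$ to be symmetric, i.e.\ $-V=V$. The map $\zeta\mapsto -H^{-1}_{f,g}(-\zeta)$ is then also a local inverse of $H_{f,g}$ and vanishes at $0$. By uniqueness of the branch it equals $H^{-1}_{f,g}$, so the inverse is odd. Being holomorphic, it has a convergent Taylor series $\sum_{n\ge1}a_n\zeta^n$ at $0$; there is no constant term and no even terms, by oddness.

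\textbf{Krivine step.} Fix unit vectors $u_i,v_j$ and define the formal series
\[
\phi(t):=\sum_{n\ge1}a_n\gamma^n t^n=H^{-1}_{f,g}(\gamma t), \qquad |t|\le 1.
\]
It converges absolutely since $\sum|a_n|\gamma^n\le1$. I would use the standard tensor-power construction in $\bigoplus_n (\R^{m+n})^{\otimes n}$:
\[
U_i=\bigoplus_n \sqrt{|a_n|\gamma^n}\,u_i^{\otimes n}, \qquad V_j=\bigoplus_n \operatorname{sgn}(a_n)\sqrt{|a_n|\gamma^n}\,v_j^{\otimes n}.
\]
These satisfy $\langle U_i,V_j\rangle=\phi(\langle u_i,v_j\rangle)$ and $\|U_i\|^2,\|V_j\|^2=\sum|a_n|\gamma^n\le1$.

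Norms strictly below $1$ are a nuisance, so I would pad each vector with an extra orthogonal component: $U_i$ and $V_j$ each get their own private coordinate, orthogonal to everything else, which leaves inner products unchanged and makes them unit vectors. Everything lives in a separable Hilbert space, and finitely many vectors span a finite-dimensional space $\R^N$.

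\textbf{Rounding.} Take a random $N\times k$ Gaussian matrix $G$, i.e.\ $k$ independent Gaussian directions, and set $\varepsilon_i=f(G^\top U_i)$, $\delta_j=g(G^\top V_j)$. For unit vectors with inner product $s$, the pair $(G^\top U_i,G^\top V_j)$ is distributed as $(G_1,\,sG_1+\sqrt{1-s^2}G_2)$. Definition~\ref{def:gaussian-correlation} then gives
\[
\mathbb{E}[\varepsilon_i\delta_j]=\tfrac{2}{\pi}H_{f,g}(s)=\tfrac{2}{\pi}H_{f,g}\big(H^{-1}_{f,g}(\gamma\langle u_i,v_j\rangle)\big)=\tfrac{2\gamma}{\pi}\langle u_i,v_j\rangle.
\]
Summing against $a_{ij}$, some choice of signs achieves at least $\tfrac{2\gamma}{\pi}$ times the SDP value, hence $K_G\le \pi/(2\gamma)$.

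\textbf{Main obstacle.} The delicate step is justifying the identity $H_{f,g}(H^{-1}_{f,g}(\gamma t))=\gamma t$ for all real $t\in[-1,1]$, not just near $0$. The series converges on the closed disc of radius $\gamma$, but the local inverse identity is only known on $V$.

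I would argue by analytic continuation. The function $\psi(t)=H_{f,g}(\phi(t))$ is well defined for real $|t|\le1$, because $|\phi(t)|\le1$, and indeed for complex $|t|<1$, because $|\phi(t)|<1$ lies in the strip $S$ of \cref{lemma:correlation-profile-analytic}. So $\psi$ is holomorphic on the open unit disc. It equals $\gamma t$ near $0$, hence on the whole disc by the identity theorem.

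The endpoints $t=\pm1$ need separate care. Abel's theorem gives continuity of $\phi$ up to $\pm1$. If $|\phi(\pm1)|=1$, we are at the edge of the strip, where continuity of $H_{f,g}$ at $\pm1$ follows from the definition by dominated convergence. Alternatively, one can sidestep the endpoints entirely by replacing $\gamma$ with $\gamma(1-\delta)$ and letting $\delta\to0$.
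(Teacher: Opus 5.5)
Your proposal is correct and follows essentially the same route as the paper's appendix proof (Theorem~\ref{thm:appendix-pure-krivine-rounding}). The shared steps are the tensor-power preprocessing padded to unit vectors, Gaussian projection to $\mathbb{R}^k$, and the limit $\gamma'\uparrow\gamma$; that limit is exactly your $\gamma(1-\delta)$ sidestep and is the cleaner way to handle the endpoints. Your identity-theorem argument, which uses $|\phi(t)|<1$ to keep $\phi$ inside the strip $S$, usefully fills in a step the paper leaves implicit. The paper uses $H_{f,g}\bigl(H^{-1}_{f,g}(\gamma'\langle x,y\rangle)\bigr)=\gamma'\langle x,y\rangle$ without comment and instead assumes $H^{-1}_{f,g}$ is holomorphic on a neighborhood of $\gamma\overline{\mathbb{D}}$.
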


For completeness, we give a proof of this  in~\cref{app:krivine-preprocessing}.

With the above notation, we define the \textbf{inverse majorant series} for the rounding scheme $(f,g)$ as 
\[
M_{f,g}(s):=\sum_{n\ge 1}|a_n|s^n.
\]

We are now ready to define Krivine schemes. 

\begin{definition}[$\gamma$-admissible Krivine Rounding Scheme]
\label[definition]{def:gamma-admissible}
    For $f,g:\mathbb{R}^k\to\{-1,1\}$ odd and measurable functions with $H'_{f,g}(0) \neq 0$, and $\gamma > 0$, we say $(f,g)$ is a $\gamma$-admissible Krivine rounding scheme if the conditions of \cref{lemma:local-inverse-expansion} hold. 
\end{definition}
We note that Krivine~\cite{Kri77} used a $\gamma$-admissible Krivine rounding scheme to show that \(K_G \le \frac{\pi}{2\gamma}.\)


%% file: paper/coefficient_computation_formulas.tex
Next, we describe the preliminary analytic backbone of the computation: the reduction from planar graph-threshold schemes to one-dimensional integrals.

\section{Preliminaries}
\label{sec:coefficient-computation-formulas}
In this section, we develop some basic calculations that will be used in the remainder of the paper. 
Our proof requires computing the Taylor series expansions of the local inverse of the Gaussian correlation function $H_{f,g}$. In this section, we record the analytic identities used to compute the Taylor coefficients of \(H_{f,g}\) and \(H_{f,g}^{-1}\).  

Consider the schemes
\[
    f_P(x,y)=\sgn(y-P(x)),
    \qquad
    g_Q(x,y)=\sgn(y-Q(x)),
\]
where \(P,Q:\R\to\R\) are odd measurable functions. The special schemes used elsewhere in the paper are obtained by substituting the corresponding choices
of \(P\) and \(Q\). 

We use standard facts about Hermite polynomials, including orthogonality, the generating function, and covariance identities. See, for
example, \cite{Andrews_Askey_Roy_1999}.  

Let $\mu$ be the standard Gaussian measure on $\R$, so that
\[
    d\mu(x):=\frac{1}{\sqrt{2\pi}}e^{-x^2/2}\,dx.
\]
Let \(\HermBasis_n\) denote the \(n\)-th orthonormal probabilist's Hermite polynomial. Then \(\{\HermBasis_n\}_{n\ge0}\) is an orthonormal basis for \(L^2(\mu)\).

\begin{lemma}[Hermite coefficient formula for \(H_{f,g}\)]
\label[lemma]{lemma:hermite-mehler-diagonalization}
Fix \(f,g:\R^2\to\{-1,1\}\), and define
\[
    \mcal A_{a,b}
    :=
    \int_{\R^2}
        f(x,y)\HermBasis_a(x)\HermBasis_b(y)
    \,d\mu(x)d\mu(y),
    \qquad
    \mcal B_{a,b}
    :=
    \int_{\R^2}
        g(x,y)\HermBasis_a(x)\HermBasis_b(y)
    \,d\mu(x)d\mu(y).
\]
Then
\[
    H_{f,g}(z)
    =
    \frac{\pi}{2}
    \sum_{a,b\ge0}
        \mcal A_{a,b}\mcal B_{a,b} z^{a+b}.
\]
Equivalently,
\[
    [z^n]H_{f,g}(z)
    =
    \frac{\pi}{2}
    \sum_{a=0}^{n}
        \mcal A_{a,n-a}\mcal B_{a,n-a}.
\]
\end{lemma}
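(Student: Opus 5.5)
Expand both $f$ and $g$ in the orthonormal Hermite basis of $L^2(\mu\otimes\mu)$, then diagonalize the Gaussian correlation with Mehler's formula (the Ornstein--Uhlenbeck semigroup). Since $f,g$ take values in $\{-1,1\}$, they lie in $L^2(\mu\otimes\mu)$. The products $\HermBasis_a(x)\HermBasis_b(y)$, $a,b\ge0$, form an orthonormal basis of this space, so
\[
    f=\sum_{a,b\ge0}\mcal A_{a,b}\,\HermBasis_a\otimes\HermBasis_b,\qquad g=\sum_{a,b\ge0}\mcal B_{a,b}\,\HermBasis_a\otimes\HermBasis_b,
\]
with convergence in $L^2$ and Parseval giving $\sum_{a,b}\mcal A_{a,b}^2=\sum_{a,b}\mcal B_{a,b}^2=1$.

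Write $G_1=(X,Y)$ and let $G_2$ be an independent copy. For $t\in(-1,1)$ the pair $(G_1,\,tG_1+\sqrt{1-t^2}G_2)$ is a pair of $t$-correlated standard Gaussian vectors in $\R^2$, with independent coordinates. Let $T_t$ be the Mehler operator $T_t h(u)=\mathbb E[h(tu+\sqrt{1-t^2}G_2)]$. Then
\[
    \mathbb E[f(G_1)g(tG_1+\sqrt{1-t^2}G_2)]=\langle f,T_t g\rangle_{L^2(\mu\otimes\mu)}.
\]
The key input is the standard fact that $T_t$ acts diagonally on Hermite polynomials: in one dimension $\mathbb E[\HermBasis_a(U)\HermBasis_c(V)]=t^a\delta_{ac}$ for $t$-correlated standard Gaussians $U,V$. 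This follows from the generating function $e^{sx-s^2/2}=\sum_n \HermBasis_n(x)s^n/\sqrt{n!}$ by computing $\mathbb E[e^{sU-s^2/2}e^{rV-r^2/2}]=e^{srt}$ and comparing coefficients. Tensorizing over the two independent coordinates gives
\[
    T_t(\HermBasis_a\otimes\HermBasis_b)=t^{a+b}\,\HermBasis_a\otimes\HermBasis_b .
\]
Since $T_t$ is a contraction on $L^2$ and the inner product is continuous, we may substitute the $L^2$ expansions term by term. This yields
\[
    H_{f,g}(t)=\frac{\pi}{2}\sum_{a,b}\mcal A_{a,b}\mcal B_{a,b}\,t^{a+b}\qquad\text{for real } t\in(-1,1).
\]

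It remains to pass to complex $z$ and read off coefficients. By Cauchy--Schwarz,
\[
    \sum_{a,b}|\mcal A_{a,b}\mcal B_{a,b}|\le 1,
\]
so the double series converges absolutely and uniformly on the closed unit disk. It can therefore be regrouped by total degree $n=a+b$, giving a power series
\[
    \sum_n\Bigl(\sum_{a=0}^n\mcal A_{a,n-a}\mcal B_{a,n-a}\Bigr)z^n ,
\]
holomorphic on $|z|<1$. On $(-1,1)$ this power series agrees with $H_{f,g}$. By the identity theorem it agrees with the holomorphic extension from \cref{lemma:correlation-profile-analytic} on the connected set where both are defined, which is the unit disk. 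So the displayed identity holds as an identity of holomorphic functions there, and the coefficient formula $[z^n]H_{f,g}(z)=\frac{\pi}{2}\sum_{a=0}^n\mcal A_{a,n-a}\mcal B_{a,n-a}$ follows.

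The main obstacle is justifying the interchange of expectation and the infinite Hermite sums, since $f,g$ are only bounded measurable and not smooth. The contraction and continuity argument in $L^2$ handles this cleanly, and absolute convergence of the coefficient series via Cauchy--Schwarz handles the regrouping by degree. The remaining point is the identity-theorem extension to complex $z$.
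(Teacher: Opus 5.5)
Your proposal is correct and follows essentially the same route as the paper. Both arguments expand $f,g$ in the tensor Hermite basis, use the diagonal covariance identity $\mathbb E[\HermBasis_a(U)\HermBasis_c(V)]=\delta_{ac}t^a$ tensorized over the two coordinates, pass to general $f,g$ via $L^2$, bound $\sum|\mathcal A_{a,b}\mathcal B_{a,b}|\le 1$ by Cauchy--Schwarz, and continue analytically to $|z|<1$. Your contraction-operator argument and explicit identity-theorem step simply make precise what the paper summarizes as ``approximation in $L^2$'' and ``by analyticity.''
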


\begin{proof}
We first prove the identity for real \(|z|<1\).  Let
\[
    (X_1,Y_1),\ (X_2,Y_2)
\]
be two independent pairs of jointly Gaussian random variables, where each
\(X_i\) and \(Y_i\) is distributed according to \(d\mu\), and where \(X_i\) and
\(Y_i\) have correlation parameter \(z\).  In this notation,
\[
    H_{f,g}(z)
    =
    \frac{\pi}{2}\,
    \E\bigl[f(X_1,X_2)g(Y_1,Y_2)\bigr].
\]

We use the standard Hermite covariance identity
\[
    \E\bigl[\HermBasis_m(X)\HermBasis_n(Y)\bigr]
    =
    \delta_{mn}z^n,
\]
whenever \(X,Y\) are \(d\mu\)-Gaussian variables with correlation parameter
\(z\).  Applying this identity independently in the two coordinates gives
\[
    \E\bigl[
        \HermBasis_a(X_1)\HermBasis_b(X_2)
        \HermBasis_c(Y_1)\HermBasis_d(Y_2)
    \bigr]
    =
    \delta_{ac}\delta_{bd}z^{a+b}.
\]

Now expand \(f\) and \(g\) in the tensor-product Hermite basis:
\[
    f(x,y)
    =
    \sum_{a,b\ge0}
        \mcal A_{a,b}\HermBasis_a(x)\HermBasis_b(y),
    \qquad
    g(x,y)
    =
    \sum_{c,d\ge0}
        \mcal B_{c,d}\HermBasis_c(x)\HermBasis_d(y).
\]
Substituting these expansions into the expectation and using the preceding
covariance identity, all off-diagonal terms vanish.  Hence, for finite Hermite expansions,
\[
\begin{aligned}
    H_{f,g}(z)
    &=
    \frac{\pi}{2}
    \sum_{a,b,c,d\ge0}
        \mcal A_{a,b}\mcal B_{c,d}
        \E\bigl[
            \HermBasis_a(X_1)\HermBasis_b(X_2)
            \HermBasis_c(Y_1)\HermBasis_d(Y_2)
        \bigr]  \\
    &=
    \frac{\pi}{2}
    \sum_{a,b\ge0}
        \mcal A_{a,b}\mcal B_{a,b}z^{a+b}.
\end{aligned}
\]
For general \(f,g\), this follows by approximation in \(L^2\).

The series is absolutely convergent for \(|z|<1\), since
\[
    \sum_{a,b\ge0}
        |\mcal A_{a,b}\mcal B_{a,b}|\,|z|^{a+b}
    \le
    \left(\sum_{a,b\ge0}\mcal A_{a,b}^2\right)^{1/2}
    \left(\sum_{a,b\ge0}\mcal B_{a,b}^2\right)^{1/2}
    =
    \|f\|_2\|g\|_2
    =
    1.
\]
Thus the identity extends to complex \(|z|<1\) by analyticity.  Finally,
collecting the terms with total degree \(a+b=n\) gives
\[
    [z^n]H_{f,g}(z)
    =
    \frac{\pi}{2}
    \sum_{a=0}^{n}
        \mcal A_{a,n-a}\mcal B_{a,n-a}.
\]
\end{proof}


\begin{lemma}
\label[lemma]{lemma:threshold-coefficient-reduction}
For the graph-threshold pair \(f_P,g_Q\), define
\[
    \ThresholdCoeff_b(u)
    :=
    \int_{\R}\sgn(y-u)\HermBasis_b(y)\,d\mu(y).
\]
Then the Hermite coefficients from
\cref{lemma:hermite-mehler-diagonalization} are
\[
    \mcal A_{a,b}
    =
    \int_{\R}
        \HermBasis_a(x)\ThresholdCoeff_b(P(x))
    \,d\mu(x),
    \qquad
    \mcal B_{a,b}
    =
    \int_{\R}
        \HermBasis_a(x)\ThresholdCoeff_b(Q(x))
    \,d\mu(x).
\]
Moreover,
\[
    \ThresholdCoeff_0(u)=-\operatorname{erf}\left(\frac{u}{\sqrt{2}}\right),
    \qquad
    \ThresholdCoeff_b(u)
    =
    \sqrt{\frac{2}{\pi}}
    e^{-u^2/2}
    \frac{\HermBasis_{b-1}(u)}{\sqrt{b}}
    \quad (b\ge1).
\]
\end{lemma}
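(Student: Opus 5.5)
The first claim is Fubini. By definition, $\mathcal A_{a,b}=\int\!\int f_P(x,y)\HermBasis_a(x)\HermBasis_b(y)\,d\mu(y)\,d\mu(x)$. For each fixed $x$, the inner integral is $\int \sgn(y-P(x))\HermBasis_b(y)\,d\mu(y)=\ThresholdCoeff_b(P(x))$. The integrand is bounded by $|\HermBasis_a(x)\HermBasis_b(y)|$, which lies in $L^1(\mu\otimes\mu)$ because Hermite polynomials have all Gaussian moments, so Fubini applies. The same argument with $Q$ in place of $P$ gives $\mathcal B_{a,b}$.

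For the closed forms, first take $b=0$, where $\HermBasis_0=1$:
\[
\ThresholdCoeff_0(u)=\mu(y>u)-\mu(y<u)=1-2\Phi(u),
\]
with $\Phi$ the standard normal CDF. Since $2\Phi(u)-1=\operatorname{erf}(u/\sqrt2)$, this equals $-\operatorname{erf}(u/\sqrt2)$.

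For $b\ge1$, orthogonality gives $\int \HermBasis_b\,d\mu=0$, so
\[
\ThresholdCoeff_b(u)=2\int_u^\infty \HermBasis_b\,d\mu .
\]
Let $\varphi(y)=e^{-y^2/2}/\sqrt{2\pi}$. The key step is the Rodrigues-type identity
\[
\frac{d}{dy}\bigl(\He_{b-1}(y)\varphi(y)\bigr)=-\He_b(y)\varphi(y)
\]
for the monic probabilist's Hermite polynomials $\He_n$. This follows from $\He_n(y)\varphi(y)=(-1)^n\varphi^{(n)}(y)$, or equivalently from the recurrence $\He_b=y\He_{b-1}-\He_{b-1}'$. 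Integrating from $u$ to $\infty$ gives $\int_u^\infty \He_b\varphi\,dy=\He_{b-1}(u)\varphi(u)$, since the boundary term at $\infty$ vanishes by Gaussian decay.

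Now normalize with $\HermBasis_n=\He_n/\sqrt{n!}$:
\[
\int_u^\infty \HermBasis_b\,d\mu=\frac{\sqrt{(b-1)!}}{\sqrt{b!}}\,\HermBasis_{b-1}(u)\varphi(u)=\frac{\HermBasis_{b-1}(u)\varphi(u)}{\sqrt b}.
\]
Multiplying by $2$ and using $2\varphi(u)=\sqrt{2/\pi}\,e^{-u^2/2}$ gives the stated formula.

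The only step that needs real care is sign and normalization bookkeeping. One must check the direction of $\sgn(y-u)$, which is $+1$ for $y>u$, and the normalization factor $\sqrt{(b-1)!/b!}=1/\sqrt b$. As a sanity check, take $b=1$: the formula gives $\ThresholdCoeff_1(u)=2\varphi(u)$, and directly $2\int_u^\infty y\varphi(y)\,dy=2\varphi(u)$, so the two agree.
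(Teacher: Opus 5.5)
Your proposal is correct and follows essentially the same route as the paper: Fubini for the coefficient reduction, then orthogonality of \(\HermBasis_b\) to constants to turn \(\ThresholdCoeff_b(u)\) into a one-sided Gaussian integral, evaluated with the Rodrigues-type derivative identity. The differences are cosmetic: you integrate over \([u,\infty)\) rather than \((-\infty,u]\), apply the identity to the monic polynomials and only then normalize by \(1/\sqrt{b}\), and you explicitly check the integrability needed for Fubini, which the paper leaves implicit.
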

\begin{proof}
We prove the formula for \(\mcal A_{a,b}\); the proof for
\(\mcal B_{a,b}\) is identical with \(P\) replaced by \(Q\). By Fubini's
theorem,
\[
\begin{aligned}
    \mcal A_{a,b}
    &=
    \int_{\R}
        \HermBasis_a(x)
        \left(
            \int_{\R}
                \sgn(y-P(x))\HermBasis_b(y)\,d\mu(y)
        \right)
    d\mu(x) \\
    &=
    \int_{\R}
        \HermBasis_a(x)\ThresholdCoeff_b(P(x))
    \,d\mu(x).
\end{aligned}
\]
It remains to compute \(\ThresholdCoeff_b\). For \(b=0\), since
\(\HermBasis_0=1\),
\[
    \ThresholdCoeff_0(u)
    =
    \mu((u,\infty))-\mu((-\infty,u))
    =
    1-2\mu((-\infty,u))
    =
    -\operatorname{erf}\left(\frac{u}{\sqrt{2}}\right).
\]
Now suppose \(b\ge1\). Since \(\HermBasis_b\) is orthogonal to constants,
\[
\begin{aligned}
    \ThresholdCoeff_b(u)
    &=
    \int_u^\infty \HermBasis_b(y)\,d\mu(y)
    -
    \int_{-\infty}^u \HermBasis_b(y)\,d\mu(y) \\
    &=
    -2\int_{-\infty}^u \HermBasis_b(y)\,d\mu(y).
\end{aligned}
\]
Using
\[
    \frac{d}{dy}
    \left(
        e^{-y^2/2}\HermBasis_{b-1}(y)
    \right)
    =
    -\sqrt{b}\,
    e^{-y^2/2}\HermBasis_b(y),
\]
we get
\[
\begin{aligned}
    \int_{-\infty}^u \HermBasis_b(y)\,d\mu(y)
    &=
    \frac{1}{\sqrt{2\pi}}
    \int_{-\infty}^u
        e^{-y^2/2}\HermBasis_b(y)\,dy \\
    &=
    -\frac{1}{\sqrt{2\pi}}
    e^{-u^2/2}
    \frac{\HermBasis_{b-1}(u)}{\sqrt{b}}.
\end{aligned}
\]
Therefore
\[
    \ThresholdCoeff_b(u)
    =
    \sqrt{\frac{2}{\pi}}
    e^{-u^2/2}
    \frac{\HermBasis_{b-1}(u)}{\sqrt{b}},
\]
as claimed.
\end{proof}

\begin{corollary}
\label[corollary]{cor:coefficient-formula-combined}
Combining the preceding two lemmas gives the coefficient formula
\[
    [z^n]H_{f_P,g_Q}(z)
    =
    \frac{\pi}{2}
    \sum_{a=0}^{n}
        \left(
        \int_{\R}
            \HermBasis_a(x)\ThresholdCoeff_{n-a}(P(x))
        \,d\mu(x)
        \right)
        \left(
        \int_{\R}
            \HermBasis_a(x)\ThresholdCoeff_{n-a}(Q(x))
        \,d\mu(x)
        \right).
\]
\end{corollary}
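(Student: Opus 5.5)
This is a direct substitution argument. We start from the coefficient identity of \cref{lemma:hermite-mehler-diagonalization},
\[
    [z^n]H_{f,g}(z)=\frac{\pi}{2}\sum_{a=0}^{n}\mcal A_{a,n-a}\mcal B_{a,n-a},
\]
which holds for any pair \(f,g:\R^2\to\{-1,1\}\). In particular it applies to \(f=f_P\) and \(g=g_Q\).

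The next step is to identify each factor using \cref{lemma:threshold-coefficient-reduction}. For every index pair \((a,b)=(a,n-a)\) it gives
\[
    \mcal A_{a,n-a}=\int_{\R}\HermBasis_a(x)\ThresholdCoeff_{n-a}(P(x))\,d\mu(x),
    \qquad
    \mcal B_{a,n-a}=\int_{\R}\HermBasis_a(x)\ThresholdCoeff_{n-a}(Q(x))\,d\mu(x).
\]
Substituting these into the finite sum over \(a\) yields the displayed formula. No exchange of limits is needed, since the sum over \(a\) is finite.

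The only point requiring any care is that each integral is well-defined. First, \(\ThresholdCoeff_b\) is bounded for every \(b\): \(\ThresholdCoeff_0\) lies in \([-1,1]\), and for \(b\ge1\) the explicit form \(e^{-u^2/2}\HermBasis_{b-1}(u)/\sqrt b\) is a polynomial times a Gaussian, hence bounded on \(\R\). Second, \(P\) and \(Q\) are measurable, so \(x\mapsto \ThresholdCoeff_b(P(x))\) is measurable and bounded. Third, \(\HermBasis_a\) lies in \(L^1(\mu)\). Together these show each integrand is absolutely integrable. This integrability was already implicitly used in the Fubini step of \cref{lemma:threshold-coefficient-reduction}, so nothing new has to be established.
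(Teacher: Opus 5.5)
Your proposal is correct and follows the paper's route: the paper obtains this corollary by specializing the Hermite coefficient formula to $f=f_P$, $g=g_Q$ and substituting the threshold-integral expressions for $\mathcal A_{a,n-a}$ and $\mathcal B_{a,n-a}$. Your integrability check (each threshold coefficient function is bounded, its composition with $P$ or $Q$ is measurable, and every Hermite polynomial is integrable against $\mu$) is sound, and it makes explicit a point the paper leaves implicit.
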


\begin{lemma}[Inverse coefficients]
\label[lemma]{lemma:lagrange-inverse-coefficients}
Suppose \(H\) is odd and write $ H(z)=z\OddFactor(z^2)$.
Then \(H^{-1}\) has only odd coefficients.  Writing
\[
    H^{-1}(z)=\sum_{m\ge1}a_m z^m,
\]
we have
\[
    a_{2k+1}
    =
    \frac{1}{2k+1}
    [t^k]\OddFactor(t)^{-(2k+1)},
    \qquad
    a_{2k}=0.
\]
\end{lemma}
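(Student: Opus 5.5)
The plan is to apply the Lagrange inversion theorem in its coefficient form and then exploit oddness to reduce to a power series in $t=z^2$. Implicit in the statement is $\OddFactor(0)=H'(0)\neq 0$, which is exactly the hypothesis under which \cref{lemma:correlation-profile-analytic} provides the local inverse branch. Note also that $H$ odd forces $H(z)=z\OddFactor(z^2)$ with $\OddFactor$ holomorphic near $0$: write $H(z)=\sum_{j\ge0}h_{2j+1}z^{2j+1}$ and set $\OddFactor(t)=\sum_j h_{2j+1}t^j$.

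\textbf{Step 1: Lagrange inversion.} We use the standard statement: if $H(z)=z/\phi(z)$ with $\phi$ holomorphic near $0$ and $\phi(0)\neq0$, then for $n\ge1$,
\[
    [w^n]H^{-1}(w)=\frac1n[z^{n-1}]\phi(z)^n.
\]
If we prefer to prove this rather than cite it, write $a_n=\frac{1}{2\pi i}\oint H^{-1}(w)w^{-n-1}\,dw$ on a small circle. Integrate by parts to get $\frac{1}{n}\cdot\frac{1}{2\pi i}\oint (H^{-1})'(w)w^{-n}\,dw$. Then substitute $w=H(z)$, which is legitimate because $H$ is biholomorphic on $U$. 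This yields $\frac1n\cdot\frac{1}{2\pi i}\oint H(z)^{-n}\,dz=\frac1n[z^{n-1}]\bigl(z/H(z)\bigr)^n$.

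\textbf{Step 2: specialize.} Here $\phi(z)=z/H(z)=\OddFactor(z^2)^{-1}$, which is holomorphic near $0$ since $\OddFactor(0)\neq0$. Hence
\[
    a_n=\frac1n[z^{n-1}]\OddFactor(z^2)^{-n}.
\]
Since $\OddFactor(z^2)^{-n}$ is a power series in $z^2$, only even powers of $z$ occur in it.

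\textbf{Step 3: read off the parity.}
\begin{itemize}
    \item For $n=2k$, the exponent $n-1=2k-1$ is odd, so $a_{2k}=0$. This also follows directly from oddness of $H^{-1}$ (\cref{lemma:local-inverse-expansion}).
    \item For $n=2k+1$, we have $[z^{2k}]\OddFactor(z^2)^{-(2k+1)}=[t^k]\OddFactor(t)^{-(2k+1)}$, which gives the claimed formula for $a_{2k+1}$.
\end{itemize}

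The only point needing care is justifying the change of variables in the contour integral, i.e. that $H$ maps a small circle bijectively onto a curve winding once around $0$. This follows from local biholomorphy, so no step is a genuine obstacle.
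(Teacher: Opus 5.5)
Your proposal is correct and takes the same route as the paper, which simply cites the standard Lagrange inversion formula for the odd map $H$ and gets $a_{2k}=0$ from oddness of the local inverse branch. You go a little further: you sketch the contour-integral proof of Lagrange inversion, state the implicit hypothesis $H'(0)\neq 0$ explicitly, and obtain $a_{2k}=0$ directly from the evenness of $(z/H(z))^{n}$, all of which is sound.
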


\begin{proof}
This is the standard Lagrange inversion formula applied to the odd map
\(H(z)=z\OddFactor(z^2)\).  Oddness of \(H\) implies oddness of the local
inverse branch, so the even inverse coefficients vanish.
\end{proof}

%% file: paper/bounding_the_remainder.tex
\section{First Improvement for Grothendieck Constant}
\label{first improvement for grothendieck constant}

We now define our explicit scheme for proving \cref{thm:pure-h3}.

\begin{definition}[The $\Hermite$ scheme]\label[lemma]{def:h3-scheme}
Let
\[
\Hermite(x)=\frac{(x^3-3x)}{\sqrt{6}},
\qquad
\tau=0.01825,
\qquad
\eta= \sqrt{6}\tau
 =0.0447031878\ldots .
\]
Define odd measurable functions $f,g:\R^2\to\{\pm1\}$ by
\[
    f(x,y)=\sgn(y-\eta \Hermite(x)),
    \qquad
    g(x,y)=\sgn(y+\eta \Hermite(x)).
\]
We call $(f,g)$ the $\Hermite$ scheme.
\end{definition}

We will prove the following theorem (complete version of \cref{thm:pure-h3}) in the following sections:

\begin{theorem}[Improved Grothendieck Constant]\label[theorem]{thm:no-mixing-beats-hyperplane}
Let
\[
\rhostarval:=\log(1+\sqrt2),
\qquad
\gzero:=\rhostarval+4.8\cdot 10^{-6}.
\]
Then the $\Hermite$ scheme is a $\gzero$-admissible Krivine scheme.  In particular,
\[
K_G
\le \frac{\pi}{2\gzero}
<
\frac{\pi}{2\log(1+\sqrt2)}
-
9.7\cdot 10^{-6}.
\]
\end{theorem}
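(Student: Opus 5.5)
By \cref{lemma:local-inverse-expansion}, it suffices to show $M_{f,g}(\gzero)=\sum_{n\ge1}|a_n|\gzero^n\le 1$, where $a_n$ are the Taylor coefficients of $H_{f,g}^{-1}$. The final numerical inequality $\frac{\pi}{2\gzero}<\frac{\pi}{2\log(1+\sqrt2)}-9.7\cdot10^{-6}$ is then an elementary first-order expansion: $\frac{\pi}{2}\cdot\frac{4.8\cdot10^{-6}}{\rhostarval^2}\approx 9.72\cdot10^{-6}$, which we verify rigorously in interval arithmetic. The strategy is to split $M_{f,g}(\gzero)$ into a \emph{head} $\sum_{n\le N}|a_n|\gzero^n$, computed rigorously, and a \emph{tail} $\sum_{n>N}|a_n|\gzero^n$, bounded analytically.

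\textbf{Head.} Here $P=\eta\Hermite$ and $Q=-\eta\Hermite$. By \cref{cor:coefficient-formula-combined} and \cref{lemma:threshold-coefficient-reduction}, each Taylor coefficient $[z^n]H_{f,g}$ is a finite sum of products of one-dimensional Gaussian integrals $\int \HermBasis_a(x)\ThresholdCoeff_b(\pm\eta\Hermite(x))\,d\mu(x)$. Since $\eta\Hermite$ is odd and $\ThresholdCoeff_b(-u)=(-1)^{b-1}\ThresholdCoeff_b(u)$, we get $\mcal B_{a,b}=\pm\mcal A_{a,b}$ with an explicit sign. This halves the work and makes $H_{f,g}$ odd, with $H_{f,g}'(0)=\frac{\pi}{2}(\mcal A_{0,1}^2-\mcal A_{1,0}^2)$, where the dominant term is close to $1$.

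We compute these integrals in \texttt{Arb} with certified error bounds, up to a degree $N$ of a few dozen. We then form $\OddFactor$ and apply \cref{lemma:lagrange-inverse-coefficients} to obtain certified enclosures of $a_1,a_3,\dots,a_{N}$. Finally we sum $|a_n|\gzero^n$ for $n\le N$.

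Heuristically, the gain comes from the perturbation $\eta\Hermite$. It shifts $H$ by an $O(\eta^2)$ amount in the low coefficients, so that the inverse coefficients $a_3,a_5,\dots$ shrink relative to Krivine's $\sinh$-type coefficients (alternating signs of $\sin$ for $\frac{\pi}{2}\cdot\frac{2}{\pi}\arcsin$). This should push the head below $1$ by a margin of order $10^{-6}$ or more at $\gzero$.

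\textbf{Tail, the main obstacle.} The plan is to use Cauchy estimates on $H^{-1}$. First, bound $H$ on a disk: the absolute-convergence estimate in the proof of \cref{lemma:hermite-mehler-diagonalization} gives $|H(z)|\le\frac{\pi}{2}$ for $|z|<1$. More usefully, we split $H=H_{\le N'}+R$ with $|R(z)|\le\frac{\pi}{2}\bigl(\sum_{a+b>N'}\mcal A_{a,b}^2\bigr)^{1/2}\bigl(\sum_{a+b>N'}\mcal B_{a,b}^2\bigr)^{1/2}r^{N'+1}$ on $|z|\le r$. The Hermite-mass tails are computed as $1-\sum_{a+b\le N'}\mcal A_{a,b}^2$ using $\|f\|_2=1$.

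Next, we show that $H$ is injective on a disk $|z|\le r$ and that its image contains the disk $|\zeta|\le\rho$ for some explicit $\rho>\gzero$. The tools are a Rouché or argument-principle check on the truncated polynomial, plus the remainder bound; for odd $H$, a lower bound on $|H'|$ together with the Koebe/Landau-type covering estimate $|H(z)-H'(0)z|\ll|H'(0)z|$ would also serve. Then $|H^{-1}|\le r$ on $|\zeta|=\rho$, and Cauchy gives $|a_n|\le r\rho^{-n}$. Hence the tail is at most $r\,(\gzero/\rho)^{N+1}/(1-\gzero/\rho)$.

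The difficulty is that $\rho$ must comfortably exceed $\gzero\approx0.8814$. For Krivine's $\sin$-type inverse, the nearest singularity is at $\frac{\pi}{2}\cdot$(branch point), so there is room, but the tail bound must beat a margin of roughly $10^{-6}$. This forces $N$ fairly large, and forces the certified head computation to be precise enough that the gap survives. Choosing $r$, $\rho$, $N$, $N'$ in balance, and verifying injectivity rigorously on the disk, is the step I expect to be hardest.
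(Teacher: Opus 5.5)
\paragraph{Verdict.} Your reduction to $M_{f,g}(\gamma_0)\le 1$, the head/tail split, the certified head via \cref{cor:coefficient-formula-combined} and \cref{lemma:lagrange-inverse-coefficients}, and the closing arithmetic all match the paper. There is a genuine gap in the tail.

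\paragraph{Why the tail argument fails.} You control $H$ only on disks $|z|\le r\le 1$, via the Hermite series, and ask for $H^{-1}(\rho\overline{\mathbb D})\subset r\mathbb D$ with $\rho$ comfortably above $\gamma_0$. That requirement is exactly Krivine's barrier. For the hyperplane scheme $H^{-1}=\sin$, and $\max_{|\zeta|=\rho}|\sin\zeta|=\sinh\rho$ reaches $1$ at $\rho=\rho_*$. So the obstruction is not a singularity of $H^{-1}$ ($\sin$ is entire); it is $|H^{-1}|$ reaching $1$.

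For the $\mathrm{He}_3$ scheme, $H(iy)/i$ is real and increasing on $[0,1]$. Since $H(ir)$ lies on the boundary of $H(r\overline{\mathbb D})$, your covering forces $\rho\le H(ir)/i\approx H(i)/i-(1-r)/\sqrt2$. A second-order expansion in $\eta$ (my own computation, not in the paper) gives $H(z)=\arcsin z-\eta^2\,z(1+z^2)/\sqrt{1-z^2}+O(\eta^4)$. This correction vanishes at $z=\pm i$, so $H(i)/i-\rho_*=O(\eta^4)$. With $\eta\approx 0.045$ that is only of order $10^{-5}$.

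Consequently both $\rho-\gamma_0$ and $1-r$ are at most of order $10^{-5}$. This has two costs.
\begin{itemize}
\item The Cauchy tail $r(\gamma_0/\rho)^{N+1}/(1-\gamma_0/\rho)$ drops below the head margin only for $N$ in the millions. That margin is of order $10^{-7}$, not $10^{-6}$.
\item The Hermite remainder factor $r^{N'+1}$ forces $N'$ of comparable size. At $r=1$ the Hermite mass tails of a sign function decay only like $N'^{-1/2}$.
\end{itemize}
The Koebe/Landau variant you mention lives in the same disk and hits the same wall.

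\paragraph{The missing idea.} You need to control $H$ outside the unit disk, so that $H^{-1}$ may take values of modulus larger than $1$. The paper does this in four steps.
\begin{enumerate}
\item \cref{lemma:strip-bound}: $H$ extends holomorphically to the strip $|\operatorname{Re}z|<1$, with $|H(z)|\le\frac{\pi}{2}|1-z^2|/(1-(\operatorname{Re}z)^2)$.
\item Change variables to $F(w)=H(\sin w)$. This is holomorphic on $|w|<1.1$, satisfies $|F(w)-w|<15.3$ there, and is a small perturbation of the identity.
\item A certified degree-$144$ head of $F(w)-w$ plus a Cauchy tail gives $\sup_{|w|=0.98}|F(w)-w|<0.005$. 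By \cref{lemma:rouche-inverse-disk}, $F^{-1}$ is then holomorphic on $0.975\mathbb D$ with $|F^{-1}|<0.98$.
\item Hence $H^{-1}=\sin\circ F^{-1}$ is bounded on $0.975\mathbb D$ by $\sinh(0.98)<1.145$. This exceeds $1$, which is harmless because $H$ is controlled on the strip.
\end{enumerate}
The Cauchy ratio then becomes $\gamma_0/0.975\approx0.904$ instead of $1-O(10^{-5})$, and $N=220$ suffices. Even with this favorable ratio, a head of ``a few dozen'' terms would be far too short.
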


\paragraph{Proof Overview:} Our proof has two main components. Recall that using \cref{lemma:local-inverse-expansion}, the main goal is to show that 
$$\sum_{n \geq 1} |a_n| \gamma_0^n \leq 1,$$
where $(a_n)_{n \geq 1}$ are the Taylor-series coefficients of $H_{f,g}^{-1}$. We split this into parts. 
\begin{enumerate}
    \item An argument that bounds the \textit{remainder term}, $\sum_{n > k_0} |a_n| \gamma_0^n$  (at most $2.5 \cdot 10^{-9}$) for $k_0 = 220$. This analytic portion relies on elementary complex analysis, a carefully chosen change of coordinates, and a numerical computation that uses \texttt{Arb} \cite{johansson2017arb}. This is done in \cref{sec:bounding-remainder}. This approach can also extend to schemes similar to the $\Hermite$ scheme. See \cref{sec:generalization}.
    \item A numerical computation that uses \texttt{Arb} to compute the \emph{head}, $\sum_{n \leq k_0} |a_n| \gamma_0^n$ with rigorous enclosures showing that this value is less than $1 - 2.5 \cdot 10^{-9}$. This is done in \cref{sec:numerical-computation-of-head}.  
\end{enumerate}

\paragraph{A note on \texttt{Arb}.}
All numerical computations used in the proof are rigorously evaluated using the
\texttt{Arb} interval arithmetic library \cite{johansson2017arb}.  \texttt{Arb} represents real and
complex numbers by midpoint--radius intervals, or ``balls'', and performs
arithmetic with outward rounding, so each output ball is guaranteed to contain
the exact mathematical value. Thus the computations below are not merely
high-precision floating point computations: every Taylor coefficient, inverse
coefficient, integral, and finite sum used in the certificate is enclosed in a
rigorous interval.

Concretely, interval enclosures are propagated through elementary functions,
power series, algebraic operations, and recurrence formulas for inverse-series
coefficients, including all rounding errors.  For the integrals, the quadrature routine produces rigorous
enclosures on subintervals, equivalently using local Taylor models with explicit remainder bounds.  
Consequently, whenever we assert an inequality certified by \texttt{Arb}, the
computed interval lies entirely on the required side of the bound.  

\subsection{Bounding the Remainder}\label{sec:bounding-remainder}

We now give our main analytic argument to bound the remainder term of the inverse majorant series of $\Hshort^{-1}$ for the $\Hermite$ scheme. A similar argument extends to other schemes as well, which we later show in \cref{sec:generalization}. The rest of this section is dedicated to the following lemma. 
\begin{lemma}[Majorant Remainder Bound]\label[lemma]{lemma:tail-bound}
Let $(f,g)$ be the $\Hermite$ scheme and let  
\[
H_{f,g}^{-1}(\zeta)=\sum_{n\ge1}a_n\zeta^n,
\qquad
\gzero=\log(1+\sqrt2)+4.8\cdot10^{-6}.
\]
Then
\[
\sum_{n>220}|a_n|\gzero^n<2.5\cdot10^{-9}.
\]
\end{lemma}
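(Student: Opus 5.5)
Since $H:=H_{f,g}$ is odd with $H'(0)\neq0$, \cref{lemma:correlation-profile-analytic} gives a local inverse $H^{-1}$ holomorphic on a neighbourhood of $0$. The plan is to bound the coefficients $a_n$ by Cauchy estimates on a circle $|\zeta|=R$ with $R$ strictly larger than $\gzero$. If $H^{-1}$ extends holomorphically to $|\zeta|<R'$ with $R'>R$, and $|H^{-1}|\le M$ on $|\zeta|=R$, then $|a_n|\le M R^{-n}$. Hence
\[
\sum_{n>220}|a_n|\gzero^n\le M\frac{(\gzero/R)^{221}}{1-\gzero/R},
\]
and since $(\gzero/R)^{221}$ decays very fast, any ratio of order $\gzero/R\approx 0.91$ already gives the required $2.5\cdot 10^{-9}$ even with a modest $M$. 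So everything reduces to proving that $H^{-1}$ exists and is bounded on a disc of radius $R$ noticeably larger than $\log(1+\sqrt2)\approx0.8814$.

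For the hyperplane scheme, $H_0(z)=\arcsin z$, so $H_0^{-1}=\sin$ is entire. Our $H$ is a small perturbation of it, of size $O(\eta^2)$, because the $\Hermite$ correction enters both $f$ and $g$. I would therefore work in the coordinate $z=\sin w$, i.e.\ study $F(w):=H(\sin w)$ on a region of the $w$-plane. The map $\sin$ sends the strip $|\operatorname{Re} w|<\pi/2$ into $S$ (where $H$ is holomorphic), and $F(w)=w+E(w)$ with $E$ small. This is the change of coordinates that removes the branch points of $\arcsin$ at $\pm1$. Using \cref{lemma:hermite-mehler-diagonalization}, I would bound $E(w)=\frac{\pi}{2}\sum_{a,b}(\mcal A_{a,b}\mcal B_{a,b}-c_{a,b})\sin^{a+b}w$ on a disc or rectangle $D=\{|w|\le \rho\}$, where $c_{a,b}$ are the hyperplane coefficients. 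The Hermite coefficients $\mcal A_{a,b},\mcal B_{a,b}$ are controlled via \cref{lemma:threshold-coefficient-reduction}: they are Gaussian integrals of $\ThresholdCoeff_b(\eta\Hermite(x))$, which can be expanded in $\eta$. The coefficients $[z^n]H$ decay, but only as a power of $n$ times $|z|^n$ with $|z|=|\sin w|$. To keep the sum convergent we therefore need $|\sin w|<1$ on $D$, plus some slack, and I would handle this by splitting into finitely many low degrees, computed in \texttt{Arb}, and a tail bounded via $\sum \mcal A_{a,b}^2\le1$ and Cauchy--Schwarz with $|\sin w|\le r<1$.

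Once $\sup_{\partial D}|E|<\operatorname{dist}(\zeta,\partial F\text{-image})$, Rouché's theorem gives that for each $|\zeta|\le R$ the equation $F(w)=\zeta$ has exactly one solution $w(\zeta)\in D$, where $D$ is a disc of radius about $R+\delta$. Thus $H^{-1}(\zeta)=\sin(w(\zeta))$ is holomorphic for $|\zeta|<R'$ and satisfies $|H^{-1}|\le \sup_D|\sin w|=:M$. This bound on $M$ is exactly what the Cauchy estimate in the first paragraph requires. Concretely, I would take $R\approx0.97$, chosen so that $D$ stays inside $\{|\sin w|<1\}$, which allows $|\operatorname{Im} w|$ up to $\operatorname{arsinh}1=\log(1+\sqrt2)$ near the imaginary axis. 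Then I would feed these numbers into the tail formula.

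The main obstacle is the geometry. The set $\{|\sin w|<1\}$ only reaches height $\log(1+\sqrt2)$ along the imaginary axis, so a round disc of radius greater than $\gzero$ does not fit inside it. I expect to replace $D$ by a carefully shaped region, for example an ellipse or a rectangle extending further along the real axis. The constraint is that its image under $F$ must contain the circle $|\zeta|=R$ and that $|\sin w|\le r<1$ must hold uniformly on it. Alternatively, one can note that for the true $H$ the singularities nearest the origin may sit slightly beyond $\pm i\log(1+\sqrt2)$ in the $\zeta$-plane, which is exactly why $\gzero$ beats the hyperplane value. In that case I would use a smaller $R$ with only a few percent of slack, compensated by the enormous exponent $221$. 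Verifying the Rouché inequality on the boundary contour would be done by rigorous \texttt{Arb} evaluation on a fine subdivision of the contour.
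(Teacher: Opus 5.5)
\textbf{There is a genuine gap: your method for controlling $F$ cannot reach a large enough radius.} Your overall architecture matches the paper's: pass to $F(w)=H(\sin w)$, compare $F$ with the identity via Rouch\'e, write $H^{-1}=\sin\circ F^{-1}$, and finish with a Cauchy estimate. But the ``geometric obstacle'' in your last paragraph is fatal, not technical.

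You bound $E(w)$ through the Hermite series in powers of $\sin w$, so you are confined to $\{|\sin w|<1\}$. Since $|\sin(u+iv)|^2=\sin^2u+\sinh^2v$, any solution of $F(w)=iR$ in that region has $|\operatorname{Im}w|<\log(1+\sqrt2)$, hence $|F(w)-w|\ge R-\log(1+\sqrt2)$. As $F-\mathrm{id}$ is only $O(\eta^2)$, i.e.\ a few times $10^{-3}$, no choice of shape for $D$ (disc, ellipse, rectangle) lets $R$ exceed $\log(1+\sqrt2)$ by more than that. Then $\gamma_0/R$ is at least about $0.99$, and the tail bound $M(\gamma_0/R)^{221}/(1-\gamma_0/R)$ is far larger than $1$, not below $10^{-9}$.

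Your fallback of ``a few percent of slack'' is both unattainable and insufficient. Even $\gamma_0/R=0.97$ gives a tail of order $10^{-2}$. One actually needs $\gamma_0/R$ at most about $0.904$, i.e.\ $R\approx0.975$. Your $R\approx0.97$ gives about $7\cdot10^{-9}$ with $M\approx1$, and your claimed ratio $0.91$ gives about $10^{-8}$. Also, $\log(1+\sqrt2)$ is not the location of a singularity of the inverse: $H_0^{-1}=\sin$ is entire, and $\rho_*$ comes from the majorant condition $\sinh\rho_*=1$.

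\textbf{The missing idea} is that $H$ is holomorphic on the whole strip $S=\{|\operatorname{Re}z|<1\}$, not just the unit disc. The paper gets this, with the explicit bound $|H(z)|\le\frac{\pi}{2}\frac{|1-z^2|}{1-(\operatorname{Re}z)^2}$, from the Gaussian-kernel integral representation rather than the Hermite series (Lemma~\ref{lemma:strip-bound}). Since $\operatorname{Re}\sin(u+iv)=\sin u\cosh v$ and $|\sin u|\cosh v<0.946$ on $|w|\le1.1$, $F$ is holomorphic on $|w|<1.1$ with $|F(w)-w|<15.3$. This holds even though $|\sin w|$ reaches $\sinh(1.1)>1$ there.

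The paper then proceeds as follows:
\begin{enumerate}
\item Expand $F(w)-w=\sum_n d_nw^n$ as a Taylor series in $w$ itself, not in powers of $\sin w$.
\item Compute the head through degree $144$ in \texttt{Arb} from finitely many Taylor coefficients of $H$ composed with $\sin$.
\item Bound the tail by Cauchy's estimate on radius $1.1$ with the crude constant $15.3$. This gives $\sup_{|w|=0.98}|F(w)-w|<0.005$.
\item Rouch\'e on $|w|<0.98$ yields $F^{-1}$ on $|\zeta|<0.975$ with $|F^{-1}|<0.98$, hence $|H^{-1}|<\sinh(0.98)<1.145$.
\item The Cauchy tail is then below $2.44\cdot10^{-9}$.
\end{enumerate}
Note that $\sinh(0.98)>1$: the argument genuinely uses points with $|\sin w|>1$, exactly the ones your approach excludes.
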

For brevity, we assume $(f,g)$ is fixed to be the $\Hermite$ scheme and we write $\Hshort^{-1}$, $\Mshort$ instead of $H^{-1}_{f,g}$, $M_{f,g}$. First, we prove a bound for $H$ in a strip. This is an analog of the strip estimate used in \cite[Theorem~4.2]{braverman2011grothendieckconstantstrictlysmaller}.

\begin{lemma}[Strip bound]\label[lemma]{lemma:strip-bound}
Let $f,g:\R^2\to\{-1,1\}$ be measurable and let $H=H_{f,g}$ be the
arcsine-normalized correlation function.
Then $H$ extends holomorphically to
\[
 S=\{z\in\C: |\operatorname{Re}z|<1\}.
\]
Moreover, if $z=a+ib\in S$, then
\[
 |H(z)|\le \frac{\pi}{2}\frac{|1-z^2|}{1-a^2}.
\]
\end{lemma}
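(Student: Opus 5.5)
The natural route is the explicit Gaussian-kernel representation used in \cite[Theorem~4.2]{braverman2011grothendieckconstantstrictlysmaller}. For real \(t\in(-1,1)\), the pair \((G_1, tG_1+\sqrt{1-t^2}G_2)\) has density on \(\R^2\times\R^2\)
\[
 p_t(x,y)=\frac{1}{(2\pi)^2(1-t^2)}\exp\!\left(-\frac{|x|^2+|y|^2-2t\ip{x}{y}}{2(1-t^2)}\right).
\]
This gives
\[
 H(t)=\frac{\pi}{2}\int_{\R^2}\int_{\R^2} f(x)g(y)\,p_t(x,y)\,dx\,dy .
\]
I would take this formula, with \(t\) replaced by \(z\in S\), as the definition of the extension. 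The first step is to check that the integral converges absolutely and locally uniformly on \(S\). Write \(w=1/(1-z^2)\). The exponent is the quadratic form \(-\tfrac12 w(|x|^2+|y|^2)+wz\ip{x}{y}\), and its real part is \(-\tfrac12 Q_z(x,y)\) with
\[
 Q_z(x,y)=\operatorname{Re}(w)\,(|x|^2+|y|^2)-2\operatorname{Re}(wz)\ip{x}{y}.
\]
By Morera/Fubini (or differentiation under the integral), holomorphy then follows once \(Q_z\) is shown to be positive definite, locally uniformly in \(z\). Agreement with \(H\) on \((-1,1)\) is automatic.

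For positive definiteness, \(Q_z\) is the \(2\times2\) block form \(\begin{pmatrix}\operatorname{Re}w & -\operatorname{Re}(wz)\\ -\operatorname{Re}(wz)&\operatorname{Re}w\end{pmatrix}\otimes I_2\), so I need \(\operatorname{Re}w>|\operatorname{Re}(wz)|\). Multiplying through by \(|1-z^2|^2\), this becomes
\[
 \operatorname{Re}(1-\bar z^2)\;\pm\;\operatorname{Re}\bigl(z(1-\bar z^2)\bigr)>0 .
\]
Both sign cases combine into \(\operatorname{Re}\bigl((1\pm z)(1-\bar z^2)\bigr)>0\). Writing \((1-\bar z^2)=(1-\bar z)(1+\bar z)\), this product equals \(|1\pm z|^2(1\mp\bar z)\). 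Its real part is \(|1\pm z|^2(1\mp a)\), which is positive exactly when \(|a|<1\). This verification is clean, and it identifies the strip \(S\) as the natural domain.

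For the bound, use \(|f g|\le1\) and take absolute values inside the integral:
\[
 |H(z)|\le\frac{\pi}{2}\,\frac{1}{(2\pi)^2|1-z^2|}\int e^{-\frac12 Q_z}.
\]
The Gaussian integral over \(\R^4\) of the form \(M\otimes I_2\) equals \((2\pi)^2/\det M\). Here
\[
 \det M=(\operatorname{Re}w)^2-(\operatorname{Re}(wz))^2=\operatorname{Re}(w(1+z))\cdot\operatorname{Re}(w(1-z)).
\]
Since \(w(1\pm z)=1/(1\mp z)\), we get \(\operatorname{Re}\,\frac{1}{1\mp z}=\frac{1\mp a}{|1\mp z|^2}\). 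Hence
\[
 \det M=\frac{(1-a)(1+a)}{|1-z|^2|1+z|^2}=\frac{1-a^2}{|1-z^2|^2}.
\]
Substituting gives \(|H(z)|\le\frac{\pi}{2}\frac{|1-z^2|^2}{|1-z^2|(1-a^2)}=\frac{\pi}{2}\frac{|1-z^2|}{1-a^2}\), which is the claim.

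The main obstacle is the positivity and determinant bookkeeping for the complex quadratic form, together with justifying the holomorphic extension. The latter is handled by the locally uniform domination \(e^{-\frac12 Q_z}\le e^{-c(|x|^2+|y|^2)}\) on compact subsets of \(S\).
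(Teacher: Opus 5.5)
Your proof is correct and is essentially the paper's argument. You use the same Gaussian-kernel representation as the definition of the extension, the same absolute-value bound, the same closed-form Gaussian integral, and the same locally uniform domination for holomorphy. Your determinant computation for the $2\times 2$ block form is the paper's change of variables $u=x+y$, $v=x-y$ written in matrix language: its eigenvalues $\operatorname{Re}(w(1\mp z))=\operatorname{Re}\frac{1}{1\pm z}$ are exactly the paper's $A_\pm$, and your positivity check spells out the paper's remark that $A_\pm>0$ for $|a|<1$.
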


\begin{proof}
For real $|t|<1$, the arcsine-normalized correlation admits the integral representation
\[
H(t)=\frac{1}{8\pi(1-t^2)}
\int_{\R^2}\int_{\R^2}
f(x)g(y)
\exp\left(
-\frac{\|x\|^2+\|y\|^2-2t\ip{x}{y}}{2(1-t^2)}
\right)\,dx\,dy .
\]
For $z\in S$, this formula defines the candidate holomorphic extension. Since $|f|=|g|=1$,
\[
|H(z)|\le
\frac{1}{8\pi |1-z^2|}
\int_{\R^2}\int_{\R^2}
\exp\left(
-\operatorname{Re}\frac{\|x\|^2+\|y\|^2-2z\ip{x}{y}}{2(1-z^2)}
\right)\,dx\,dy .
\]
Use the identity
\[
\frac{\|x\|^2+\|y\|^2-2z\ip{x}{y}}{2(1-z^2)}
=
\frac14\frac{\|x+y\|^2}{1+z}
+
\frac14\frac{\|x-y\|^2}{1-z}.
\]
Writing $z=a+ib$, set
\[
A_+ := \operatorname{Re}\frac{1}{1+z}
=\frac{1+a}{(1+a)^2+b^2},
\qquad
A_- := \operatorname{Re}\frac{1}{1-z}
=\frac{1-a}{(1-a)^2+b^2}.
\]
Both are positive for $|a|<1$. With $u=x+y$ and $v=x-y$, the Jacobian from $(u,v)$ to $(x,y)$ in $\R^4$ is $1/4$. Thus the Gaussian integral is
\[
\frac14\cdot\frac{4\pi}{A_+}\cdot\frac{4\pi}{A_-}
=\frac{4\pi^2}{A_+A_-}.
\]
Finally,
\[
A_+A_-=
\frac{1-a^2}{|1+z|^2|1-z|^2}
=
\frac{1-a^2}{|1-z^2|^2}.
\]
Substitution gives
\[
|H(z)|\le
\frac{1}{8\pi |1-z^2|}\cdot
\frac{4\pi^2 |1-z^2|^2}{1-a^2}
=
\frac{\pi}{2}\frac{|1-z^2|}{1-a^2}.
\]
The same estimate is uniform on compact subsets of $S$, so the integral converges locally uniformly and the extension is holomorphic by dominated convergence.
\end{proof}

Next, we make a key change of variable that gives us analytic control over the local inverse. 

\begin{definition}[Key variable change]\label[lemma]{def:key-change-of-variable}
For the rest of the proof, put
\[
F(w)\defeq H(\sin w).
\]
\end{definition}

For the hyperplane scheme $H_0(t)=\arcsin t$, hence $H_0(\sin w)=w$ on the principal strip. This is why the comparison below is made against the identity map. 

\begin{lemma}\label[lemma]{lemma:sine-coordinate-bound}
For $F(w)=H(\sin w)$, the function $F$ is holomorphic on $|w|<1.1$ and
\[
 |F(w)-w|<15.3\qquad (|w|\le1.1).
\]
\end{lemma}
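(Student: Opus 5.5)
We will deduce both holomorphy and the bound from the strip bound in \cref{lemma:strip-bound}, combined with the maximum modulus principle and a rigorous one-dimensional \texttt{Arb} check on the circle $|w|=1.1$.

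\emph{Step 1: holomorphy.} By \cref{lemma:strip-bound}, $H$ is holomorphic on $S=\{|\operatorname{Re} z|<1\}$, so it suffices to show that $\sin$ maps the closed disc $\{|w|\le 1.1\}$ into $S$. Writing $w=a+ib$, we have $\operatorname{Re}\sin w=\sin a\cosh b$. We must show that
\[
\max_{a^2+b^2\le 1.21}|\sin a|\cosh b<1.
\]
\begin{itemize}
\item The function $\sin a\cosh b$ is harmonic, so the maximum is attained on the circle $a=1.1\cos\theta$, $b=1.1\sin\theta$.
\item A quick numerical scan suggests the maximum is about $0.96$, attained near $\theta\approx 0.64$.
\item We certify this rigorously: subdivide $\theta\in[0,\pi/2]$ (symmetry handles the other quadrants) and evaluate $\sin a\cosh b$ on each subinterval in interval arithmetic. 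This gives $\operatorname{Re}\sin w\le c<1$ on the closed disc for an explicit $c$.
\end{itemize}
Since $\sin$ is entire and the closed disc maps compactly into $S$, $F=H\circ\sin$ is holomorphic on a neighborhood of $\{|w|\le1.1\}$.

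\emph{Step 2: reduction to the boundary.} The function $F(w)-w$ is holomorphic on a neighborhood of the closed disc. By the maximum modulus principle it suffices to bound $|F(w)-w|$ on $|w|=1.1$. There we use
\[
|F(w)-w|\le |H(\sin w)|+1.1\le \frac{\pi}{2}\,\frac{|\cos w|^2}{1-(\sin a\cosh b)^2}+1.1,
\]
where $1-\sin^2 w=\cos^2 w$ and $|\cos w|^2=\cos^2 a+\sinh^2 b$. The right-hand side is an explicit real function of $\theta$. We enclose it on a fine subdivision of $[0,\pi/2]$ with \texttt{Arb} and check that every enclosure lies below $15.3$. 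This check uses only the strip bound, so it is valid for any pair $(f,g)$.

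\emph{Main obstacle.} The difficulty is that the denominator $1-(\operatorname{Re}\sin w)^2$ becomes small near the point where $\operatorname{Re}\sin w$ is maximal. My rough evaluation there gives a value of the crude bound around $15$–$19$, so the margin to $15.3$ may be thin or even negative. If the crude triangle-inequality bound fails on some arc, I would use the following fallbacks, in this order.
\begin{enumerate}
\item \emph{Finer subdivision.} Refine the subdivision near the critical $\theta$, since the pointwise bound may be smaller than my estimate.
\item \emph{Avoid the triangle inequality.} Bound $F(w)-w=H(\sin w)-H_0(\sin w)$ directly from the scheme. Use the Hermite expansion of \cref{lemma:hermite-mehler-diagonalization} for $|z|$ small, together with the integral representation from the proof of \cref{lemma:strip-bound}, specialized to the $\Hermite$ scheme. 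Because $\eta$ is small, $H$ is close to $\arcsin$, so this difference should be much smaller than $|H|+|w|$.
\item \emph{Certify the difference numerically.} Alternatively, evaluate $|H(\sin w)-w|$ itself rigorously on the boundary arcs where the margin is thin, via \cref{cor:coefficient-formula-combined} plus a tail bound for $|\sin w|$ in the region where it is below $1$.
\end{enumerate}
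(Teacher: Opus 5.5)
Your proposal is essentially the paper's proof: the paper also checks $|\sin u|\cosh v<0.946$ on the disc, applies the strip bound at $z=\sin w$, uses the triangle inequality, and certifies the maximum of $\frac{\pi}{2}\frac{\cos^2u+\sinh^2v}{1-\sin^2u\cosh^2v}$ on the circle $u^2+v^2=1.21$. The paper differs only in reducing to the circle by monotonicity in $|v|$ instead of the maximum principle, and in locating the critical point analytically instead of by brute-force subdivision. The certified supremum is below $14.19$, so $|F(w)-w|<14.19+1.1<15.3$; your crude triangle-inequality bound succeeds with a thin but positive margin of about $0.01$, and none of your fallbacks are needed.
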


\begin{proof}
Write $w=u+iv$. Then
\[
\operatorname{Re}(\sin w)=\sin u\cosh v.
\]
By \cref{lemma:disk-to-strip-check}, if $u^2+v^2\le 1.1^2$, then
\[
|\sin u|\cosh v<0.946<1.
\]
Therefore $\sin w\in S$ for $|w|\le 1.1$, and \cref{lemma:strip-bound} implies that $F$ is holomorphic on $|w|<1.1$.  Applying \cref{lemma:strip-bound} with $z=\sin w$ gives
\[
|F(w)|
\le
\frac{\pi}{2}\frac{|1-\sin^2 w|}{1-(\operatorname{Re}\sin w)^2}
=
\frac{\pi}{2}\frac{|\cos w|^2}{1-(\operatorname{Re}\sin w)^2}.
\]
By \cref{lemma:cosine-ratio-check}, the last expression is $<14.19$ on $|w|\le 1.1$.  Hence
\[
|F(w)-w|\le |F(w)|+|w|<14.19+1.1<15.3.
\]
\end{proof}

Next, we prove a general sufficient condition for the radius of convergence and supremum of the inverse of a holomorphic function on a disk. 

\begin{lemma}[Rouch\'e inverse disk]\label[lemma]{lemma:rouche-inverse-disk}
Let $F$ be holomorphic on $|w|<r$, with $F(0)=0$, and suppose that for some
$R<M<r$,
\[
 \sup_{|w|=M}|F(w)-w|<M-R.
\]
Then:
\begin{enumerate}[label=(\roman*)]
    \item $F^{-1}$ is holomorphic on $R\D$;
    \item for $|\zeta|<R$, the inverse satisfies $|F^{-1}(\zeta)|<M$.
\end{enumerate}
\end{lemma}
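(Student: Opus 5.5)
The approach is Rouch\'e's theorem applied to the pair $F(w)-\zeta$ and $w-\zeta$ on the circle $|w|=M$, followed by the standard integral formula for the inverse function.

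\textbf{Step 1: unique preimages.} Fix $\zeta$ with $|\zeta|<R$. On $|w|=M$ we have $|w-\zeta|\ge M-|\zeta|>M-R$. By hypothesis,
\[
|(F(w)-\zeta)-(w-\zeta)|=|F(w)-w|<M-R\le |w-\zeta|.
\]
Since $M<r$, the closed disk $|w|\le M$ lies in the domain of holomorphy. Rouch\'e's theorem then shows that $F(w)-\zeta$ has exactly one zero in $|w|<M$, counted with multiplicity, because $w-\zeta$ has exactly one zero there. Call it $w(\zeta)$. In particular $F$ takes each value in $R\D$ exactly once in $M\D$, and $w(0)=0$ because $F(0)=0$.

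\textbf{Step 2: holomorphy of the inverse.} Write the inverse explicitly via the argument-principle formula
\[
w(\zeta)=\frac{1}{2\pi i}\oint_{|w|=M}\frac{wF'(w)}{F(w)-\zeta}\,dw .
\]
This holds because the integrand has a single simple pole at $w(\zeta)$ with residue $w(\zeta)$. The multiplicity is one, so $F'(w(\zeta))\neq0$. The denominator $F(w)-\zeta$ does not vanish on the circle for $|\zeta|<R$, by the Rouch\'e inequality above. Hence the integrand is jointly continuous, and holomorphic in $\zeta$ for each fixed $w$ on the circle. Differentiating under the integral sign (or applying Morera) shows that $\zeta\mapsto w(\zeta)$ is holomorphic on $R\D$.

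\textbf{Step 3: matching the local branch.} We have $F(w(\zeta))=\zeta$ and $w(0)=0$, so $w$ agrees with the local inverse branch $F^{-1}$ at the origin. By the identity theorem it gives the holomorphic continuation of $F^{-1}$ to all of $R\D$. This proves (i).

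\textbf{Step 4: the bound.} For (ii), note that $w(\zeta)$ lies in the open disk $|w|<M$ by construction, so $|F^{-1}(\zeta)|<M$.

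The only subtle points are that uniqueness of the zero needs the strict inequality on the whole circle, and that the closed disk must lie inside the region where $F$ is holomorphic, which is why $M<r$ is assumed. Neither is a real obstacle.
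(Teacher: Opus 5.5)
Your proof is correct and follows the same route as the paper: Rouch\'e's theorem on $|w|=M$, comparing $F(w)-\zeta$ with $w-\zeta$, gives each $|\zeta|<R$ a unique simple preimage in $|w|<M$, and this yields both (i) and (ii). The only difference is how holomorphy of the inverse is obtained. You use the residue formula $\frac{1}{2\pi i}\oint_{|w|=M}\frac{wF'(w)}{F(w)-\zeta}\,dw$, whereas the paper invokes the local holomorphic inverse function theorem and glues the local branches using uniqueness of the preimage. Your version is slightly more self-contained, and your Step 3 really only needs that uniqueness, not the identity theorem.
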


\begin{proof}
Fix $|\zeta|<R$.  On $|w|=M$,
\[
|(F(w)-\zeta)-(w-\zeta)|=|F(w)-w|<M-R<M-|\zeta|\le |w-\zeta|.
\]
By Rouch\'e's theorem, $F(w)-\zeta$ and $w-\zeta$ have the same number of zeros in $|w|<M$, counted with multiplicity.  Since $w-\zeta$ has exactly one zero in this disk, $F(w)=\zeta$ has exactly one solution in $|w|<M$, counted with multiplicity.  In particular, the solution is simple.  The holomorphic inverse function theorem gives local inverse branches, and uniqueness makes these branches glue over $R\D$.  The same argument gives $|F^{-1}(\zeta)|<M$.
\end{proof}

\begin{corollary}[Radius and supremum of $|F^{-1}|$]\label[corollary]{cor:F-inverse-radius}
Taking $r=1.1$, $R=0.975$, and $M=0.98$, the inverse $F^{-1}$ is holomorphic on $0.975\D$ and
\[
|F^{-1}(\zeta)|<0.98\qquad (|\zeta|<0.975).
\]
\end{corollary}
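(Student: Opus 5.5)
The plan is to apply \cref{lemma:rouche-inverse-disk} directly with $r=1.1$, $R=0.975$, $M=0.98$. \Cref{lemma:sine-coordinate-bound} already shows that $F$ is holomorphic on $|w|<1.1$. Also $F(0)=H(0)=0$, since $H$ is odd. So the only hypothesis left to verify is
\[
\sup_{|w|=0.98}|F(w)-w|<0.005 .
\]
The crude bound $|F(w)-w|<15.3$ from \cref{lemma:sine-coordinate-bound} is useless for this on its own. However, it is exactly the input needed for a Cauchy-estimate tail bound, and I would combine it with a rigorous computation of the low-order Taylor coefficients.

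Concretely, set $D(w):=F(w)-w$, which is holomorphic on $|w|<1.1$, and expand $D(w)=\sum_{n\ge1}d_nw^n$. Only odd $n$ occur, because $H$ and $\sin$ are both odd. By Cauchy's estimate on circles of radius tending to $1.1$, and using $|D|<15.3$ there, we get $|d_n|\le 15.3\cdot 1.1^{-n}$. Hence for $|w|=0.98$,
\[
|D(w)|\le \sum_{n\le N}|d_n|\,0.98^n+\frac{15.3\,(0.98/1.1)^{N+1}}{1-0.98/1.1}.
\]
Since $0.98/1.1\approx 0.891$, choosing $N$ around $120$ to $150$ makes the tail below $10^{-5}$. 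The cutoff $N=220$ already used for the head would certainly suffice.

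For the head, I would compute the Taylor coefficients of $H$ with \texttt{Arb} via \cref{cor:coefficient-formula-combined}, with $P=\eta\Hermite$ and $Q=-\eta\Hermite$. Each $\mcal A_{a,b}$ and $\mcal B_{a,b}$ is a rigorous one-dimensional Gaussian integral of $\HermBasis_a(x)\ThresholdCoeff_b(\pm\eta\Hermite(x))$, using the closed forms of \cref{lemma:threshold-coefficient-reduction}. I would then compose the resulting power series with the series of $\sin w$, truncated at degree $N$ (composition is exact up to degree $N$ because $\sin w=w+O(w^3)$), and subtract $w$ to get enclosures of $d_1,\dots,d_N$. The required bound is then the single interval inequality $\sum_{n\le N}|d_n|0.98^n+\text{tail}<0.005$.

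The expected obstacle is whether the head sum is genuinely small. Heuristically, $H-\arcsin$ is a perturbation of order $\eta^2\approx 2\cdot10^{-3}$. The main risk is that the absolute sum of Taylor coefficients of $D$, evaluated at radius $0.98$ (which is close to the singularities of $\arcsin(\sin w)$-type behaviour at $|w|=\pi/2$ only mildly), exceeds the margin $0.005$. If that happens, my fallback is to drop the absolute-value majorization. Instead I would evaluate $D$ directly on the circle $|w|=0.98$: use interval arithmetic on a fine grid of arcs, and control $D'$ on each arc by the same Cauchy estimate on $|w|<1.1$, namely $|D'|\le 15.3/(1.1-0.98)$ times a safety factor. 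This gives a rigorous supremum bound. Either route, once it certifies $\sup_{|w|=0.98}|D|<0.005$, yields both conclusions of the corollary from \cref{lemma:rouche-inverse-disk}.
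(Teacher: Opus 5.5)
Your proposal is correct and is essentially the paper's proof: apply \cref{lemma:rouche-inverse-disk} with $(r,R,M)=(1.1,0.975,0.98)$, certify the head $\sum_{n\le N}|d_n|0.98^n$ in \texttt{Arb} via \cref{cor:coefficient-formula-combined} composed with the truncated sine series, and bound the tail by Cauchy's estimate using the constant $15.3$ from \cref{lemma:sine-coordinate-bound}; the paper takes $N=144$ and obtains a head $<0.0045141060$ and a tail $<7.4601\cdot10^{-6}$. (A minor numerical slip: the tail drops below $10^{-5}$ only once $N\ge 142$, not already near $N\approx 120$, but the head leaves a margin of about $4.9\cdot10^{-4}$, so this is harmless and your fallback is not needed.)
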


\begin{proof}
Let
\[
\Delta(w):=F(w)-w=\sum_{n\ge1}d_nw^n.
\]
Using \cref{cor:coefficient-formula-combined}, the Taylor coefficients of \(H\) through degree \(144\) are determined using
\texttt{Arb}, and composed with the Taylor expansion of $\sin$. 
A composition of the series with the truncated Taylor expansion of $\sin$ through degree \(144\) in \texttt{Arb} gives the finite head estimate
\[
\sum_{n\le 144}|d_n|0.98^n<0.0045141060.
\]
By \cref{lemma:sine-coordinate-bound} and Cauchy's estimate, taking the Cauchy radius to $1.1$ from below gives
\[
\sum_{n>144}|d_n|0.98^n
\le
15.3\frac{(0.98/1.1)^{145}}{1-0.98/1.1}
< 7.4601\cdot10^{-6}.
\]
Therefore
\[
\sup_{|w|=0.98}|F(w)-w|
\le
\sum_{n\ge1}|d_n|0.98^n
<0.0045141060+7.4601\cdot10^{-6}
<0.005.
\]
Since $M-R=0.98-0.975=0.005$, \cref{lemma:rouche-inverse-disk} applies with $r=1.1$, $M=0.98$, and $R=0.975$.
\end{proof}

\begin{corollary}[Radius and supremum of $|H^{-1}|$]\label[corollary]{cor:H-inverse-radius}
Undoing the change of variables, we get
\[
H^{-1}=\sin\circ F^{-1}.
\]
Thus $H^{-1}$ is holomorphic on $0.975\D$ and
\[
|H^{-1}(\zeta)|<\sinh(0.98)<1.145
\qquad (|\zeta|<0.975).
\]
\end{corollary}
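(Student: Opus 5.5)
The corollary follows from \cref{cor:F-inverse-radius} once we check the identity $H^{-1}=\sin\circ F^{-1}$ and bound $|\sin|$ on the disk of radius $0.98$.

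\emph{Step 1: the identity.} For $|\zeta|<0.975$, set $w:=F^{-1}(\zeta)$. Then $|w|<0.98<1.1$. By \cref{lemma:sine-coordinate-bound}, $z:=\sin w$ lies in the strip $S$, and
\[
H(z)=H(\sin w)=F(w)=\zeta.
\]
So $\zeta\mapsto \sin(F^{-1}(\zeta))$ is holomorphic on $0.975\D$, as a composition of holomorphic maps. It sends $0$ to $0$, and it is a right inverse of $H$.

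\emph{Step 2: it is the local branch.} Near $0$ we have $F(w)=w+O(w)$. More precisely, $F'(0)=H'(0)\cos 0=H'(0)$, and $H'(0)\neq0$ because $F^{-1}$ is holomorphic at $0$. Hence $H$ is locally biholomorphic at $0$. Any holomorphic right inverse $\phi$ with $\phi(0)=0$ coincides with the local branch $H^{-1}$ of \cref{lemma:correlation-profile-analytic} near $0$: applying the local inverse to $H(\phi(\zeta))=\zeta$ gives $\phi=H^{-1}$ on a small disk. By the identity theorem, $\sin\circ F^{-1}$ is then the analytic continuation of $H^{-1}$ to all of $0.975\D$. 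In particular, the Taylor series of $H^{-1}$ at $0$ converges on this disk, which is what the remainder bound will use.

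\emph{Step 3: the bound.} Write $w=u+iv$. Then
\[
|\sin w|^2=\sin^2u+\sinh^2v=\sin^2 u\cosh^2 v+\cos^2u\sinh^2 v\le\cosh^2 v-\cos^2 u.
\]
A cleaner route is the power series
\[
|\sin w|\le\sum_k \frac{|w|^{2k+1}}{(2k+1)!}=\sinh|w|.
\]
Since $|w|<0.98$ and $\sinh$ is increasing, $|H^{-1}(\zeta)|<\sinh(0.98)$. Finally, $\sinh(0.98)\approx1.1446<1.145$, which is checked by direct evaluation, for instance with \texttt{Arb}.

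There is no real obstacle here. The only point needing care is Step 2: the holomorphic function $\sin\circ F^{-1}$ must agree with the specific branch from \cref{lemma:correlation-profile-analytic}, and not merely be some preimage of $H$.
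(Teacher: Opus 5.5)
Your proof is correct and follows the same route as the paper: the branch identity $H^{-1}=\sin\circ F^{-1}$, the bound $|F^{-1}(\zeta)|<0.98$ from \cref{cor:F-inverse-radius}, and the estimate $|\sin w|\le\sinh|w|$. Your Step 2 spells out, via $H(\sin F^{-1}(\zeta))=\zeta$ and the identity theorem, the branch identification that the paper asserts in one line; the phrase $F(w)=w+O(w)$ there is vacuous, but the next sentence, using $F'(0)=H'(0)\neq0$, does the actual work.
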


\begin{proof}
Since $F=H\circ\sin$, the identity $F^{-1}=\arcsin\circ H^{-1}$ is equivalent to
$H^{-1}=\sin\circ F^{-1}$ on the inverse branch at the origin.  By \cref{cor:F-inverse-radius}, $|F^{-1}(\zeta)|<0.98$ for $|\zeta|<0.975$.  The elementary bound
\[
|\sin z|\le \sinh |z|
\]
for complex $z$ gives
\[
|H^{-1}(\zeta)|=|\sin(F^{-1}(\zeta))|<\sinh(0.98)<1.145.
\]
\end{proof}

Finally, we are ready to prove \cref{lemma:tail-bound}.

\begin{lemma*}[Majorant Remainder bound]
Let
\[
H^{-1}(\zeta)=\sum_{n\ge1}a_n\zeta^n,
\qquad
\gzero=\log(1+\sqrt2)+4.8\cdot10^{-6}.
\]
Then
\[
\sum_{n>220}|a_n|\gzero^n<2.5\cdot10^{-9}.
\]
\end{lemma*}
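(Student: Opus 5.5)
The plan is a direct application of Cauchy's coefficient estimate to \(H^{-1}\) on the disk supplied by \cref{cor:H-inverse-radius}. That corollary shows \(H^{-1}\) is holomorphic on \(0.975\D\) with \(|H^{-1}(\zeta)|<1.145\) there. For every \(\rho<0.975\), Cauchy's inequality on the circle \(|\zeta|=\rho\) therefore gives \(|a_n|\le 1.145\,\rho^{-n}\). Letting \(\rho\uparrow 0.975\) yields
\[
|a_n|\le 1.145\cdot 0.975^{-n}\qquad(n\ge1).
\]
Since \(H\) is odd, \cref{lemma:lagrange-inverse-coefficients} gives \(a_n=0\) for even \(n\). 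This could save a factor of roughly two, but I do not expect to need it.

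Next I sum the geometric tail. Set \(q:=\gzero/0.975\). Because \(\gzero\approx 0.8813784\), we get \(q\approx 0.903978\), and
\[
\sum_{n>220}|a_n|\gzero^n\le 1.145\sum_{n\ge 221}q^n=1.145\,\frac{q^{221}}{1-q}.
\]
Numerically, \(\log q\approx -0.10095\), so \(q^{221}\approx e^{-22.31}\approx 2.05\cdot 10^{-10}\). Dividing by \(1-q\approx 0.0960\) and multiplying by \(1.145\) gives about \(2.45\cdot10^{-9}\), which is below \(2.5\cdot10^{-9}\). I would certify this final scalar inequality in \texttt{Arb} so that rounding in \(\log(1+\sqrt2)\) and in the power \(q^{221}\) is handled rigorously. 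Using only odd \(n\) would give extra slack, replacing \(1-q\) by \(1-q^2\).

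The only delicate point is the margin, since the bound \(2.45\cdot 10^{-9}\) sits close to the target \(2.5\cdot10^{-9}\). If it fails, I would first exploit oddness, which roughly halves the estimate. Failing that, I would sharpen the constant \(\sinh(0.98)\) by using \(\sup|\sin w|\) over \(|w|\le 0.98\), which already equals \(\sinh 0.98\); the real slack therefore comes from oddness. All the genuinely hard analytic work, namely the Rouch\'e argument and the Arb head estimate of \(F(w)-w\) on \(|w|=0.98\), is already contained in \cref{cor:F-inverse-radius} and \cref{cor:H-inverse-radius}. What remains is routine bookkeeping.
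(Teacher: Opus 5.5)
Your proposal is correct and is the paper's own argument: Cauchy's estimate for $H^{-1}$ on $0.975\D$ with the bound $1.145$ from \cref{cor:H-inverse-radius}, followed by summing the geometric tail to get $1.145\,q^{221}/(1-q)$ with $q=\gzero/0.975$. The paper bounds this by $2.44\cdot10^{-9}$ (your $2.45\cdot10^{-9}$ slightly overestimates it), so the margin suffices and the oddness refinement is not needed.
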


\begin{proof}
By \cref{cor:H-inverse-radius}, $H^{-1}$ is holomorphic on $0.975\D$ and satisfies $|H^{-1}(\zeta)|<1.145$ there.  Hence, for every $s<0.975$, Cauchy's estimate with radius tending to $0.975$ gives
\[
\sum_{n>220}|a_n|s^n
\le
1.145\frac{(s/0.975)^{221}}{1-s/0.975}.
\]
Taking $s=\gzero$ yields
\[
1.145\frac{(\gzero/0.975)^{221}}{1-\gzero/0.975}
<2.44\cdot10^{-9}
<2.5\cdot10^{-9},
\]
as claimed.
\end{proof}

%% file: paper/head_computation.tex
\subsection{Computing the Head}
\label{sec:numerical-computation-of-head}

We now describe the certified computation of the head of the inverse-majorant
series for the \(\Hermite\) scheme.  

\begin{lemma}[Majorant head bound]\label[lemma]{lemma:head-bound}
Let
\[
    M_{f,g}(s)=\sum_{n\ge1}|a_n|s^n
\]
be the inverse majorant for the \(\Hermite\) scheme.  For
\[
    \gzero=\log(1+\sqrt2)+4.8\cdot10^{-6},
\]
one has
\[
    \sum_{1\le n\le220}|a_n|\gzero^n
    <1-2.5\cdot10^{-9}.
\]
\end{lemma}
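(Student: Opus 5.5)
This is a finite certified computation. I would carry it out in three stages: Taylor coefficients of \(H\), Lagrange inversion, and evaluation of the head sum. Every stage uses \texttt{Arb} ball arithmetic, so each quantity is enclosed rigorously.

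\emph{Coefficients of \(H\).} For the \(\Hermite\) scheme we have \(P=\eta\Hermite\) and \(Q=-\eta\Hermite\). Both are odd, so the integrals \(\mcal A_{a,b}\) and \(\mcal B_{a,b}\) vanish unless \(a+b\) is odd. Moreover, using \(\ThresholdCoeff_b(-u)=(-1)^{b+1}\ThresholdCoeff_b(u)\), we get \(\mcal B_{a,b}=\pm\mcal A_{a,b}\), so only one family of integrals needs to be computed.

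By \cref{cor:coefficient-formula-combined}, each coefficient \([z^n]H\) for \(n\le 221\) (odd \(n\)) is a finite sum of products of one-dimensional integrals
\[
\int_\R \HermBasis_a(x)\ThresholdCoeff_{n-a}(\eta\Hermite(x))\,d\mu(x).
\]
I would evaluate these with \texttt{Arb}'s rigorous quadrature after truncating the domain to \(|x|\le L\). The tail \(|x|>L\) is bounded explicitly by the Gaussian decay of \(e^{-x^2/2}\) together with the elementary bounds \(|\ThresholdCoeff_b|\le 1\) and \(|\HermBasis_a(x)|\le C_a(1+|x|)^a\); this produces a tiny added radius. For efficiency I would generate the \(\HermBasis_k\) and the \(\ThresholdCoeff_b\) by their three-term recurrences, evaluated on Taylor models, rather than integrating each term separately.

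\emph{Inversion and the sum.} Writing \(H(z)=z\OddFactor(z^2)\) with \(\OddFactor\) truncated at degree \(110\), \cref{lemma:lagrange-inverse-coefficients} gives
\[
a_{2k+1}=\frac{1}{2k+1}[t^k]\OddFactor(t)^{-(2k+1)}.
\]
Each \(a_n\) with \(n\le220\) depends only on the coefficients of \(H\) up to degree \(n\), so truncation introduces no error. In practice I would compute all of these at once by series reversion of the enclosed polynomial in \texttt{Arb}, which propagates the coefficient radii automatically. Then I would form \(\sum_{n\le220}|a_n|\gzero^n\), with \(\gzero\) enclosed as \(\log(1+\sqrt2)+4.8\cdot10^{-6}\), and check that the upper endpoint of the resulting ball is below \(1-2.5\cdot10^{-9}\).

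\emph{Main obstacle: precision.} The margin is tiny. For hyperplane rounding the majorant at \(\log(1+\sqrt2)\) equals exactly \(1\), and our gain is only about \(10^{-5}\) in \(\gamma\), hence on the order of \(10^{-5}\) in the sum. Since \(a_n\gzero^n\) decays only geometrically, at ratio about \(\gzero/0.975\), the enclosures of the first few coefficients must be far tighter than \(10^{-9}\). Lagrange inversion and reversion also amplify the input radii. I would therefore first run the whole pipeline at high working precision, several hundred bits, with adaptive subdivision in the quadrature, and shrink the radii until the certified upper bound clears the threshold comfortably. The remaining risk is quadrature cost at high Hermite degree, since \(\HermBasis_{220}\) oscillates strongly. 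If direct quadrature is too slow, I would expand \(\ThresholdCoeff_b(\eta\Hermite(x))\) in powers of \(\eta\) with a rigorous remainder. This turns each integral into a finite Gaussian moment computation, which is exact in rational arithmetic up to powers of \(\sqrt{2/\pi}\).
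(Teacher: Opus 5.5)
Your main route is the paper's proof. You enclose the Taylor coefficients of $H$ through degree $220$ via \cref{cor:coefficient-formula-combined} in \texttt{Arb}, invert by Lagrange inversion (series reversion), and certify the upper endpoint of $\sum_{n\le 220}|a_n|\gamma_0^n$; the paper reports this as $<0.999999729$.

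One caution on your fallback: expanding $\ThresholdCoeff_b(\eta\,\Hermite(x))$ in powers of $\eta$ over all of $\mathbb R$ gives only a divergent asymptotic series. The factor $e^{-\eta^2\Hermite(x)^2/2}$ stops being Gaussian-integrable once $\operatorname{Re}\eta^2<0$. Correspondingly, $\mathbb E[\Hermite(X)^{2k}]$ grows like $k^{3k}$ up to exponential factors, against only $1/k!$ from the Taylor coefficients of $e^{-u^2/2}$. At $\eta\approx 0.0447$, optimal truncation therefore leaves an error of order $10^{-5}$, far above the certified margin of order $10^{-7}$. You would need to expand only on a truncated domain $|x|\le L$, using truncated Gaussian moments rather than exact full-line ones.
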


\begin{proof}
Using \cref{cor:coefficient-formula-combined} and \cref{lemma:lagrange-inverse-coefficients}, the Taylor coefficients $(a_n)$ of
\(H^{-1}\) through degree \(220\) are determined using
\texttt{Arb} with outward-rounded interval arithmetic.  

The computation gives
\[
    \sum_{n\le220}|a_n|\gzero^n
    <
    0.999999729.
\]
Since
\[
    0.999999729
    <
    1-2.5\cdot10^{-9},
\]
the desired head bound follows.
\end{proof}

\begin{remark}
We also independently verified the computations using floating-point and PyTorch computations in Python. However, the displayed inequality is certified by the outward-rounded Arb computation.
\end{remark}

\begin{corollary}[$\gamma_0$-admissibility]\label[corollary]{cor:gamma0-admissibility}
For the $\Hermite$ scheme,
\[
M_{f,g}(\gzero)<1.
\]
Consequently, the scheme is $\gzero$-admissible.
\end{corollary}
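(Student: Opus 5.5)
The plan is to split the inverse majorant series at the index \(k_0=220\) and add the two certified bounds already in hand. Write
\[
    M_{f,g}(\gzero)=\sum_{1\le n\le 220}|a_n|\gzero^n+\sum_{n>220}|a_n|\gzero^n .
\]
\cref{lemma:head-bound} bounds the first sum by \(1-2.5\cdot10^{-9}\). \cref{lemma:tail-bound} bounds the second sum strictly by \(2.5\cdot10^{-9}\). Adding them gives \(M_{f,g}(\gzero)<1\).

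Before summing, I need the series \(M_{f,g}\) to be meaningful at \(s=\gzero\). Here \((a_n)\) are the Taylor coefficients of the local inverse branch \(H^{-1}\) at the origin, as in \cref{lemma:local-inverse-expansion}. \cref{cor:H-inverse-radius} shows that \(H^{-1}\) is holomorphic on \(0.975\D\). Since \(\gzero\approx 0.8814<0.975\), the majorant series converges absolutely at \(s=\gzero\), so splitting it into head and tail is legitimate.

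For admissibility, I will check the hypotheses of \cref{def:gamma-admissible}. Both \(f\) and \(g\) are odd and measurable: since \(\Hermite\) is odd, \(f(-x,-y)=\sgn(-y+\eta\Hermite(x))=-f(x,y)\), and likewise for \(g\). The condition \(H'_{f,g}(0)\neq0\) is needed for the local inverse to exist at all. It is implicit in the construction of \(H^{-1}\) in \cref{cor:F-inverse-radius}, where Rouch\'e's theorem gives a simple zero, so \(F'(0)\neq0\). It can also be read off directly from the certified first coefficient \([z^1]H\) computed via \cref{cor:coefficient-formula-combined}, which is close to \(1\). With these hypotheses and \(\sum_n|a_n|\gzero^n\le1\), the scheme is \(\gzero\)-admissible.

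There is no real obstacle here, since all the work sits in the two preceding lemmas. The only point to confirm is that both lemmas use the same coefficients \(a_n\) and the same branch of \(H^{-1}\), so that the two pieces combine into a single series.
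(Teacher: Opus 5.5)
Your proof is correct and follows the paper's argument: add the head bound of \cref{lemma:head-bound} to the tail bound of \cref{lemma:tail-bound}, and use \cref{cor:H-inverse-radius} with $\gzero<0.975$ to get the inverse branch on a neighborhood of $\gzero\overline{\D}$. Your extra checks are valid and make explicit what the paper leaves implicit: oddness of $f,g$, and $H'(0)=F'(0)\neq0$ because Rouch\'e forces the zero of $F$ at $0$ to be simple.
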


\begin{proof}
By \cref{lemma:head-bound},
\[
\sum_{1\le n\le220}|a_n|\gzero^n<1-2.5\cdot10^{-9}.
\]
By \cref{lemma:tail-bound},
\[
\sum_{n>220}|a_n|\gzero^n<2.5\cdot10^{-9}.
\]
Adding the two inequalities gives $M_{f,g}(\gzero)<1$.  The inverse branch exists on a neighborhood of $\gzero\overline{\D}$ by \cref{cor:H-inverse-radius}, since $\gzero<0.975$.  Thus the $\Hermite$ scheme satisfies \cref{def:gamma-admissible}.
\end{proof}

\begin{proof}[Proof of \cref{thm:no-mixing-beats-hyperplane}]
By \cref{cor:gamma0-admissibility}, the $\Hermite$ scheme is $\gzero$-admissible.  Applying \cref{lemma:local-inverse-expansion} gives
\[
K_G\le \frac{\pi}{2\gzero}.
\]
Finally, with $\rhostarval=\log(1+\sqrt2)$ and $\gzero=\rhostarval+4.8\cdot10^{-6}$,
\[
\frac{\pi}{2\rhostarval}-\frac{\pi}{2\gzero}
=
\frac{\pi(4.8\cdot10^{-6})}{2\rhostarval\gzero}
>9.7\cdot10^{-6}.
\]
Therefore
\[
K_G
\le
\frac{\pi}{2\gzero}
<
\frac{\pi}{2\log(1+\sqrt2)}-9.7\cdot10^{-6}.
\]
\end{proof}

%% file: paper/generalization.tex
\section{Mixed Schemes}
\label{sec:generalization}

Our main result builds on the analysis technique developed in the prior sections but uses a mixture of two rounding schemes. We first introduce mixed schemes and then analyze them. 

First, we define a mixed scheme and recall its properties from~\cite{braverman2011grothendieckconstantstrictlysmaller, naor2014krivine}. For completeness, we also provide a proof in~\cref{app:krivine-preprocessing}.

\begin{theorem}[Mixed Krivine Scheme]\label[theorem]{thm:mixed-krivine}
Let $L\in\mathbb N$. For each $\ell\in\{1,\dots,L\}$, let
\(
f_\ell,g_\ell:\mathbb R^{k_\ell}\to\{-1,1\}
\)
be odd measurable functions, and let $H_\ell$ be their arcsine-normalized correlation function.
Let $\lambda_1,\dots,\lambda_L\ge 0$ satisfy $\sum_{\ell=1}^L\lambda_\ell=1$, and define the mixed correlation function
\(
H_\lambda(t):=\sum_{\ell=1}^L\lambda_\ell H_\ell(t).
\)
Assume that $H_\lambda$ has a local inverse at the origin,
\(
H_\lambda^{-1}(\zeta)=\sum_{n\ge1}a_n\zeta^n,
\)
and that this inverse is holomorphic on a neighborhood of $\gamma\overline{\mathbb D}$ for some
$\gamma>0$. If
\(
M_\lambda(\gamma):=\sum_{n\ge1}|a_n|\gamma^n\le 1,
\)
then
\(
K_G\le \frac{\pi}{2\gamma}.
\)
\end{theorem}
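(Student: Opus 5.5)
We follow Krivine's argument, with the single scheme replaced by the mixture: randomizing over which scheme is used realizes $H_\lambda$ as a correlation profile. Fix unit vectors $u_1,\dots,u_m,v_1,\dots,v_n$ in a real Hilbert space $\mathcal H$ that nearly attain the middle supremum in \eqref{eqn:KG_inequality}. Write $\psi:=H_\lambda^{-1}$, with $\psi(\zeta)=\sum_{n\ge1}a_n\zeta^n$ (only odd $n$ occur).

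\textbf{Step 1: tensor embedding.} Build the Krivine embedding in $\mathcal K:=\bigoplus_{n\text{ odd}}\mathcal H^{\otimes n}$ by
\[
U_i:=\bigoplus_n \sgn(a_n)\sqrt{|a_n|\gamma^n}\,u_i^{\otimes n},\qquad V_j:=\bigoplus_n \sqrt{|a_n|\gamma^n}\,v_j^{\otimes n}.
\]
Since $\ip{x^{\otimes n}}{y^{\otimes n}}=\ip{x}{y}^n$, this gives
\[
\ip{U_i}{V_j}=\sum_n a_n(\gamma\ip{u_i}{v_j})^n=\psi(\gamma\ip{u_i}{v_j}),
\qquad
\|U_i\|^2,\ \|V_j\|^2\le M_\lambda(\gamma)\le 1.
\]
This is where holomorphy of $H_\lambda^{-1}$ on a neighborhood of $\gamma\overline{\D}$ is used: it gives absolute convergence at $|\zeta|\le\gamma$. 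Pad each vector with an orthogonal component so that all are unit vectors. This changes neither the inner products $\ip{U_i}{V_j}$ nor the properties below.

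\textbf{Step 2: randomized rounding.} Choose $\ell$ with probability $\lambda_\ell$, and take an independent Gaussian matrix $G:\mathcal K\to\R^{k_\ell}$, that is, $k_\ell$ i.i.d. copies of the isonormal process. Restricted to the finite span of the vectors, this is an honest finite-dimensional Gaussian. Set $\varepsilon_i:=f_\ell(GU_i)$ and $\delta_j:=g_\ell(GV_j)$. For unit vectors with $\ip{U}{V}=t$, the pair $(GU,GV)$ has the law $(G_1,\,tG_1+\sqrt{1-t^2}G_2)$. By Definition~\ref{def:gaussian-correlation}, conditioning on $\ell$ and averaging gives
\[
\E[\varepsilon_i\delta_j]=\tfrac{2}{\pi}H_\lambda\bigl(\ip{U_i}{V_j}\bigr)=\tfrac{2}{\pi}H_\lambda\bigl(\psi(\gamma\ip{u_i}{v_j})\bigr)=\tfrac{2\gamma}{\pi}\ip{u_i}{v_j}.
\]
The middle equality is the step I expect to be the main obstacle. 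The composition $H_\lambda\circ\psi=\mathrm{id}$ holds only on the local branch, so it is not automatic for every real $\zeta\in[-\gamma,\gamma]$. The argument needs two facts. First, $\psi$ maps $[-\gamma,\gamma]$ into $(-1,1)$, where $H_\lambda$ is given by its probabilistic formula. This follows from $|\psi(\zeta)|\le M_\lambda(|\zeta|)\le 1$, with strictness obtained by shrinking $\gamma$ slightly and passing to a limit at the end. Second, the identity $H_\lambda(\psi(\zeta))=\zeta$, which holds near $0$, persists along the whole segment. This follows from the identity theorem: $\psi$ is holomorphic on a connected neighborhood of $\gamma\overline{\D}$, $H_\lambda$ is holomorphic on the strip by Lemma~\ref{lemma:correlation-profile-analytic} (each $H_\ell$ is, and the mixture is a finite combination), and on that neighborhood intersected with $\psi^{-1}(S)$ the composition is analytic and equals the identity near $0$. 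I would handle the connectivity issue by working on the connected component containing $[-\gamma,\gamma]$.

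\textbf{Step 3: conclusion.} Summing against $a_{ij}$,
\[
\tfrac{2\gamma}{\pi}\sum_{i,j}a_{ij}\ip{u_i}{v_j}=\E\sum_{i,j}a_{ij}\varepsilon_i\delta_j\le\max_{\varepsilon,\delta}\Bigl|\sum_{i,j}a_{ij}\varepsilon_i\delta_j\Bigr|.
\]
Taking the supremum over the vectors gives $K_G\le\pi/(2\gamma)$, and if $\gamma$ was shrunk in Step 2, letting it increase back gives the stated bound.
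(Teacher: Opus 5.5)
Your proposal is correct and follows essentially the same route as the paper's appendix proof: a tensor-power embedding padded to unit vectors, a single global random index $\ell$ followed by a Gaussian projection, the identity $H_\lambda\bigl(H_\lambda^{-1}(\gamma'\langle x,y\rangle)\bigr)=\gamma'\langle x,y\rangle$, and the limit $\gamma'\uparrow\gamma$. Your identity-theorem argument fills in a step the paper uses without comment: it shows that $H_\lambda^{-1}$ maps $[-\gamma',\gamma']$ into $(-1,1)$ and that $H_\lambda\circ H_\lambda^{-1}=\mathrm{id}$ holds along that whole segment, not just near $0$.
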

Then, one can use the following theorem to certify a bound on $K_G$ subject to finding the right parameters. 
\begin{theorem}[Mixed Krivine Criteria]\label[theorem]{thm:finite-mixed-function}
Let $H_\lambda=\sum_{\ell=1}^L\lambda_\ell H_\ell$ be a mixed correlation function as defined in
\cref{thm:mixed-krivine}. Define
\[
F(w):=H_\lambda(\sin w).
\]
Fix numbers
\(
0<R<M<r_0,
N_0, N_1 \in\mathbb N,
\gamma<R.
\)
Assume the following four conditions hold:
\begin{enumerate}
    \item \textbf{Sine-coordinate analytic bound.}
    The function $F$ is holomorphic on $|w|<r_0$, and
    \(
    |F(w)-w|\le B
    \)
    for \((|w|\le r_0).\)
    \item \textbf{Finite perturbation bound.}
    Writing
    \(
    F(w)-w=\sum_{n\ge1}d_nw^n,
    \)
    one has
    \(
    \sum_{n\le N_0}|d_n|M^n
    +
    B\frac{(M/r_0)^{N_0+1}}{1-M/r_0}
    <
    M-R.
    \)

    \item \textbf{Finite inverse-majorant head.}
    Writing
    \(
    H_\lambda^{-1}(\zeta)=\sum_{n\ge1}a_n\zeta^n,
    \)
    one has
    \(
    \sum_{n\le N_1}|a_n|\gamma^n
    \le
    1-T.
    \)

    \item \textbf{Cauchy remainder domination.}
    The remainder margin satisfies
    \(
    \sinh(M)\frac{(\gamma/R)^{N_1+1}}{1-\gamma/R}
    \le T.
    \)
\end{enumerate}

Then
\(
K_G\le \frac{\pi}{2\gamma}.
\)
\end{theorem}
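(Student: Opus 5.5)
The plan is to replay the argument of \cref{sec:bounding-remainder} and \cref{sec:numerical-computation-of-head} with $H$ replaced by the mixed correlation function $H_\lambda$, and then to conclude with \cref{thm:mixed-krivine} in place of \cref{lemma:local-inverse-expansion}.

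The first step checks that the Rouch\'e hypothesis of \cref{lemma:rouche-inverse-disk} holds on the circle $|w|=M$.

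\begin{itemize}
\item Condition~1 says $F$ is holomorphic on $|w|<r_0$ with $|F(w)-w|\le B$ there.
\item Cauchy's estimate on radii $\rho\uparrow r_0$ then gives $|d_n|\le B r_0^{-n}$.
\item Hence $\sum_{n>N_0}|d_n|M^n\le B\frac{(M/r_0)^{N_0+1}}{1-M/r_0}$.
\item Combining this with the finite head in condition~2, we get
\[
\sup_{|w|=M}|F(w)-w|\le \sum_{n\ge1}|d_n|M^n<M-R.
\]
\item Note $F(0)=H_\lambda(0)=0$, since each $H_\ell$ is odd.
\end{itemize}

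So \cref{lemma:rouche-inverse-disk}, applied with $r=r_0$, shows that $F^{-1}$ is holomorphic on $R\D$ with $|F^{-1}|<M$ there.

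The second step transfers this to $H_\lambda^{-1}$, exactly as in \cref{cor:H-inverse-radius}. Since $F=H_\lambda\circ\sin$ and $\sin$ is invertible near $0$, the branch $\sin\circ F^{-1}$ is a holomorphic local inverse of $H_\lambda$ at the origin. It vanishes at $0$, so by uniqueness of the local inverse branch it equals $H_\lambda^{-1}$. Therefore $H_\lambda^{-1}$ is holomorphic on $R\D$. The bound $|\sin z|\le\sinh|z|$ then gives $|H_\lambda^{-1}|<\sinh M$ on $R\D$.

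One point needs care here. \cref{thm:mixed-krivine} presupposes that a local inverse exists, so the argument should make explicit that this step produces it. The derivative satisfies $F'(0)=1+d_1$ with $|d_1|M<M-R$, hence $F'(0)\neq0$, and therefore $H_\lambda'(0)\neq0$.

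The third step handles the tail of the majorant series, following the proof of \cref{lemma:tail-bound}. Because $\gamma<R$, Cauchy's estimate on radii approaching $R$ gives $|a_n|\le \sinh(M)R^{-n}$. Consequently
\[
\sum_{n>N_1}|a_n|\gamma^n\le \sinh(M)\frac{(\gamma/R)^{N_1+1}}{1-\gamma/R}\le T,
\]
where the last inequality is condition~4.

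Adding this tail to the head bound of condition~3 gives $M_\lambda(\gamma)\le 1$. Moreover, $H_\lambda^{-1}$ is holomorphic on a neighbourhood of $\gamma\overline{\D}$, because $\gamma<R$. Now \cref{thm:mixed-krivine} yields $K_G\le\pi/(2\gamma)$.

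I expect no genuine obstacle, since every step is a transcription of the single-scheme argument. The only delicate point is the branch-identification in the second step. There I would argue via the identity theorem: both $\sin\circ F^{-1}$ and $H_\lambda^{-1}$ invert $H_\lambda$ near $0$ and vanish at $0$, so they coincide on a neighbourhood of the origin. Since $R\D$ is connected, $\sin\circ F^{-1}$ is then the analytic continuation of the same branch.
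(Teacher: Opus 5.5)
Your proposal is correct and follows the paper's proof step for step. The paper likewise applies Rouch\'e on $|w|=M$, bounding the tail of $F(w)-w$ by Cauchy's estimate. It then uses $H_\lambda^{-1}=\sin\circ F^{-1}$ together with $|\sin z|\le\sinh|z|$, applies the Cauchy tail estimate at radius $R$, and concludes with \cref{thm:mixed-krivine}. Your additional checks ($F(0)=0$, $H_\lambda'(0)=F'(0)\neq0$, and the branch identification) make explicit details the paper leaves implicit.
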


\begin{proof}
By the finite perturbation bound,
\[
\sup_{|w|=M}|F(w)-w|<M-R.
\]
Rouch\'e's theorem therefore implies that $F^{-1}$ is holomorphic on $R\mathbb D$ and satisfies
\[
|F^{-1}(\zeta)|<M
\qquad (|\zeta|<R).
\]
Since
\[
H_\lambda^{-1}(\zeta)=\sin(F^{-1}(\zeta)),
\]
we get
\[
|H_\lambda^{-1}(\zeta)|
<
\sinh(M)
\qquad (|\zeta|<R).
\]
Cauchy's estimate gives
\[
\sum_{n>N_1}|a_n|\gamma^n
\le
\sinh(M)\frac{(\gamma/R)^{N_1+1}}{1-\gamma/R}
\le T.
\]
Combining this with the finite inverse-majorant head gives
\[
M_\lambda(\gamma)
=
\sum_{n\ge1}|a_n|\gamma^n
\le
(1-T)+T
\le 1.
\]
The conclusion now follows from \cref{thm:mixed-krivine}.
\end{proof}

\cref{thm:finite-mixed-function} can be used in conjunction with \texttt{Arb} to explicitly prove $\gamma$-admissibility of mixed rounding schemes. 

%% file: paper/ai_guided_search.tex

\cref{thm:pure-h3} gives us a conceptual improvement over Krivine's bound, but uses a suboptimal rounding scheme. In order to effectively explore the space of rounding schemes, we need an evaluator that accepts a rounding scheme as input and automatically outputs the best \(K_G\) bound. We implement this using \cref{thm:finite-mixed-function}, by checking that all the conditions in the statement are satisfied. 

Equipped with this evaluator, we can now perform an AI-guided search over the vast space of rounding schemes. The search consists
of the AI agent optimizing a rounding scheme template, using the obtained $K_G$ bound as numerical feedback for iteration.  The AI-guided search discovers our best bound on $K_G$, which we state below.

Let \(\mathrm{He}_k\) denote the $k$-th orthonormal probabilist's Hermite polynomial.

\begin{theorem}[Explicit mixed degree-\(9\) improvement]
\label{thm:main-explicit}
Let
\[
\begin{aligned}
P_9(x)
&=
-0.06924464693156676\,\mathrm{He}_1(x) 
-0.08372969497289807\,\mathrm{He}_3(x) \\
&\hspace{1em}
-0.034508730002003336\,\mathrm{He}_5(x)  
-0.030311217605625884\,\mathrm{He}_7(x) \\
&\hspace{1em}
-0.010654216877362276\,\mathrm{He}_9(x),
\\[0.4em]
Q_9(x)
&=
-0.06471449673854089\,\mathrm{He}_1(x)
+0.06896712471559421\,\mathrm{He}_3(x) \\
&\hspace{1em}
+0.006235636148918555\,\mathrm{He}_5(x) 
+0.0015193974910099376\,\mathrm{He}_7(x) \\
&\hspace{1em}
-0.0026674434971127943\,\mathrm{He}_9(x),
\end{aligned}
\]
and let
\[
    p=0.2733602555336593.
\]
Define 
\[
    f_{P_9}(x_1,x_2)=\sgn(x_2-P_9(x_1)),
    \qquad
    g_{Q_9}(x_1,x_2)=\sgn(x_2-Q_9(x_1)).
\]
Consider the rounding scheme which uses the hyperplane pair
\((\sgn(x_2),\sgn(x_2))\) with probability \(1-p\), and the Hermite-threshold
pair \((f_{P_9},g_{Q_9})\) with probability \(p\).  Let
\[
    \gamma = \rho_* + 2.987\cdot 10^{-5},
    \qquad
    \rho_*=\log(1+\sqrt2).
\]
Then the mixed scheme is $\gamma$-admissible, and therefore
\[
    K_G
    \le
    \frac{\pi}{2\gamma}
    \le
    \frac{\pi}{2\log(1+\sqrt2)}
    -
    6.039\cdot 10^{-5}.
\]
\end{theorem}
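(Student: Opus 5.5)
The plan is to apply \cref{thm:finite-mixed-function} with $L=2$: take $H_1(t)=\arcsin t$ (the hyperplane pair) with weight $\lambda_1=1-p$, and $H_2=H_{f_{P_9},g_{Q_9}}$ with weight $\lambda_2=p$. Then
\[
F(w)=H_\lambda(\sin w)=(1-p)\,w+p\,H_2(\sin w),
\qquad\text{so}\qquad
F(w)-w=p\bigl(H_2(\sin w)-w\bigr).
\]
The only scheme that has to be computed is therefore the degree-$9$ threshold pair. Its Taylor coefficients come from \cref{cor:coefficient-formula-combined}, using the explicit functions $\ThresholdCoeff_b$ of \cref{lemma:threshold-coefficient-reduction}. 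Each coefficient is a one-dimensional Gaussian integral of $\HermBasis_a(x)$ times $e^{-P_9(x)^2/2}\HermBasis_{b-1}(P_9(x))$ (or $\operatorname{erf}$ when $b=0$). These integrals will be enclosed in \texttt{Arb} by rigorous quadrature, truncating the Gaussian tails with explicit bounds.

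The four hypotheses are verified in order.

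\textbf{Condition 1.} Since $F(w)-w=p(H_2(\sin w)-w)$, I reuse \cref{lemma:strip-bound}, which holds for any measurable pair, together with the disk-to-strip facts from the proof of \cref{lemma:sine-coordinate-bound}. With $r_0=1.1$ this gives $|F(w)-w|\le p(14.19+1.1)$, so one may take $B\approx 4.2$. The weight $p$ helps here.

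\textbf{Condition 2.} Compose the coefficients of $H_\lambda$, computed through degree $N_0\approx 150$, with the truncated $\sin$ series. Choose $M,R$ with $\gamma<R<M<1.1$, say $M=0.98$ and $R=0.975$ as before. Then certify
\[
\sum_{n\le N_0}|d_n|M^n+B\,\frac{(M/r_0)^{N_0+1}}{1-M/r_0}<M-R.
\]
Because of the factor $p$, I expect the margin to be comfortable, as it was for the $\Hermite$ scheme.

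\textbf{Condition 3.} Compute the inverse coefficients $a_{2k+1}$ by \cref{lemma:lagrange-inverse-coefficients} from the odd factor of $H_\lambda$. Here $H_\lambda$ is odd because both constituent pairs are odd: $P_9$ and $Q_9$ contain only odd Hermite polynomials. Then certify $\sum_{n\le N_1}|a_n|\gamma^n\le 1-T$ at $\gamma=\rho_*+2.987\cdot10^{-5}$ with $N_1$ around $220$–$300$.

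\textbf{Condition 4.} Check numerically that $\sinh(M)(\gamma/R)^{N_1+1}/(1-\gamma/R)\le T$. Since $\gamma/R\approx 0.904$, taking $N_1\approx 250$ already drives this below $10^{-10}$, so $N_1$ is chosen to fit the head margin.

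The conclusion then follows from \cref{thm:finite-mixed-function}. The last step is elementary:
\[
\frac{\pi}{2\rho_*}-\frac{\pi}{2\gamma}=\frac{\pi\cdot 2.987\cdot10^{-5}}{2\rho_*\gamma}>6.039\cdot10^{-5},
\]
because $2\rho_*\gamma\approx 0.7769$.

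The main obstacle is Condition 3. The head margin $T$ is tiny, since the improvement is about $3\cdot10^{-5}$ in $\gamma$. The certified sum must therefore be computed with enough precision that interval widths, after Lagrange inversion through high degree, stay below something like $10^{-9}$. Inversion amplifies coefficient errors, so I would first compute the Hermite integrals at high working precision (several hundred bits) with adaptive subdivision. I would also exploit the fact that $\HermBasis_a(x)\,\ThresholdCoeff_b(P_9(x))$ decays like a Gaussian in $P_9(x)$ for large $|x|$ to truncate the domain rigorously. If widths still blow up, I would increase $N_1$ and the precision jointly until the enclosure lies strictly below $1-T$.
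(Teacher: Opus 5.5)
Your plan is the paper's proof: it checks the four hypotheses of \cref{thm:finite-mixed-function} in \texttt{Arb} with $r_0=1.1$, $M=0.98$, $R=0.975$, $B=15.3$, $N_0=106$, $N_1=192$, obtaining $\sum_{n\le 106}|d_n|M^n<0.0034656$, inverse-majorant head $\le 0.9999999507019$, and Cauchy remainder $\le 4.1422\cdot10^{-8}\le T=4.9298\cdot10^{-8}$; your sharper constant $B\approx p(14.19+1.1)$, obtained from $\arcsin(\sin w)=w$ on the disk, is valid but not needed. One small slip in the last step: it is $\rho_*\gamma\approx 0.7769$, so $2\rho_*\gamma\approx 1.5537$, and then $\pi\cdot 2.987\cdot10^{-5}/(2\rho_*\gamma)\approx 6.0398\cdot10^{-5}>6.039\cdot10^{-5}$, as required.
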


\begin{proof}
The displayed scheme is a mixed Krivine scheme, as defined in
\cref{thm:mixed-krivine}.  Therefore, it is sufficient to check that the criteria as described in 
\cref{thm:finite-mixed-function} hold. We use the following values of the parameters required by \cref{thm:finite-mixed-function},
\[R = 0.975,\; M = 0.98,\; r_0 = 1.1, \qquad N_0 = 106,\; N_1 = 192, \qquad \gamma = \rho_* + 2.987\cdot 10^{-5}. \]
We now verify each of the criteria.
\begin{enumerate}
    \item \textbf{Sine-coordinate analytic bound}. Let $B = 15.3$. By~\cref{lemma:sine-coordinate-bound}, we have $|F(w) - w|  < B$ for $|w| < r_0$, thus this condition is satisfied.  

    \item \textbf{Finite perturbation bound}. We use \cref{cor:coefficient-formula-combined} to compute the Taylor coefficients of
\(H\) through degree \(N_0 = 106\) with
\texttt{Arb}, and compose with the Taylor expansion of $\sin$ to obtain $d_i$. 
This gives the finite head estimate
\[
\sum_{n\le N_0}|d_n|0.98^n<0.0034656.
\]
By \cref{lemma:sine-coordinate-bound} and Cauchy's estimate, taking the Cauchy radius to $r_0 = 1.1$ from below gives
\[
\sum_{n>N_0}|d_n|0.98^n
\le
15.3\frac{(0.98/1.1)^{N_0+1}}{1-0.98/1.1}
< 0.0006013 .
\]
Therefore
\[
\sup_{|w|=0.98}|F(w)-w|
\le
\sum_{n\ge1}|d_n|0.98^n
< 0.0034656 + 0.0006013 
< 0.005 = M - R,
\]
as desired. 

\item \textbf{Finite inverse-majorant head}. Let $T=4.9298 \cdot 10^{-8}$. A computation in \texttt{Arb} gives the head, \[\sum\limits_{n \le N_1} |a_n| \gamma^n \le 0.9999999507019  <  1 - T \]

\item \textbf{Cauchy remainder domination}. By Cauchy estimates, the remainder is at most 
\[
    \sinh(M)\frac{(\gamma/R)^{N_1+1}}{1-\gamma/R}
    \le 4.1422 \cdot  10^{-8} \le 4.9298 \cdot  10^{-8} = T.
\]
\end{enumerate}
Thus, by~\cref{thm:finite-mixed-function}, we have
\[
    K_G
    \le
    \frac{\pi}{2\gamma}
    \le
    \frac{\pi}{2\log(1+\sqrt2)}
    -
    6.039\cdot 10^{-5}.
\]
\end{proof}

\input{figures/best_deg9_scheme}

\subsection{Search Procedure}
\label{sec:explicit-improvement}

\label{sec:explicit-search}

We now describe the search. The search was implemented using an AI coding agent (Claude Code, Opus 4.7 \cite{anthropic:2026:opus4.7}). 

\paragraph{Search template.} For a fixed odd degree \(d\), we search over pairs of odd Hermite thresholds
\[
    P(x)=\sum_{\substack{1\le k\le d\\ k\ \mathrm{odd}}} a_k\,\mathrm{He}_k(x),
    \qquad
    Q(x)=\sum_{\substack{1\le k\le d\\ k\ \mathrm{odd}}} b_k\,\mathrm{He}_k(x),
\]
which define planar partitions
\[
    f_P(x_1,x_2)=\sgn(x_2-P(x_1)),
    \qquad
    g_Q(x_1,x_2)=\sgn(x_2-Q(x_1)).
\]
The search also allows mixtures with other rounding schemes. Thus, a candidate is specified by the
coefficient vectors \((a_k)\), \((b_k)\), together with a mixing weight vector $p$.

\paragraph{Landscape.}  The search variables for this optimization are the coefficient vectors $(a_k), (b_k)$, and the simplex weights $p \in \simplex^{k-1}$. The
  end-to-end map from these parameters to the bound on $K_G$ consists of three
  sources of non-smoothness: 
  (1) the partitions $f_P, g_Q$ are sign
  functions of polynomial inputs and so depend on the coefficients
  discontinuously;
  (2) the inverse-majorant
  $M(\gamma) = \sum_n |a_n|\,\gamma^n$ takes coefficient-wise
  absolute values; and 
  (3) $K_G$ itself is defined implicitly as a function of the
  unique root of $M(\gamma) = 1$. 
  End-to-end backpropagation through this mapping is highly non-trivial. 

  \paragraph{Search strategy.} The AI agent proposes a strategy that decouples the search into two stages with clean gradient structures. Stage~1 fits a single
  partition pair $(P, Q)$ jointly with its hyperplane mixing weight by using the
  Adam optimizer on the surrogate
  $(a_k, b_k, p) \mapsto M(\gamma)$,
  which is piecewise smooth in the
  coefficients. Stage~2 freezes
  the resulting trained candidates and maximises $\gamma(p)$
  over the unit simplex via SLSQP (Sequential Least Squares Programming \cite{kraft1988software}), which is purpose-built for
  low-dimensional nonlinearly constrained problems with implicit
  objectives. We describe the stages in detail below. 



\paragraph{Stage 1: training individual candidates.}
For each degree \(d\), we initialize the Hermite coefficients
\((a_k),(b_k)\) randomly and optimize a floating-point surrogate for the
inverse-majorant objective.  The surrogate is obtained by computing a truncated
Taylor series for the correlation function, inverting this series numerically,
and evaluating the truncated majorant \(M(\gamma)\) at a target value
\(\gamma>\rho_*\).  In practice we minimize this surrogate loss using Adam in
\texttt{float64}.  The hyperplane weight is trained simultaneously through a
logit parametrization, so that the search can decide how much mass to place on
the non-hyperplane partition.

For each degree we run many independent random seeds and retain the best
trained candidates according to their estimated value of \(\gamma\).  These candidates form a finite dictionary for the second stage.

\paragraph{Stage 2: optimizing the mixture weights.}
After Stage 1, the candidates are fixed.  The remaining problem is finite
dimensional: choose a probability vector on the dictionary so as to maximize
the estimated inverse-majorant root \(\gamma\).  Since the mixed correlation
function is linear in the mixture weights, each evaluation of a proposed mixture
only requires forming the mixed Taylor series, inverting it, and evaluating the
majorant.  We solve this simplex-constrained optimization problem using SLSQP
from many Dirichlet-random initializations, keeping the best local optimum.

The output of Stage 2 is a small explicit mixture.  For the final degree-\(9\)
certificate, this reduces to a two-candidate mixture: the hyperplane scheme and one
degree-\(9\) Hermite-threshold pair.  The lower-degree rows in
\cref{sec:explicit-degree-comparison} are obtained by the same procedure with
the maximum odd degree fixed to \(d=3,5,7\), respectively.





Here, we note that the search produces only candidates. Each are then formally certified in \texttt{Arb} using \cref{thm:finite-mixed-function}. 

\paragraph{Reproducibility}
The coefficient vectors, mixture weights, Arb outputs, and scripts are stored in the project repository at \url{https://github.com/trishullab/grothendieck-rounding-repo}.

\subsection{Discussion}

\paragraph{How do higher degrees help?}
\label{sec:explicit-degree-comparison}

The same procedure was run for maximum odd Hermite degree
\(d\in\{3,5,7,9\}\).  Each row below is an independent search with degree
\(d\) imposed from the start; the lower-degree rows are not truncations of the
degree-\(9\) solution.  The coefficient vectors are listed in
\cref{app:explicit-mixture-coefficients}.
\begin{center}
\renewcommand{\arraystretch}{1.2}
\small
\setlength{\tabcolsep}{8pt}
\begin{tabular}{cc}
\hline
\(d\)
&
\(K_G\) improvement
\\
\hline
\(3\)
&
\(2.837\cdot10^{-5}\)
\\
\(5\)
&
\(5.571\cdot10^{-5}\)
\\
\(7\)
&
\(5.955\cdot10^{-5}\)
\\
\(9\)
&
\(6.039\cdot10^{-5}\)
\\
\hline
\end{tabular}
\end{center}
The table shows a clear diminishing return from having additional degrees. 
Passing from degree
\(3\) to degree \(5\) gives the largest gain, with subsequent degrees only offering modest additional
increase.  This suggests that most of the benefit in this search family is
already captured by low Hermite degrees.

\paragraph{Why does mixing improve the Grothendieck bound?}
\label{sec:discussion-mixing}

One of the findings of our work is that a randomized mixture of rounding schemes can
sometimes give a \emph{strictly} better Grothendieck bound than the component schemes.
Intuitively, different schemes can perform well on different difficult configurations of
vectors, so the right search space is not just the set of individual partitions, but also the
simplex of mixtures of promising partitions.

The mechanism by which this happens is visible in the inverse series. For a two-scheme mixture, the correlation
coefficients are linear in the mixing parameter,
$b_{2m+1}(p)=(1-p)b^{(0)}_{2m+1}+p b^{(1)}_{2m+1}$. The inverse coefficients in
$H_p^{-1}(\zeta)=\sum_{m\ge 0}a_{2m+1}(p)\zeta^{2m+1}$, however, are not linear in $p$.
Comparing coefficients in $H_p(H_p^{-1}(\zeta))=\zeta$ determines them recursively, so
$a_{2m+1}(p)$ is a rational function of $b_1(p),b_3(p),\ldots,b_{2m+1}(p)$, with denominators
involving powers of $b_1(p)$. Thus, wherever $b_1(p)=H_p'(0)\neq 0$, the functions
$a_{2m+1}(p)$ are real analytic in $p$. Since the bound is controlled by
\[
    M_p(\gamma)=\sum_{m\ge 0}|a_{2m+1}(p)|\gamma^{2m+1}\le 1,
\]
by continuity a sign change in an inverse coefficient can be converted into an actual cancellation, removing
one term from the absolute-value series.

A simple example is obtained by mixing the hyperplane scheme with the perturbation scheme
\[
    f(x,y)=\operatorname{sgn}(y-0.05\,\mathrm{He}_3(x)),\qquad
    g(x,y)=\operatorname{sgn}(y+0.05\,\mathrm{He}_3(x)),
\]
With $p$ denoting the perturbation weight, the numerical
one-parameter optimum occurs at the root $a_7(p)=0$, namely $p=0.798782\ldots$. The resulting
bound is
\[
    \frac{\pi}{2\gamma}=1.78219809,
\]
which is smaller than both the hyperplane scheme value $1.78221398$ and the pure perturbation scheme value $1.78225030$. This reflects the broader empirical pattern in our mixed searches: the
best improvements occur when the two endpoint schemes have opposite signs for $a_7$, so that
continuity forces a mixture for which the $a_7$-term vanishes. It remains an interesting
question whether a better mixture can cancel an earlier coefficient, such as $a_5$, while
keeping the lower-order $a_1$ and $a_3$ terms essentially unchanged.

%% file: figures/best_deg9_scheme.tex
\begin{figure}
    \centering
    \includegraphics[width=0.9\linewidth]{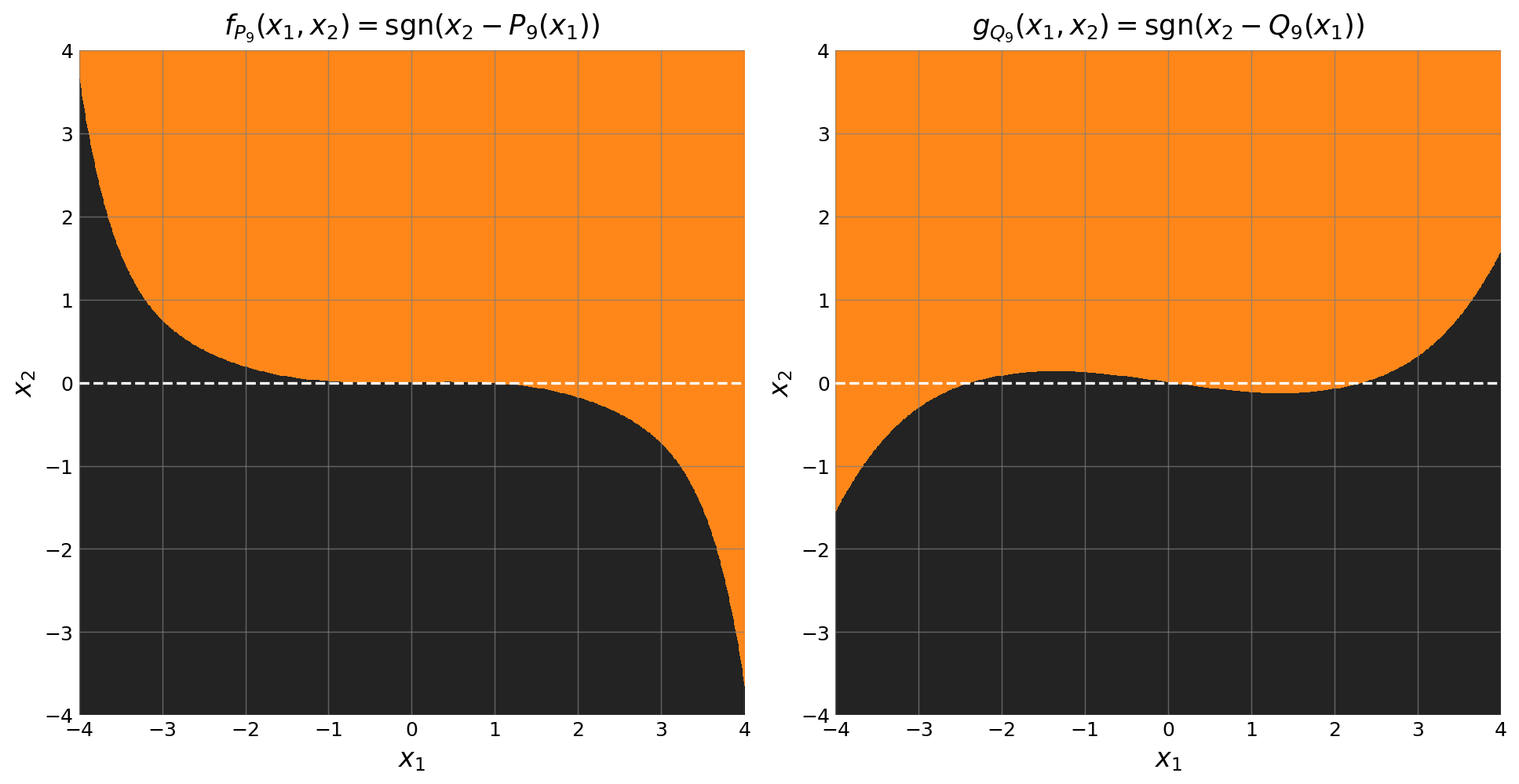}
    \caption{The degree 9 partitions. The resulting scheme is mixed with the hyperplane scheme to obtain our bound.}
    \label{fig:best-deg9-scheme}
\end{figure}

%% file: paper/tiger_partitions.tex
\section{K\"onig's Bilinear Form and Tiger Partitions}
\label{sec:tiger-partitions}
In this section, we define K\"onig's bilinear form, which is closely associated with the Grothendieck constant. 

\begin{definition}[K\"onig's bilinear form]\label{def:konig}
For odd partitions $f,g:\mathbb R^k\to\{-1,1\}$, define
\begin{equation}
  \widetilde{\mcal{B}}_K(f,g)
  := \int_{\mathbb R^k}\int_{\mathbb R^k}
       f(x)g(y)\exp\left(-\frac{\|x\|^2+\|y\|^2}{2}\right)
       \sin\langle x,y\rangle\,dx\,dy .
  \label{eq:konig-bilinear-form}
\end{equation}
We use the normalized value
\begin{equation}
  \mcal{B}_K(f,g):=\frac{\widetilde{\mcal{B}}_K(f,g)}{(\sqrt{2}\pi)^k},
\end{equation}
so that the hyperplane partition $f_0(x)=g_0(x)=\operatorname{sign}(x_0)$ (with $x_0$ the first coordinate of $x$) has
\begin{equation}
  \mcal{B}_K(f_0,g_0)=\frac{2}{\pi}\log(1+\sqrt 2).
  \label{eq:krivine-cK}
\end{equation}
\end{definition}

For an odd partition \(f:\R^k\to\{-1,1\}\), define its
standard-Gaussian rescaling by
\(
    f^\sharp(x):=f\left(x/\sqrt{2}\right).
\)
A change of variables gives
\(
    \frac{H_{f^\sharp,g^\sharp}(i)}{i}
    =
    \frac{\pi}{2}\mcal B_K(f,g).
\)
The study of $\mcal{B}_K$ in connection to the Grothendieck inequality is motivated by the following result from~\cite{braverman2011grothendieckconstantstrictlysmaller}, that we paraphrase:

\begin{theorem}
\label{thm:BK_grothendieck}
    Let $(f,g)$ be odd partitions, and assume $\mcal{B}_K (f, g)$ {\scriptsize $> \tfrac{2}{\pi}\log(1 + \sqrt{2})$}. For $0 < p < 1$, consider the mixed scheme given by choosing the $(f^\sharp, g^\sharp)$-scheme with probability $p$ and the hyperplane scheme with probability $1-p$. Then for all sufficiently small $p$, there exists some $\gamma_p > \log(1 + \sqrt{2})$ such that the mixed scheme is $\gamma_p$-admissible. In particular, this implies \(K_G \le \frac{\pi}{2\gamma_p} < \tfrac{\pi}{2\log(1 + \sqrt{2})}.\) 
\end{theorem}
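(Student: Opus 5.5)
The plan is a first-order perturbation of Krivine's hyperplane scheme, controlled by the holomorphic estimates already developed in \cref{sec:bounding-remainder}. Write $H_1:=H_{f^\sharp,g^\sharp}$ and $H_p:=(1-p)\arcsin+pH_1$ for the mixed correlation function of \cref{thm:mixed-krivine}. For $p=0$ we have $H_0^{-1}=\sin$, whose coefficients are $a_{2k+1}(0)=(-1)^k/(2k+1)!$. Hence $M_0(s)=\sinh s$, and $M_0(\rho_*)=1$ exactly at $\rho_*=\log(1+\sqrt2)$. It therefore suffices to show that $M_p(\rho_*)<1$ for all small $p>0$. Continuity of $s\mapsto M_p(s)$ then produces some $\gamma_p>\rho_*$ with $M_p(\gamma_p)\le1$, and \cref{thm:mixed-krivine} gives the conclusion.

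\textbf{Step 1: holomorphic control uniform in $p$.} Pass to sine coordinates and set $F_p(w):=H_p(\sin w)$. Then
\[
F_p(w)-w=p\bigl(H_1(\sin w)-w\bigr).
\]
By \cref{lemma:sine-coordinate-bound}, which holds for arbitrary odd $f,g$ since it only uses \cref{lemma:strip-bound}, this difference is at most $15.3\,p$ on $|w|\le1.1$. Applying \cref{lemma:rouche-inverse-disk} with, say, $M=1$ and $R=0.95>\rho_*$, for $p$ small enough $F_p^{-1}$ is holomorphic on $R\D$ and bounded by $M$. Hence $H_p^{-1}=\sin\circ F_p^{-1}$ is holomorphic and uniformly bounded on $R\D$.

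\textbf{Step 2: first-order expansion.} Differentiating the identity $H_p(H_p^{-1}(\zeta))=\zeta$ in $p$ at $p=0$ gives
\[
H_p^{-1}(\zeta)=\sin\zeta+p\,\varphi(\zeta)+p^2E_p(\zeta),
\qquad
\varphi(\zeta)=\cos\zeta\,\bigl(\zeta-H_1(\sin\zeta)\bigr).
\]
Here $E_p$ is uniformly bounded on a slightly smaller disk $R'\D$ with $R'>\rho_*$, by the Cauchy integral formula in $p$ applied to the bounds of Step 1; the same argument with $p$ taken complex and small is convenient for this. Write $a_n(p)=a_n(0)+p\varphi_n+p^2e_n(p)$. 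The $p^2$ contribution satisfies
\[
\sum_n |e_n(p)|\,\rho_*^n=O(1)
\]
by Cauchy estimates on $R'\D$.

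\textbf{Step 3: the sign-stable head and the tail.} This is the main obstacle, because $a_n(0)=\pm1/n!$ decays far faster than the Cauchy bounds on $\varphi_n$. Sign stability of $a_n(p)$ therefore cannot hold uniformly in $n$. I would split at a cutoff $N$.

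For odd $n\le N$, the sign of $a_n(p)$ equals $\operatorname{sgn} a_n(0)=(-1)^k$ once $p$ is small enough, depending on $N$. On this range the first-order contribution is exactly $p\sum_{k}(-1)^k\varphi_{2k+1}\rho_*^{2k+1}$.

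For $n>N$, bound crudely by
\[
|a_n(p)|\le|a_n(0)|+p|\varphi_n|+p^2|e_n(p)|,
\]
and note that $\sum_{n>N}|\varphi_n|\rho_*^n\le\varepsilon$ once $N$ is large, since $\varphi$ is holomorphic on $R'\D$. Therefore
\[
M_p(\rho_*)\le \sinh\rho_*+p\bigl(D+2\varepsilon\bigr)+O(p^2),
\qquad
D:=\sum_{k\ge0}(-1)^k\varphi_{2k+1}\rho_*^{2k+1}=\frac{\varphi(i\rho_*)}{i}.
\]

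\textbf{Step 4: identifying $D$.} Since $\sin(i\rho_*)=i\sinh\rho_*=i$ and $\cos(i\rho_*)=\cosh\rho_*=\sqrt2$,
\[
D=\sqrt2\,\Bigl(\rho_*-\frac{H_1(i)}{i}\Bigr)=\sqrt2\,\Bigl(\rho_*-\frac{\pi}{2}\mcal B_K(f,g)\Bigr),
\]
using the identity $H_{f^\sharp,g^\sharp}(i)/i=\frac{\pi}{2}\mcal B_K(f,g)$ stated just before the theorem. Note that $i$ lies in the strip $S$, so $H_1(i)$ is defined. The hypothesis $\mcal B_K(f,g)>\frac{2}{\pi}\rho_*$ gives $D<0$.

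Choose $\varepsilon<|D|/4$, then $N$, then $p$ small. This yields $M_p(\rho_*)<1$, and by continuity some $\gamma_p>\rho_*$ with $M_p(\gamma_p)\le1$. Holomorphy of $H_p^{-1}$ on a neighbourhood of $\gamma_p\overline{\D}$ follows from Step 1, since $\gamma_p<R$ for $p$ small. \cref{thm:mixed-krivine} then gives $K_G\le\pi/(2\gamma_p)<\pi/(2\rho_*)$.
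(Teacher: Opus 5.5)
Your argument is correct. It is essentially the first-order mixing argument of Braverman--Makarychev--Makarychev--Naor that the paper cites for this result without giving a proof, here run through the paper's sine-coordinate and Rouch\'e machinery, with sign-stability of the alternating coefficients below a cutoff, Cauchy control of the tail, and evaluation of the first-order term at $i\rho_*$ giving $D=\sqrt2\bigl(\rho_*-\tfrac{\pi}{2}\mathcal{B}_K(f,g)\bigr)$, which is the K\"onig-form criterion.

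One correction: \cref{lemma:strip-bound} and \cref{lemma:sine-coordinate-bound} are stated for partitions of $\mathbb{R}^2$. For partitions of $\mathbb{R}^k$ the strip bound becomes $\frac{\pi}{2}\bigl(\frac{|1-z^2|}{1-a^2}\bigr)^{k/2}$, so the constant $15.3$ must be replaced by a $k$-dependent one. This changes only how small $p$ must be.
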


Therefore, it is natural to consider whether maximizers of K\"onig's bilinear form also translate to good improvements on upper bounds on $K_G$. Below, we show numerical evidence against this. 

Following~\cite{braverman2011grothendieckconstantstrictlysmaller}, define the best-response operator
\begin{equation}
  \sigma(f)(y)
  := \operatorname{sign}\left(\int_{\mathbb R^k}
        f(x)\exp\left(-\frac{\|x\|^2}{2}\right)
        \sin\langle x,y\rangle\,dx\right),
  \label{eq:sigma-operator}
\end{equation}
with an arbitrary convention on the null set where the argument vanishes.  Since $\mcal{B}_K$ is linear in each argument, $\sigma(f)$ is the pointwise best response to $f$, so any maximizing pair should satisfy $\sigma(f_{\max})=g_{\max}$ and $\sigma(g_{\max})=f_{\max}$. \cite{braverman2011grothendieckconstantstrictlysmaller} numerically found examples of these maximizing pairs by repeatedly applying the $\sigma$ operator to a series of initial partitions, and found that they all converged (with the exception of the hyperplane fixed point) to the two-cycle partitions $f_\infty$ and $g_\infty$ with an interesting striped pattern (see \cref{fig:tiger-partitions}), which they named \textit{tiger partitions}.

\input{figures/figure_tiger_partitions}

They posed an open question as to whether the partitions $(f_\infty,g_\infty)$ maximize $B_K$, and whether the corresponding $(f_\infty^\sharp,g_\infty^\sharp)$ are the optimal partitions for the Grothendieck constant. We show numerical evidence the latter is false for $k=2$.

To replicate the \cite{braverman2011grothendieckconstantstrictlysmaller} generation procedure, we started from $1000$ random odd partitions from a variety of odd partition families, and repeatedly applied $\sigma$ on a finite $[-8,8]^2$ grid, stopping when the iterates converged to a fixed point or entered a stable oscillation/two-cycle.  For each terminal pair $(f,g)$ we computed both the normalized K\"onig value $\mcal{B}_K(f,g)$ and the upper bound on $K_G$ obtained from the rescaled scheme ($\tfrac{\pi}{2\gamma}$). 


  \begin{table}[h]
  \centering
  \begin{tabular}{lrrrr}
  \hline
  Quantity & Min & Max & Mean & Median \\
  \hline
  $\mcal{B}_K(f,g)$ & $0.561910$ & $0.561923$ & $0.561917$ & $0.561917$ \\
  $K_G\;\;(\le \frac{\pi}{2\gamma})$ & $1.7921$ & $1.7938$ & $1.7931$ & $1.7930$ \\
  \hline
  \end{tabular}
  \caption{Aggregate statistics over random odd initial partitions after
  repeated $\sigma$-iteration}
  \label{tab:tiger-summary-statistics}
  \end{table}

The table shows that repeated $\sigma$-iteration reliably produces tiger partitions with high
K\"onig value: the resulting values concentrate around
\(\mcal{B}_K(f,g)\approx 0.5619\), above the hyperplane baseline
\(
    \frac{2}{\pi}\log(1+\sqrt 2)\approx 0.5611.
\)
However, the corresponding rescaled partitions perform poorly as Grothendieck rounding schemes. The 
upper bounds on \(K_G\) are typically around \(1.7930\), worse than Krivine's bound
\(
    \frac{\pi}{2\log(1+\sqrt 2)}\approx 1.7822.
\)
Thus, across all trials, the $\sigma$-generated tiger partitions improve K\"onig's bilinear form 
but the associated pair never improve the Grothendieck bound.

%% file: figures/figure_tiger_partitions.tex
\begin{figure}
    \centering
    \includegraphics[width=0.85\linewidth]{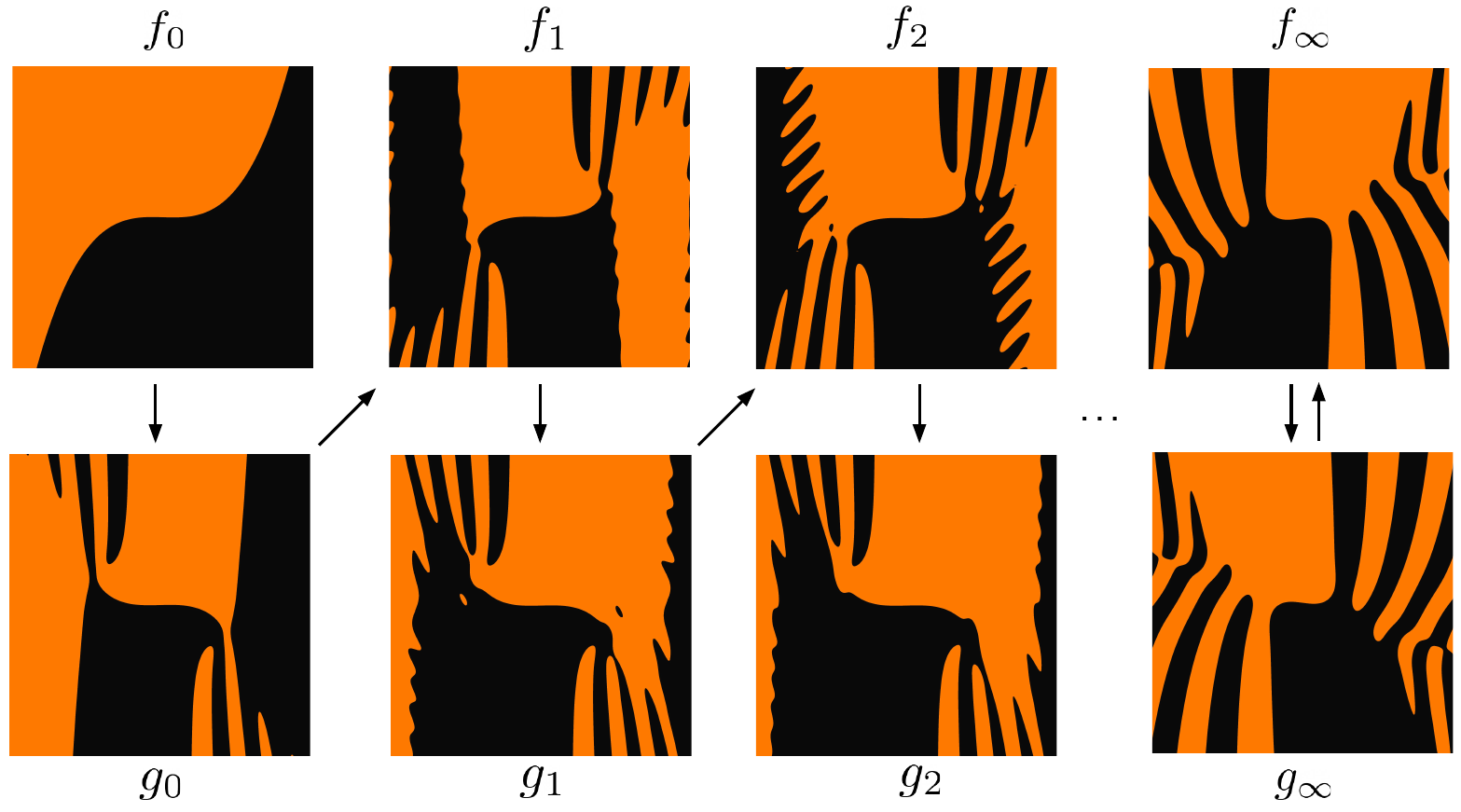}
    \caption{Iteratively applying the $\sigma$ operator to a random initial partition $f_0$. Converges to the two-cycle tiger partitions $(f_\infty, g_\infty)$, where $\sigma(f_\infty)=g_\infty$ and $\sigma(g_\infty)=f_\infty$.}
    \label{fig:tiger-partitions}
\end{figure}

%% file: paper/koenig.tex
\subsection{Better Space Partitions for K\"onig's Bilinear Form}
\label{sec:konig-not-sufficient}


\cref{sec:tiger-partitions} shows that a large K\"onig value need not automatically produce a better Krivine rounding scheme.  Nevertheless, K\"onig's form remains an interesting object because it isolates a clean problem for non-hyperplane partitions and captures the first step from \cite{Kon00,braverman2011grothendieckconstantstrictlysmaller}.
In this section, we show that K\"onig's bilinear form admits an explicit family of partitions whose normalized value in the limit of dimension significantly outperforms the hyperplane benchmark of 
\(
 \frac{2}{\pi}\log(1+\sqrt2) = 0.56109 \ldots
\)
The family keeps a two-dimensional active half-space, but rotates it by an affine function of the squared radius in an auxiliary $d$-dimensional Gaussian block:

\begin{theorem}\label{th:konignain}
    For every dimension $d$, there exist odd partitions $f_d,g_d: \mathbb{R}^{d+2} \rightarrow \{1,-1\}$, such that $\lim_{d \rightarrow \infty} \mcal{B}_K(f_d,g_d) \geq 0.59357\cdots$
\end{theorem}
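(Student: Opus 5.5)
The plan is to take an explicit ``rotating half-plane'' family, reduce $\mcal B_K$ to a one-dimensional kernel computation coupled to a radial signed measure, and pass to the $d\to\infty$ limit by a saddle-point/CLT argument.

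\textbf{Construction and reduction.} Write points of $\R^{d+2}$ as $(u,z)$ with $u\in\R^2$ and $z\in\R^d$. Set
\[
f_d(u,z)=\sgn\bigl(\ip{e(\theta(\|z\|^2))}{u}\bigr),\qquad g_d(u,z)=\sgn\bigl(\ip{e(\phi(\|z\|^2))}{u}\bigr),
\]
where $e(\alpha)=(\cos\alpha,\sin\alpha)$. The rotation angles are affine in a centered, rescaled radius:
\[
\theta(r)=\alpha_0+\alpha_1\,\frac{r-c_d}{\sqrt d},\qquad \phi(r)=\beta_0+\beta_1\,\frac{r-c_d}{\sqrt d},
\]
with $c_d$ the centering dictated by the saddle point below. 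Oddness is immediate, since $(u,z)\mapsto(-u,-z)$ fixes $\|z\|^2$ and flips the sign.

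In \eqref{eq:konig-bilinear-form}, expand $\sin(\ip{u}{u'}+\ip{z}{z'})=\sin\ip{u}{u'}\cos\ip{z}{z'}+\cos\ip{u}{u'}\sin\ip{z}{z'}$. For fixed $z,z'$, the $u$-integral of $f_d$ against the even factor $e^{-\|u\|^2/2}\cos\ip{u}{u'}$ vanishes, because $f_d$ is odd in $u$. Only the first term survives. For two half-planes at relative angle $\psi$, the inner $(u,u')$ integral is an explicit kernel $\kappa(\psi)$, normalized so that $\kappa(0)=\tfrac{2}{\pi}\log(1+\sqrt2)$. I would compute $\kappa$ in polar coordinates; the radial integral gives a closed form in $\psi$. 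This yields
\[
\mcal B_K(f_d,g_d)=\int\!\!\int \kappa\bigl(\theta(\|z\|^2)-\phi(\|z'\|^2)\bigr)\,d\nu_d(z,z'),
\]
where $\nu_d$ is the normalized signed measure $e^{-(\|z\|^2+\|z'\|^2)/2}\cos\ip{z}{z'}\,dz\,dz'/(\sqrt2\pi)^{d}$. This measure has total mass $1$, so the constant choice $\alpha_1=\beta_1=0$, $\alpha_0=\beta_0$ recovers the hyperplane value.

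\textbf{Radial pushforward and its limit.} Only the pushforward of $\nu_d$ under $(z,z')\mapsto(\|z\|^2,\|z'\|^2)$ matters. I would compute its two-variable Laplace transform with a Gaussian integral:
\[
\int\!\!\int e^{-s\|z\|^2/2-t\|z'\|^2/2}\,d\nu_d=\bigl((1+s)(1+t)+1\bigr)^{-d/2}\cdot 2^{d/2}.
\]
Substituting $s=\sigma/\sqrt d$, $t=\tau/\sqrt d$ and expanding the logarithm shows the following. After centering at $c_d$ and scaling by $\sqrt d$, the radial pair converges to a fixed bivariate law. Its Laplace exponent is a quadratic form in $(\sigma,\tau)$ with an off-diagonal term, and the correction term is cubic, of order $d^{-1/2}$. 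The law is typically a genuinely signed, Gaussian-type object, so I would track its ``covariance'' entries explicitly.

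Hence $\mcal B_K(f_d,g_d)$ converges to an explicit integral of $\kappa(\alpha_0-\beta_0+\alpha_1X-\beta_1Y)$ against this limiting law. Writing $\kappa$ via its Fourier series in $\psi$ reduces this to sums of terms of the form $e^{-Q(k\alpha_1,k\beta_1)/2}$. Optimizing over $(\alpha_0-\beta_0,\alpha_1,\beta_1)$ gives the constant $C$, and I would certify $C>0.59357$ with \texttt{Arb}, as in \cref{sec:numerical-computation-of-head}.

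\textbf{Main obstacle.} The hard step is making the convergence quantitative, uniformly enough to integrate against $\kappa$, which is bounded but only Lipschitz near the relevant angles. The Laplace transform controls the signed measure's mass but not its total variation, since cancellation is built into $\cos\ip{z}{z'}$. I would first try to handle this on the Fourier side. $\kappa(\alpha_1X-\beta_1Y+\text{const})$ is a superposition of characters $e^{ik(\alpha_1X-\beta_1Y)}$, and their $\nu_d$-integrals are exactly the transform above at imaginary $(\sigma,\tau)$. These are explicit and can be compared to the Gaussian limit term by term. Truncating frequencies at $|k|\le d^{1/3}$ would balance the cubic Taylor error $O(k^3/\sqrt d)$ against the smoothness tail of $\kappa$, giving the $O(d^{-1/3})$ rate in \cref{th:konigintro}. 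This is the step I expect to require the most care.
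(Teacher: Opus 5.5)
\textbf{There is a genuine gap.} With the $\sqrt d$ scaling you propose, the limit can never exceed the hyperplane value $\frac{2}{\pi}\log(1+\sqrt2)\approx 0.56110$.

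Your reduction to $\int\!\!\int\kappa\bigl(\theta(\|z\|^2)-\phi(\|z'\|^2)\bigr)\,d\nu_d$ with $\kappa(\psi)=\frac{2}{\pi}\arsinh(\cos\psi)$ is correct; it is the paper's reduced form after $u=z/\sqrt2$. The problem appears when you evaluate your Laplace transform on the imaginary axis. In each coordinate, the $\nu_d$-integral of $e^{i\xi_1(x^2-1/2)+i\xi_2(y^2-1/2)}$ equals $e^{-i(\xi_1+\xi_2)/2}\sqrt2\,\bigl((1-2i\xi_1)(1-2i\xi_2)+1\bigr)^{-1/2}$. This has modulus at most $1$ and logarithm $-\frac14(\xi_1-\xi_2)^2+O(|\xi|^3)$. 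For the $k$-th Fourier mode of $\kappa$ you have $\xi_1=k\alpha_1/\sqrt d$ and $\xi_2=-k\beta_1/\sqrt d$. So the $d$-fold product tends to $e^{-k^2(\alpha_1+\beta_1)^2/4}$, and your limit is
\[
\mathbb{E}\,\kappa\Bigl(\alpha_0-\beta_0+\tfrac{|\alpha_1+\beta_1|}{\sqrt2}\,G\Bigr),\qquad G\sim N(0,1).
\]
At the $\sqrt d$ scale the ``signed'' radial law is therefore a genuine (degenerate) Gaussian probability law. Averaging $\kappa$ against a probability measure gives at most $\max\kappa=\kappa(0)$. In the degenerate direction $\beta_1=-\alpha_1$, the cubic correction is $O(k^3/\sqrt d)$ and the limit is simply $\kappa(\alpha_0-\beta_0)$. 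Hence optimizing over $(\alpha_0-\beta_0,\alpha_1,\beta_1)$ returns exactly the hyperplane value, and an \texttt{Arb} certification of $0.59357$ would fail.

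\textbf{The missing idea:} the good direction is precisely the degenerate one, taken at a slower scale. Rotate the two half-planes oppositely; the paper's $g$ uses $-\sin\psi(v)$, so the relative angle is $2\beta+\alpha(Q_d(u)+Q_d(v))$. Along $\|z\|^2+\|z'\|^2$ the second cumulant under $\nu_d$ vanishes identically, and the first surviving term is cubic and purely imaginary. With $\xi_1=\xi_2=\xi$, the per-coordinate logarithm above is $\frac{2i}{3}\xi^3-\xi^4+O(\xi^5)$, which is $\log\Lambda(z)=\frac{16i}{3}z^3-16z^4+O(z^5)$ in the paper's normalization. The rotation rate must therefore be $\asymp d^{-1/3}$, not $d^{-1/2}$. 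With $\alpha=c\,d^{-1/3}$, the $m$-th mode multiplier tends to the unimodular phase $e^{i\tau m^3}$, $\tau=\frac{16}{3}c^3$, a cubic (periodic Airy) Fourier multiplier. This is not an average against a positive kernel, so $\frac{2}{\pi}\bigl(e^{-\tau\partial_\theta^3}\arsinh(\cos\cdot)\bigr)(2\beta)$ can exceed $\frac{2}{\pi}\log(1+\sqrt2)$. Maximizing $\frac{2}{\pi}\sum_{\ell\ge0}b_{2\ell+1}\cos\bigl(2(2\ell+1)\beta+\frac{16}{3}(2\ell+1)^3c^3\bigr)$ over $(c,\beta)$, where $b_{2\ell+1}$ are the cosine coefficients of $\arsinh(\cos\theta)$, is what yields $0.59357$.

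Correspondingly, the $O(d^{-1/3})$ error comes from the quartic term, of size $d\,(mc\,d^{-1/3})^4\asymp d^{-1/3}$ per mode. It does not come from balancing a frequency cutoff against a cubic error. Your Fourier-side plan for rigor (characters against $\nu_d$, explicit multipliers of modulus at most $1$, exponentially decaying coefficients of $\arsinh(\cos\cdot)$) is essentially the paper's argument, and it goes through once the relative sign and the scaling are fixed.
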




The construction is inspired by Krivine's 2023 paper \cite{krivine2023notegrothendiecksconstant}, but goes far beyond it.  Krivine considers a three-dimensional rotating half-space: two active Gaussian coordinates define the sign cut, while one auxiliary coordinate, through $H_2(x)=x^2-1$, controls the rotation angle.  \cite{krivine2023notegrothendiecksconstant}  then computes the corresponding K\"onig value in this three-dimensional setting. Our contribution is to introduce a general auxiliary dimension and analyze the asymptotic regime.  Here we keep Krivine's rotating-half-space mechanism but replace the single auxiliary coordinate by a radial observable in $\R^d$, derive an exact finite-$d$ Fourier formula, and pass to a cubic Fourier-multiplier limit as $d\to\infty$.

\subsection{The Higher-Dimensional Affine Radial Family}

Recall the definition of $\mcal{B}_K(f,g)$ for odd measurable partitions $f,g:\mathbb{R}^k \rightarrow \{-1,1\}$ from Definition \ref{def:konig}. 

\ignore{
For odd measurable partitions $f,g:\R^k\to\{-1,1\}$, write
\[
 H_{f,g}(t):=\frac{\pi}{2}\,\E[f(X)g(Y)],
 \qquad |t|<1,
\]
where $X,Y$ are standard Gaussian vectors with coordinatewise correlation $t$.  We use the normalized K\"onig value
\[
 \mathcal B_{\mathrm K}(f,g):=\frac{2}{\pi}\,\frac{H_{f,g}(i)}{i},
\]
where $H_{f,g}$ is analytically continued to $t=i$.  The active two-dimensional calculation leaves the fixed circle profile
\[
 \mathfrak a(\theta):=\arsinh(\cos\theta),
 \qquad \theta\in \R/2\pi\Z.
\]
Thus the hyperplane partition has normalized value
\[
 \mathcal B_{\mathrm K}(\mathrm{hp},\mathrm{hp})
 =\frac{2}{\pi}\mathfrak a(0)
 =\frac{2}{\pi}\log(1+\sqrt2).
\]

We now define the affine radial family.}

\begin{definition}[Affine radial family]
    
Fix $d\in\mathbb N$ and parameters $\alpha,\beta\in\R$.  Let
\[
 Q_d(u):=4\norm{u}_2^2-d,
 \qquad u\in\R^d,
\]
\[
 \psi_{d,\alpha,\beta}(u):=\beta+\alpha Q_d(u),
\]
and
\[
 \phi_{d,\alpha,\beta}(u,v)
 :=\psi_{d,\alpha,\beta}(u)+\psi_{d,\alpha,\beta}(v)
 =2\beta+\alpha\bigl(Q_d(u)+Q_d(v)\bigr).
\]

The affine radial family is $(f_{d,\alpha,\beta}, g_{d,\alpha,\beta})$ defined on $\R^{d+2}=\R^d\times\R^2$ parametrized as follows:

\[
 f_{d,\alpha,\beta}(u;x_1,x_2)
 :=
 \operatorname{sign}\bigl(x_1\cos\psi_{d,\alpha,\beta}(u)+x_2\sin\psi_{d,\alpha,\beta}(u)\bigr),
\]
\[
 g_{d,\alpha,\beta}(v;y_1,y_2)
 :=
 \operatorname{sign}\bigl(y_1\cos\psi_{d,\alpha,\beta}(v)-y_2\sin\psi_{d,\alpha,\beta}(v)\bigr).
\]
\end{definition}

Here $\beta$ is a constant rotation on each side, while $\alpha$ controls the radial dependence.  The factor $4$ in $Q_d$ is the one produced by the $t=i$ Gaussian--oscillatory kernel: writing auxiliary coordinates as $X_j=2u_j$ turns $H_2(X_j)=X_j^2-1$ into $4u_j^2-1$.

Conditioned on $u,v$, the active-plane correlation is $t\cos\phi_{d,\alpha,\beta}(u,v)$.  Evaluating at $t=i$ and integrating out the active plane gives the following reduced form.

\begin{definition}[Reduced affine radial K\"onig bilinear form]
Let
\[
 c_d:=\frac{2^{1+d/2}}{\pi^{d+1}},
\] and let
\[
 \mathfrak a(\theta)=\arsinh(\cos\theta),
 \qquad \theta\in\T.
\]
Define
\begin{equation}\label{eq:Kd-def}
\begin{aligned}
 \K_d(\alpha,\beta)
 &:=
 \mathcal B_{\mathrm K}(f_{d,\alpha,\beta},g_{d,\alpha,\beta}) \\
 &=
 c_d
 \int_{\R^d}\int_{\R^d}
 e^{-(\norm{u}_2^2+\norm{v}_2^2)}
 \cos\bigl(2\ip{u}{v}\bigr)
 \mathfrak a\bigl(\phi_{d,\alpha,\beta}(u,v)\bigr)
 \,du\,dv.
\end{aligned}
\end{equation}
\end{definition}

When $\alpha=\beta=0$ there is no rotation, and one recovers the hyperplane benchmark
\[
 \K_d(0,0)=\frac{2}{\pi}\log(1+\sqrt2).
\]


The limiting object is a Fourier multiplier on the circle. 

\begin{definition}[Cubic Fourier multiplier/periodic Airy propagator]\label{def:cubic-airy}
Let $q(\theta)=\sum_{m\in\mathbb Z}\widehat q(m)e^{im\theta}$ be a $2\pi$-periodic analytic function, with Fourier coefficients
\[
 \widehat q(m):=\frac{1}{2\pi}\int_0^{2\pi} q(\theta)e^{-im\theta}\,d\theta.
\]
For $\tau\in\R$ define
\[
 \bigl(U(\tau)q\bigr)(\theta)
 :=
 \sum_{m\in\mathbb Z}e^{i\tau m^3}\widehat q(m)e^{im\theta}.
\]
Equivalently, since $-\partial_\theta^3 e^{im\theta}=im^3e^{im\theta}$, we have
\[
 U(\tau)=e^{-\tau\partial_\theta^3},
\]
and defining $u(\tau,\theta):=(U(\tau)q)(\theta)$ solves
\[
 \partial_\tau u+\partial_\theta^3 u=0,
 \qquad
 u(0,\theta)=q(\theta).
\]
Note that this is the linear Airy equation on $\T$ \cite[Ch.~2]{Tao2006NDE}.

\end{definition}

We can now state the main theorem for radial families.

\begin{theorem}\label{thm:main}
Fix $0<\rho<\rho_*$, where $\rho_*:=\arsinh(1)=\log(1+\sqrt2)$, and fix $\kappa_0>0$. Then there exist constants $C=C(\rho,\kappa_0)$ and $d_0=d_0(\rho,\kappa_0)$ such that for all $d\ge d_0$, all $|\kappa|\le\kappa_0$, and all $\beta\in\R$,
\begin{equation}\label{eq:main-estimate}
\left|
 \K_d\bigl(\kappa d^{-1/3},\beta\bigr)
 -
 \frac{2}{\pi}\bigl(e^{-\tau_\kappa\partial_\theta^3}\mathfrak a\bigr)(2\beta)
\right|
\le C d^{-1/3},
\end{equation}
where
\[
 \tau_\kappa:=\frac{16}{3}\kappa^3.
\]
Equivalently,
\[
 \K_d\bigl(\kappa d^{-1/3},\beta\bigr)
 \longrightarrow
 \frac{2}{\pi}\bigl(e^{-\tau_\kappa\partial_\theta^3}\mathfrak a\bigr)(2\beta)
\]
uniformly for $(\kappa,\beta)\in[-\kappa_0,\kappa_0]\times\R$.
\end{theorem}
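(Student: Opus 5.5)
Expand $\mathfrak a$ in Fourier series, reduce $\K_d$ to a one-dimensional Gaussian computation per Fourier mode, and take the limit $d\to\infty$ mode by mode. Since $\cos\theta$ is entire and $|\cos\theta|<1$ away from real $\theta$-values where it equals $\pm1$, the function $\mathfrak a(\theta)=\arsinh(\cos\theta)$ is analytic on a strip $|\operatorname{Im}\theta|<\rho_*$: the branch points of $\arsinh$ at $\pm i$ are reached only when $\cos\theta=\pm i$, i.e.\ at $|\operatorname{Im}\theta|=\rho_*$. Hence for any fixed $\rho<\rho_*$ we have $|\widehat{\mathfrak a}(m)|\le C_\rho e^{-\rho|m|}$. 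Writing $\mathfrak a(\phi)=\sum_m\widehat{\mathfrak a}(m)e^{im\phi}$ and using $\phi=2\beta+\alpha(Q_d(u)+Q_d(v))$, we get
\[
\K_d(\alpha,\beta)=\sum_{m}\widehat{\mathfrak a}(m)e^{2im\beta}\,I_d(m\alpha),
\]
where
\[
I_d(s):=c_d\iint e^{-\|u\|^2-\|v\|^2}\cos(2\ip uv)e^{is(Q_d(u)+Q_d(v))}\,du\,dv .
\]
This integral is Gaussian and factorizes over the $d$ coordinates: each factor is a $2\times2$ complex Gaussian integral with quadratic form $(1-4is)(u_j^2+v_j^2)\mp 2iu_jv_j$, plus the constant phase $e^{-2isd}$. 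This yields the exact finite-$d$ formula
\[
I_d(s)=\frac{2}{\pi}\,e^{-2isd}\bigl((1-4is)^2+1\bigr)^{-d/2}\cdot\frac{2^{d/2}}{1},
\]
with the normalization fixed so that $I_d(0)=2/\pi$, consistent with $\K_d(0,0)=\tfrac2\pi\mathfrak a(0)$. A cleaner way to write it is
\[
I_d(s)=\frac{2}{\pi}\exp\bigl(-2isd-\tfrac d2\log(1-4is-8s^2)\bigr).
\]
The next step is to expand the exponent in powers of $s$:
\[
-\tfrac d2\log(1-4is-8s^2)=d\bigl(2is+c_2s^2+c_3 s^3+O(s^4)\bigr).
\]
The linear term cancels against $e^{-2isd}$. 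The main point to verify is that the quadratic coefficient also vanishes, so that the leading surviving term is cubic. Computing with $x=4is+8s^2$, we have $-\tfrac12\log(1-x)=\tfrac12(x+x^2/2+x^3/3+\dots)$. The $s^2$ terms are $4s^2+\tfrac14(4is)^2=4s^2-4s^2=0$. The $s^3$ terms are $\tfrac12(2\cdot4is\cdot8s^2)/2+\tfrac16(4is)^3=16is^3-\tfrac{32}{3}is^3=\tfrac{16}{3}is^3$. Therefore, with $s=m\kappa d^{-1/3}$,
\[
I_d(m\kappa d^{-1/3})=\frac2\pi e^{i\frac{16}{3}\kappa^3m^3}\bigl(1+O(m^4\kappa^4d^{-1/3})\bigr)
\]
whenever $|m|\le d^{1/12}$, say. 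This exactly matches the multiplier $e^{i\tau_\kappa m^3}$ with $\tau_\kappa=\tfrac{16}{3}\kappa^3$. Consequently the sum over modes with $|m|\le d^{\epsilon}$ agrees with $\frac2\pi(U(\tau_\kappa)\mathfrak a)(2\beta)$ up to $Cd^{-1/3}\sum_m|\widehat{\mathfrak a}(m)|m^4$, which is finite by the exponential decay of the coefficients. The estimate is uniform in $\beta$ because $|e^{2im\beta}|=1$.

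The main obstacle is controlling the remaining modes, where $|m\kappa|d^{-1/3}$ is not small and the Taylor expansion fails. Here I would rely on the crude bound $|I_d(s)|\le I_d(0)$-type estimates. From the exact formula, $|I_d(s)|=\frac2\pi|1-4is-8s^2|^{-d/2}$, and
\[
|1-8s^2-4is|^2=1+64s^4\ge1,
\]
so $|I_d(s)|\le 2/\pi$ for all real $s$; the same bound also follows directly from $|\K|$-type bounds since $|e^{i\phi}|=1$ in the Fourier integrand. The tail is then bounded by $\frac2\pi\sum_{|m|>d^{\epsilon}}(|\widehat{\mathfrak a}(m)|+1\cdot|\widehat{\mathfrak a}(m)|)\le C e^{-\rho d^{\epsilon}}$, and the corresponding tail of $U(\tau)\mathfrak a$ obeys the same bound. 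Both are $o(d^{-1/3})$ for $d\ge d_0(\rho,\kappa_0)$. Care is needed to justify the complex Gaussian integral, whose real part is positive definite since the real part of the quadratic form is $\|u\|^2+\|v\|^2$, and to choose the branch of $\log$ continuously from $s=0$. Combining the two ranges of $m$ gives \eqref{eq:main-estimate}.
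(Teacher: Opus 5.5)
Your proposal is correct and is essentially the paper's proof. Like the paper, you Fourier-expand $\mathfrak a$, factor the Gaussian integral into $\Lambda(m\alpha)^d$ with $\Lambda(z)=e^{-2iz}(1-4iz-8z^2)^{-1/2}$, keep the cubic term $\tfrac{16i}{3}z^3$ on low modes, and control high modes using $|\Lambda(z)|=(1+64z^4)^{-1/4}\le 1$ together with the exponential decay of $\widehat{\mathfrak a}(m)$.

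The differences are minor. Your cutoff $|m|\le d^{1/12}$, in place of the paper's $\lfloor A\log d\rfloor$ with $A\rho>1/3$, makes the tail super-polynomially small. It still gives an $O(d^{-1/3})$ low-mode error, because the exponent error $\delta$ satisfies $\operatorname{Re}\delta\le 0$ (from $|\Lambda|\le 1$), so $|e^{\delta}-1|\le|\delta|$. Separately, your remark that $|\cos\theta|<1$ off the real points is not a valid reason for analyticity, since $\cos\theta$ can be large for complex $\theta$. The correct reason is that $\cos\theta$ avoids the cuts $\pm i[1,\infty)$ of $\arsinh$ when $|\operatorname{Im}\theta|<\rho_*$; your location of the singularities at $\cos\theta=\pm i$ already captures this.
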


\begin{remark}[Why the scaling is $d^{-1/3}$]
The proof diagonalizes $\K_d$ in the Fourier basis and produces a scalar multiplier $\Lambda(m\alpha)^d$ on the $m$th Fourier mode. The key expansion is
\[
 \log\Lambda(z)=\frac{16i}{3}z^3-16z^4+O(z^5).
\]
The first nontrivial oscillatory term is cubic. Therefore the only scaling for which $d\log\Lambda(m\alpha)$ remains nondegenerate is $d\alpha^3\asymp 1$, i.e.
\[
 \alpha\asymp d^{-1/3}.
\]
The cubic phase produces the limiting multiplier $e^{i\tau m^3}$, while the quartic term yields the $O(d^{-1/3})$ error.
\end{remark}


The finite-dimensional problem is already completely explicit.

\begin{theorem}[Exact reduction]\label{prop:exact}
Define
\[
 \Lambda(z):=\frac{e^{-2iz}}{\sqrt{1-4iz-8z^2}},
 \qquad z\in\R,
\]
with the principal branch of the square root. Then
\begin{equation}\label{eq:exact-reduction}
 \K_d(\alpha,\beta)
 =
 \frac{2}{\pi}
 \sum_{m\in\mathbb Z}
 \widehat{\mathfrak a}(m)e^{2im\beta}\Lambda(m\alpha)^d.
\end{equation}
\end{theorem}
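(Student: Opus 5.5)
Starting from the reduced form \eqref{eq:Kd-def}, I expand the profile in its Fourier series, $\mathfrak a(\theta)=\sum_{m\in\mathbb Z}\widehat{\mathfrak a}(m)e^{im\theta}$. Since $\mathfrak a(\theta)=\arsinh(\cos\theta)$ is real-analytic on $\T$ (indeed $|\cos\theta|$ stays away from $\pm i$ on a strip), the coefficients decay exponentially. This lets me interchange the sum with the integral against the absolutely integrable weight $e^{-(\norm{u}^2+\norm{v}^2)}$. Substituting $\phi_{d,\alpha,\beta}(u,v)=2\beta+\alpha(Q_d(u)+Q_d(v))$ gives
\[
 \K_d(\alpha,\beta)=c_d\sum_{m}\widehat{\mathfrak a}(m)e^{2im\beta}e^{-2im\alpha d}\, I_m,
\]
\[
 I_m:=\int_{\R^d}\int_{\R^d}e^{-(\norm{u}^2+\norm{v}^2)}\cos(2\ip{u}{v})\,e^{4im\alpha(\norm{u}^2+\norm{v}^2)}\,du\,dv .
\]
The integrand factorizes over coordinates, since $\cos(2\ip uv)=\operatorname{Re} e^{2i\ip uv}$ and the rest of the integrand is even under $v\mapsto -v$. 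I can therefore replace the cosine by $e^{2i\ip uv}$, and $I_m=J(m\alpha)^d$ with
\[
 J(z)=\int_{\R^2}e^{-(1-4iz)(s^2+t^2)+2ist}\,ds\,dt .
\]

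Next I evaluate $J(z)$ as a complex Gaussian integral. The quadratic form has matrix $A=\begin{pmatrix}1-4iz & -i\\ -i & 1-4iz\end{pmatrix}$, whose real part is the identity and hence positive definite. So $J(z)=\pi/\sqrt{\det A}$, where the square root is the branch continuous in $z$ from $z=0$. Here $\det A=(1-4iz)^2+1=2-8iz-16z^2=2(1-4iz-8z^2)$. Thus $J(z)=\frac{\pi}{\sqrt2}(1-4iz-8z^2)^{-1/2}$, and
\[
 c_d\, e^{-2im\alpha d}J(m\alpha)^d=c_d\Bigl(\tfrac{\pi}{\sqrt2}\Bigr)^d\Lambda(m\alpha)^d .
\]
The normalization needs care. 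With $c_d=2^{1+d/2}\pi^{-(d+1)}$ we get $c_d(\pi/\sqrt2)^d=2/\pi$ exactly, which gives \eqref{eq:exact-reduction}. As a sanity check, $m=0$ with $\alpha=0$ recovers $\frac{2}{\pi}\widehat{\mathfrak a}(0)$ summed appropriately, i.e.\ $\K_d(0,0)=\frac2\pi\mathfrak a(0)$.

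The main obstacle is the branch and convergence bookkeeping. First, I need to check that $\sqrt{\det A}$ obtained by continuity agrees with the principal branch of $\sqrt{1-4iz-8z^2}$ for all real $z$. The real part $1-8z^2$ changes sign, but the imaginary part $-4z$ vanishes only at $z=0$, where the real part is $1>0$. So the argument never crosses the negative real axis, and the principal branch is continuous along $\R$ and equals the continuous one. Second, I need to justify the interchange of sum and integral and the passage from $\cos$ to the exponential. Both follow from dominated convergence using $|\mathfrak a|\le\log(1+\sqrt2)$ or exponential Fourier decay together with Gaussian integrability, since $|e^{4im\alpha\norm u^2}|=1$.
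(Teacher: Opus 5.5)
Your proposal is correct and follows the paper's proof essentially step for step. Both expand $\mathfrak a$ in its Fourier series, interchange sum and integral, use $v\mapsto -v$ to replace $\cos(2\ip uv)$ by $e^{2i\ip uv}$, factor into $d$ identical planar Gaussian integrals equal to $\pi/\sqrt{(1-4iz)^2+1}=\frac{\pi}{\sqrt2}(1-4iz-8z^2)^{-1/2}$, and finish with $c_d(\pi/\sqrt2)^d=2/\pi$. You also supply two details the paper leaves implicit: the branch check (for real $z$, $1-4iz-8z^2$ never meets $(-\infty,0]$, so the continuous branch is the principal one), and the Fubini justification via $\sum_m|\widehat{\mathfrak a}(m)|<\infty$.
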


\begin{proof}
Because $\mathfrak a$ is real analytic on a strip around the real axis, its Fourier series converges absolutely:
\[
 \mathfrak a(\theta)=\sum_{m\in\mathbb Z}\widehat{\mathfrak a}(m)e^{im\theta}.
\]
Substituting this into \eqref{eq:Kd-def} and interchanging the sum and the integral yields
\[
 \K_d(\alpha,\beta)
 =
 c_d
 \sum_{m\in\mathbb Z}\widehat{\mathfrak a}(m)e^{2im\beta}
 I_{m,d}(\alpha),
\]
where
\[
 I_{m,d}(\alpha)
 :=
 \int_{\R^d}\int_{\R^d}
 e^{-(\norm{u}^2+\norm{v}^2)}
 \cos\bigl(2\ip{u}{v}\bigr)
 e^{im\alpha(Q_d(u)+Q_d(v))}
 \,du\,dv.
\]
Since $Q_d(v)$ is even in $v$, the change of variables $v\mapsto -v$ shows that the two exponentials arising from
\[
 \cos\bigl(2\ip{u}{v}\bigr)=\frac{e^{2i\ip{u}{v}}+e^{-2i\ip{u}{v}}}{2}
\]
contribute equally. Hence
\[
 I_{m,d}(\alpha)
 =
 \int_{\R^d}\int_{\R^d}
 e^{-(\norm{u}^2+\norm{v}^2)}
 e^{2i\ip{u}{v}}
 e^{im\alpha(Q_d(u)+Q_d(v))}
 \,du\,dv.
\]
Now the integrand factorizes coordinate by coordinate, so
\[
 I_{m,d}(\alpha)
 =
 \prod_{j=1}^d
 \int_{\R^2}
 e^{-(x^2+y^2)}
 e^{im\alpha((4x^2-1)+(4y^2-1))}
 e^{2ixy}
 \,dx\,dy.
\]
The one-coordinate Gaussian integral is
\[
 \int_{\R^2}
 e^{-(1-4im\alpha)x^2-(1-4im\alpha)y^2+2ixy}
 \,dx\,dy
 =
 \frac{\pi}{\sqrt{(1-4im\alpha)^2+1}}.
\]
After including the factor $e^{-2im\alpha}$ coming from the two constant terms in $(4x^2-1)+(4y^2-1)$, the one-coordinate contribution is
\[
 \frac{\pi e^{-2im\alpha}}{\sqrt{(1-4im\alpha)^2+1}}
 =
 \frac{\pi}{\sqrt2}\,\Lambda(m\alpha).
\]
Therefore
\[
 I_{m,d}(\alpha)=\left(\frac{\pi}{\sqrt2}\right)^d\Lambda(m\alpha)^d.
\]
Since
\[
 c_d\left(\frac{\pi}{\sqrt2}\right)^d=\frac{2}{\pi},
\]
we obtain \eqref{eq:exact-reduction}. \qedhere
\end{proof}

First, we prove two preparatory lemmas. 

\begin{lemma}[Fourier coefficient decay]\label{lem:fourier-decay}
Let
\[
 \rho_*:=\arsinh(1)=\log(1+\sqrt2).
\]
Then for every $0<\rho<\rho_*$ there exists $C_\rho<\infty$ such that
\[
 \abs{\widehat{\mathfrak a}(m)}\le C_\rho e^{-\rho\abs m}
 \qquad (m\in\mathbb Z).
\]
In particular,
\[
 \sum_{m\in\mathbb Z}\abs{\widehat{\mathfrak a}(m)}\abs m^k<\infty
 \qquad\text{for every }k\ge 0.
\]
\end{lemma}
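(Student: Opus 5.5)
The plan is the standard Paley--Wiener argument: extend $\mathfrak a(\theta)=\arsinh(\cos\theta)$ holomorphically to the strip $\{|\operatorname{Im}\theta|<\rho_*\}$, then shift the contour in the Fourier integral. The first step is to locate the singularities of $\arsinh(\cos\theta)$. The function $\arsinh(w)=\log(w+\sqrt{1+w^2})$ is holomorphic on $\C$ minus the two branch cuts $\{is: |s|\ge1\}$ on the imaginary axis. So $\arsinh(\cos\theta)$ is holomorphic at every $\theta$ for which $\cos\theta\notin\{is:|s|\ge1\}$.

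Next, write $\theta=x+iy$. Then $\cos\theta=\cos x\cosh y-i\sin x\sinh y$. This value is purely imaginary only when $\cos x=0$, i.e.\ $x=\pi/2+k\pi$. At such points $|\cos\theta|=|\sinh y|$, which is $\ge1$ exactly when $|y|\ge\arsinh(1)=\rho_*$. Hence $\mathfrak a$ extends holomorphically and $2\pi$-periodically to the open strip $\{|\operatorname{Im}\theta|<\rho_*\}$; the principal branch there agrees with the real function on $\R$, since the strip minus nothing is simply connected and contains $\R$. On the closed substrip $\{|\operatorname{Im}\theta|\le\rho\}$ with $\rho<\rho_*$, the function $\mathfrak a$ is continuous on a compact fundamental domain and periodic in $x$, hence bounded, say by $A_\rho:=\sup_{|\operatorname{Im}\theta|\le\rho}|\mathfrak a(\theta)|<\infty$.

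For the contour shift, take $m>0$ and move the integral $\widehat{\mathfrak a}(m)=\frac1{2\pi}\int_0^{2\pi}\mathfrak a(\theta)e^{-im\theta}\,d\theta$ to the line $\operatorname{Im}\theta=-\rho$. The vertical sides of the rectangle cancel by periodicity, and Cauchy's theorem gives
\[
\widehat{\mathfrak a}(m)=\frac1{2\pi}\int_0^{2\pi}\mathfrak a(x-i\rho)e^{-imx}e^{-m\rho}\,dx,
\]
so that $|\widehat{\mathfrak a}(m)|\le A_\rho e^{-\rho m}$. For $m<0$ we shift upward instead, or simply use that $\mathfrak a$ is real and even, so $\widehat{\mathfrak a}(-m)=\widehat{\mathfrak a}(m)$. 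The case $m=0$ is trivial. This gives the bound with $C_\rho=A_\rho$.

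The summability claim is then immediate: $\sum_m|m|^ke^{-\rho|m|}<\infty$ for any fixed $\rho>0$. The only real point needing care is the singularity analysis in the first two steps, in particular verifying that the branch cut of $\arsinh$ is reached only at $|\operatorname{Im}\theta|=\rho_*$. The rest is routine.
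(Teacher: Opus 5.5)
Your proof is correct and follows the same route as the paper: extend $\mathfrak a$ holomorphically to the strip $\{|\operatorname{Im}\theta|<\rho_*\}$, then apply the standard contour-shift (Cauchy) estimate for Fourier coefficients. Your check that $\cos\theta$ meets the branch cuts $\{is:|s|\ge1\}$ only when $|\operatorname{Im}\theta|\ge\rho_*$ is a slightly more careful version of the paper's remark that the singularities occur where $\cos z=\pm i$; the aside about simple connectivity is unnecessary, because agreement with the real function on $\mathbb R$ follows directly from composing $\cos$ with the principal branch of $\operatorname{arsinh}$.
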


\begin{proof}
The singularities of $\arsinh(z)$ are at $z=\pm i$, so the singularities of
\[
 \mathfrak a(z)=\arsinh(\cos z)
\]
occur where $\cos z=\pm i$. The nearest such points to the real axis lie at imaginary distance $\rho_*$. Hence $\mathfrak a$ extends holomorphically to every strip
\[
 \abs{\Im z}<\rho
 \qquad (0<\rho<\rho_*).
\]
The standard Cauchy estimate for Fourier coefficients on analytic strips gives
\[
 \abs{\widehat{\mathfrak a}(m)}\le C_\rho e^{-\rho\abs m}.
\]
The summability of polynomially weighted coefficients follows immediately. \qedhere
\end{proof}

\begin{lemma}[Local expansion of the multiplier]\label{lem:lambda-expansion}
For $\abs z\le 1/16$,
\[
 \log\Lambda(z)=\frac{16i}{3}z^3-16z^4+r(z),
\]
with the explicit bound
\[
 \abs{r(z)}\le 500\abs{z}^5.
\]
\end{lemma}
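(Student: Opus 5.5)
We write $\log\Lambda(z) = -2iz - \tfrac12\log(1-4iz-8z^2)$ and expand both pieces as power series in $z$, tracking an explicit tail.

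Set $w := 4iz + 8z^2$, so that $1-4iz-8z^2 = 1-w$. For $|z|\le 1/16$ we have
\[
|w| \le \tfrac14 + \tfrac18\cdot\tfrac{1}{16} < 0.26,
\]
so $1-w$ stays in the disk $|1-w|\ge 0.74$, well away from the branch cut. The principal branch of $\log(1-w)$ therefore agrees with the series $-\sum_{k\ge1} w^k/k$. Hence
\[
\log\Lambda(z) = -2iz + \tfrac12\sum_{k\ge1}\frac{w^k}{k}.
\]
This identity holds for real $z$, and by analyticity on the disk $|z|\le 1/16$ as well; the statement only needs real $z$, but working on the complex disk lets us use the Cauchy estimate below.

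\textbf{Low-order coefficients.} Next we compute the coefficients of $z,\dots,z^4$ by expanding $w, w^2, w^3, w^4$:
\[
w = 4iz + 8z^2,\qquad w^2 = -16z^2 + 64iz^3 + 64z^4,\qquad w^3 = -64iz^3 - 384z^4 + O(z^5),\qquad w^4 = 256z^4 + O(z^5).
\]
Collecting terms in $\tfrac12\bigl(w + w^2/2 + w^3/3 + w^4/4\bigr)$:
\begin{itemize}
\item $z^1$: the term $2iz$ cancels the $-2iz$.
\item $z^2$: $\tfrac12(8 - 8) = 0$.
\item $z^3$: $\tfrac12(32i - \tfrac{64i}{3}) = \tfrac{16i}{3}$.
\item $z^4$: $\tfrac12(32 - 128 + 64) = -16$.
\end{itemize}
These match the claimed $\tfrac{16i}{3}z^3 - 16z^4$. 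We will double-check them by verifying a couple of Taylor coefficients of $\Lambda$ directly.

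\textbf{Bounding the remainder $r(z)$.} This is the main obstacle, since we must reach the explicit constant $500$. The cleanest route is a Cauchy estimate. The function $r$ is holomorphic on $|z|<R$ for some radius $R>1/16$ and vanishes to order $5$ at $0$. Therefore $r(z)/z^5$ is holomorphic, and by the maximum principle
\[
|r(z)| \le |z|^5 \,\frac{\sup_{|\zeta|=R} |r(\zeta)|}{R^5}.
\]

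To choose $R$, we need $|4i\zeta + 8\zeta^2| < 1$ on $|\zeta| = R$. Taking $R = 1/8$ gives $|w|\le \tfrac12 + \tfrac18 = 0.625$. On this circle we have:
\begin{itemize}
\item $|\log(1-w)| \le -\log(1-0.625) \approx 0.98$;
\item $|{-2i\zeta}| = 0.25$;
\item $\bigl|\tfrac{16i}{3}\zeta^3\bigr| + |16\zeta^4| \approx 0.0104 + 0.0039$.
\end{itemize}
So $\sup |r| \le 0.49 + 0.25 + 0.015 < 0.76$. With $R^{-5} = 8^5 = 32768$, this yields $|r(z)| \le 0.76\cdot 32768\,|z|^5 \approx 2.5\cdot 10^4 |z|^5$, which is far too weak.

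We would therefore fall back on a direct term-by-term tail bound, which exploits $|z|\le 1/16$. Write
\[
r(z) = \tfrac12\sum_{k\ge 1}\frac{w^k}{k} \quad\text{restricted to degrees } \ge 5 \text{ in } z.
\]
The tail splits into three parts:
\begin{itemize}
\item \emph{The terms $k\ge5$.} Each $w^k$ is bounded by $|w|^k \le (4|z|)^k(1+2|z|)^k$. Their total is at most $\tfrac{1}{10}(4|z|)^5(1.125)^5/(1-0.26)$, which is about $200|z|^5$ after pulling out $|z|^5$ and using $4|z|\le 1/4$.
\item \emph{The degree-$\ge5$ parts of $w^3$ and $w^4$.} These are explicit finite polynomials: $w^3$ contributes $-\tfrac{1}{6}\bigl(384 iz^5\cdot\text{stuff}\bigr)$, and similarly for $w^4$. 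Each monomial $c\,z^j$ with $j\ge5$ is bounded by $|c|16^{-(j-5)}|z|^5$.
\end{itemize}
Summing these explicit constants should give something of order a few hundred. If the crude estimate exceeds $500$, we would refine by bounding $(1+2|z|)^k - 1$ more carefully, or by computing the exact $z^5$ coefficient of $\log\Lambda$ and bounding only the degree-$\ge6$ tail geometrically.
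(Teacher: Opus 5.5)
Your final route, after you drop the Cauchy-estimate detour (which is indeed far too weak here), is the same as the paper's proof. You write $\log\Lambda(z)=-2iz-\tfrac12\log(1-w)$, keep the logarithm series through $w^4$, bound the tail $\tfrac12\sum_{k\ge5}w^k/k$ geometrically, and bound the degree-$\ge5$ monomials left over from $\tfrac12(w^3/3+w^4/4)$ one at a time using $|z|\le 1/16$. Your computation of the coefficients $\tfrac{16i}{3}z^3-16z^4$ is correct.

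The gap is that you never certify the constant $500$, which is the content of the lemma beyond the coefficients. You stop at ``should give something of order a few hundred'' and keep a contingency plan. The partial numbers you do give are also off:
\begin{itemize}
\item $8|z|^2\le 8/256=1/32$, not $\tfrac18\cdot\tfrac1{16}$. So on the complex disk you only have $|w|\le 4|z|+8|z|^2\le 4.5|z|\le 9/32$, not $|w|<0.26$. (The bound $0.26$ does hold for real $z$, where $|w|=|z|\sqrt{16+64z^2}$, but you invoke the disk.)
\item The $k\ge5$ tail is therefore at most $\frac{|w|^5}{10(1-|w|)}\le \frac{4.5^5}{10}\cdot\frac{32}{23}|z|^5<257|z|^5$, not ``about $200|z|^5$''.
\item The degree-$\ge5$ part of $w^3$ is $768iz^5+512z^6$, not ``$384iz^5\cdot$stuff''. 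That of $w^4$ is $-2048iz^5-6144z^6+8192iz^7+4096z^8$.
\end{itemize}
The leftover polynomial is therefore exactly $-128iz^5-\tfrac{2048}{3}z^6+1024iz^7+512z^8$. Its size is at most $(128+\tfrac{128}{3}+4+\tfrac18)|z|^5<175|z|^5$.

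Adding the two pieces gives $|r(z)|<432|z|^5$; the paper records $175+261=436$. The bound closes with room to spare and none of your fallback refinements are needed. With these numbers filled in, your argument is the paper's proof.
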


\begin{proof}
Write
\[
 \Lambda(z)=e^{-2iz}(1+w(z))^{-1/2},
 \qquad
 w(z):=-4iz-8z^2.
\]
If $\abs z\le 1/16$, then
\[
 \abs{w(z)}\le 4\abs z+8\abs z^2\le \frac14+\frac1{32}=\frac9{32}<1.
\]
Hence the principal logarithm is analytic there and
\[
 \log\Lambda(z)=-2iz-\frac12\log(1+w(z)).
\]
Use the fourth-order logarithm expansion
\[
 \log(1+w)=w-\frac{w^2}{2}+\frac{w^3}{3}-\frac{w^4}{4}+R_5(w),
 \qquad
 \abs{R_5(w)}\le \frac{\abs w^5}{5(1-\abs w)}.
\]
A direct algebraic expansion gives
\[
\begin{aligned}
 -2iz-\frac12\left(w-\frac{w^2}{2}+\frac{w^3}{3}-\frac{w^4}{4}\right)
 &=
 \frac{16i}{3}z^3-16z^4-128iz^5 \\
 &\quad -\frac{2048}{3}z^6+1024iz^7+512z^8.
\end{aligned}
\]
Thus for $\abs z\le 1/16$,
\[
 \left|-128iz^5-\frac{2048}{3}z^6+1024iz^7+512z^8\right|
 \le 175\abs z^5.
\]
On the other hand, since $\abs{w(z)}\le \frac92\abs z$,
\[
 \frac12\abs{R_5(w)}
 \le
 \frac12\cdot\frac{\abs w^5}{5(1-\abs w)}
 \le 261\abs z^5.
\]
Combining the two bounds yields
\[
 \abs{r(z)}\le (175+261)\abs z^5\le 500\abs z^5.
\]
\qedhere
\end{proof}


We are now ready to prove Theorem~\ref{thm:main}. 
\begin{proof}[Proof of Theorem~\ref{thm:main}]
Start from Theorem~\ref{prop:exact}:
\[
 \K_d\bigl(\kappa d^{-1/3},\beta\bigr)
 =
 \frac{2}{\pi}
 \sum_{m\in\mathbb Z}
 \widehat{\mathfrak a}(m)e^{2im\beta}
 \Lambda\bigl(m\kappa d^{-1/3}\bigr)^d.
\]
Define the target series
\[
 \mathcal L_\infty(\kappa,\beta)
 :=
 \frac{2}{\pi}
 \sum_{m\in\mathbb Z}
 \widehat{\mathfrak a}(m)e^{2im\beta}e^{i\tau_\kappa m^3}
 =
 \frac{2}{\pi}\bigl(e^{-\tau_\kappa\partial_\theta^3}\mathfrak a\bigr)(2\beta),
\]
where $\tau_\kappa=\frac{16}{3}\kappa^3$. We must bound
\[
 \K_d\bigl(\kappa d^{-1/3},\beta\bigr)-\mathcal L_\infty(\kappa,\beta).
\]
We split the Fourier modes into low and high frequencies.

\medskip
\noindent\textbf{Step 1: choose the cutoff.}
Pick
\[
 A>\frac{1}{3\rho},
 \qquad
 N_d:=\lfloor A\log d\rfloor.
\]
Since $N_d d^{-1/3}\to 0$, there exists $d_1=d_1(\kappa_0)$ such that for all $d\ge d_1$, all $\abs m\le N_d$, and all $\abs\kappa\le\kappa_0$,
\[
 \abs{m\kappa d^{-1/3}}\le \frac1{16}.
\]

\medskip
\noindent\textbf{Step 2: low-frequency expansion.}
For $\abs m\le N_d$, set
\[
 z:=m\kappa d^{-1/3}.
\]
By Lemma~\ref{lem:lambda-expansion},
\[
 \log\Lambda(z)=\frac{16i}{3}z^3-16z^4+r(z),
 \qquad
 \abs{r(z)}\le 500\abs{z}^5.
\]
Multiplying by $d$ gives
\[
 d\log\Lambda(z)
 =i\tau_\kappa m^3-16\kappa^4m^4d^{-1/3}+\varepsilon_{m,d},
\]
where
\[
 \abs{\varepsilon_{m,d}}
 \le 500\abs{\kappa}^5\abs m^5d^{-2/3}
 \le 500\kappa_0^5\abs m^5d^{-2/3}.
\]
Define
\[
 \delta_{m,d}:=-16\kappa^4m^4d^{-1/3}+\varepsilon_{m,d}.
\]
Then
\[
 \Lambda\bigl(m\kappa d^{-1/3}\bigr)^d
 =e^{i\tau_\kappa m^3}e^{\delta_{m,d}}.
\]
Since $\abs m\le A\log d$,
\[
 \abs{\delta_{m,d}}
 \le
 16\kappa_0^4A^4(\log d)^4d^{-1/3}
 +
 500\kappa_0^5A^5(\log d)^5d^{-2/3}.
\]
The right-hand side tends to $0$, so there exists $d_2=d_2(\rho,\kappa_0)$ such that for all $d\ge d_2$,
\[
 \abs{\delta_{m,d}}\le 1
 \qquad (\abs m\le N_d,\ \abs\kappa\le\kappa_0).
\]
Therefore
\[
 \abs{e^{\delta_{m,d}}-1}\le 2\abs{\delta_{m,d}}.
\]
So
\[
 \left|
 \Lambda\bigl(m\kappa d^{-1/3}\bigr)^d-e^{i\tau_\kappa m^3}
 \right|
 \le
 32\kappa_0^4\abs m^4d^{-1/3}
 +
 1000\kappa_0^5\abs m^5d^{-2/3}.
\]
Hence the low-frequency contribution is bounded by
\begin{align*}
\frac{2}{\pi}
\sum_{\abs m\le N_d}
\abs{\widehat{\mathfrak a}(m)}
\left|
 \Lambda\bigl(m\kappa d^{-1/3}\bigr)^d-e^{i\tau_\kappa m^3}
\right|
&\le
\frac{64\kappa_0^4}{\pi}d^{-1/3}
\sum_{m\in\mathbb Z}\abs{\widehat{\mathfrak a}(m)}\abs m^4 \\
&\qquad+
\frac{2000\kappa_0^5}{\pi}d^{-2/3}
\sum_{m\in\mathbb Z}\abs{\widehat{\mathfrak a}(m)}\abs m^5.
\end{align*}
By Lemma~\ref{lem:fourier-decay}, both sums are finite. Therefore the low frequency error $E_L$ is bounded as follows:
\[
 E_L\le C_1d^{-1/3}.
\]

\medskip
\noindent\textbf{Step 3: high-frequency tail.}
For real $z$,
\[
 \abs{\Lambda(z)}=\frac{1}{\abs{1-4iz-8z^2}^{1/2}}.
\]
A direct computation gives
\[
 \abs{1-4iz-8z^2}^2=(1-8z^2)^2+16z^2=1+64z^4,
\]
so
\[
 \abs{\Lambda(z)}=(1+64z^4)^{-1/4}\le 1.
\]
Thus
\[
 \sum_{\abs m>N_d}
 \abs{\widehat{\mathfrak a}(m)}\abs{\Lambda(m\kappa d^{-1/3})}^d
 \le
 \sum_{\abs m>N_d}\abs{\widehat{\mathfrak a}(m)}.
\]
Also
\[
 \sum_{\abs m>N_d}\abs{\widehat{\mathfrak a}(m)}\abs{e^{i\tau_\kappa m^3}}
 =
 \sum_{\abs m>N_d}\abs{\widehat{\mathfrak a}(m)}.
\]
Hence the high frequency error $E_H$ is bounded as follows:
\[
 E_H
 \le
 \frac{4}{\pi}\sum_{\abs m>N_d}\abs{\widehat{\mathfrak a}(m)}.
\]
By Lemma~\ref{lem:fourier-decay},
\[
 \sum_{\abs m>N_d}\abs{\widehat{\mathfrak a}(m)}
 \le C_\rho e^{-\rho N_d}
 \le C_\rho d^{-A\rho}.
\]
Since $A\rho>1/3$,
\[
 E_H\le C_2d^{-1/3}.
\]

\medskip
\noindent\textbf{Step 4: combine the bounds.}
For $d\ge d_0:=\max\{d_1,d_2\}$,
\[
 \left|
 \K_d\bigl(\kappa d^{-1/3},\beta\bigr)-\mathcal L_\infty(\kappa,\beta)
 \right|
 \le
 E_L+E_H
 \le
 (C_1+C_2)d^{-1/3}.
\]
This proves \eqref{eq:main-estimate}. \qedhere
\end{proof}

\subsection{Explicit bounds on K\"onig's bilinear  form}

We are finally ready to instantiate the methodology above and get explicit bounds as in Theorem~\ref{th:konignain}. 

Because $\mathfrak a(\theta)=\arsinh(\cos\theta)$ is even and satisfies $\mathfrak a(\theta+\pi)=-\mathfrak a(\theta)$, its Fourier series contains only odd cosine modes:
\[
 \mathfrak a(\theta)=\sum_{\ell\ge0} b_{2\ell+1}\cos\bigl((2\ell+1)\theta\bigr),
\]
where
\[
 b_{2\ell+1}:=\frac{1}{\pi}\int_0^{2\pi}\mathfrak a(\theta)\cos\bigl((2\ell+1)\theta\bigr)\,d\theta,
 \qquad
 \abs{b_{2\ell+1}}\le C_\rho e^{-\rho(2\ell+1)}.
\]
From Theorem~\ref{thm:main} we immediately get the following explicit form of the limit.

\begin{corollary}[Cosine-series form]\label{cor:cosine}
For every fixed $\kappa,\beta\in\R$,
\[
 \lim_{d\to\infty}\K_d\bigl(\kappa d^{-1/3},\beta\bigr)
 =
 \frac{2}{\pi}
 \sum_{\ell\ge0}
 b_{2\ell+1}
 \cos\left(
 2(2\ell+1)\beta+\frac{16}{3}(2\ell+1)^3\kappa^3
 \right).
\]
\end{corollary}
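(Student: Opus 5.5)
The plan is to plug the cosine expansion of $\mathfrak a$ into the limit identity from Theorem~\ref{thm:main} and read off how the cubic multiplier acts on each real cosine mode. By Theorem~\ref{thm:main}, for fixed $\kappa,\beta$ (take $\kappa_0\ge|\kappa|$ and any $0<\rho<\rho_*$), we have
\[
\lim_{d\to\infty}\K_d(\kappa d^{-1/3},\beta)=\frac{2}{\pi}\bigl(e^{-\tau_\kappa\partial_\theta^3}\mathfrak a\bigr)(2\beta),
\qquad \tau_\kappa=\tfrac{16}{3}\kappa^3.
\]
It therefore suffices to evaluate $U(\tau)\mathfrak a$ in cosine form, using Definition~\ref{def:cubic-airy}.

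First, I would relate the complex Fourier coefficients to the $b_{2\ell+1}$. Since $\mathfrak a$ is real and even, $\widehat{\mathfrak a}(m)=\widehat{\mathfrak a}(-m)$, and this common value is $b_{|m|}/2$ for odd $m$. The antiperiodicity $\mathfrak a(\theta+\pi)=-\mathfrak a(\theta)$ forces $\widehat{\mathfrak a}(m)=0$ for even $m$. Both facts follow directly from the definition of $\widehat q(m)$ and the substitutions $\theta\mapsto-\theta$ and $\theta\mapsto\theta+\pi$. Evenness comes from $\cos$ being even, and antiperiodicity from $\cos(\theta+\pi)=-\cos\theta$ together with $\arsinh$ being odd.

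Next, I would apply the multiplier to each pair of modes $\pm n$ with $n=2\ell+1$. The contribution is
\[
\frac{b_n}{2}\Bigl(e^{i\tau n^3}e^{in\theta}+e^{-i\tau n^3}e^{-in\theta}\Bigr)=b_n\cos(n\theta+\tau n^3),
\]
which uses $(-n)^3=-n^3$. This step is the only place where the oddness of the cubic symbol matters: it is what keeps the result real and of pure cosine form. Summing over $\ell$ is legitimate because $\sum|\widehat{\mathfrak a}(m)|<\infty$ by Lemma~\ref{lem:fourier-decay}, so the series converges absolutely and regrouping $\pm m$ is allowed. Evaluating at $\theta=2\beta$ with $\tau=\frac{16}{3}\kappa^3$ gives exactly
\[
\frac{2}{\pi}\sum_{\ell\ge0}b_{2\ell+1}\cos\Bigl(2(2\ell+1)\beta+\tfrac{16}{3}(2\ell+1)^3\kappa^3\Bigr).
\]

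There is no real obstacle here. The only points needing care are the normalization between $\widehat{\mathfrak a}(m)$ and $b_n$, namely the factor $\tfrac12$ from the $\frac{1}{\pi}$ convention in $b_n$, and justifying the regrouping through absolute convergence.
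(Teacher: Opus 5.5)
Your proposal is correct and is exactly the computation the paper leaves implicit when it says the corollary follows ``immediately'' from Theorem~\ref{thm:main}. You identify $\widehat{\mathfrak a}(\pm n)=b_n/2$ for odd $n$ (and $0$ for even $n$), pair the modes $\pm n$ using $(-n)^3=-n^3$, justify the regrouping by absolute convergence from Lemma~\ref{lem:fourier-decay}, and evaluate at $\theta=2\beta$ with $\tau_\kappa=\tfrac{16}{3}\kappa^3$. One cosmetic point: Theorem~\ref{thm:main} requires $\kappa_0>0$, so when $\kappa=0$ you should take, e.g., $\kappa_0=1$ rather than $\kappa_0=|\kappa|$.
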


Optimizing over $\kappa$ and $\beta$ yields 
\[
 \lim_{d\to\infty}\K_d\bigl(\kappa d^{-1/3},\beta\bigr)
 \geq 0.59357...,
\]
proving Theorem~\ref{th:konignain}. The exact formula \eqref{eq:exact-reduction} gives a direct numerical procedure: for each fixed auxiliary dimension $d$, truncate the absolutely convergent Fourier sum and optimize over $(\alpha,\beta)$.  Figure~\ref{fig:koenig-dimension} shows the resulting normalized K\"onig values.

\input{figures/affine_radial_koenig}

The optimized values in the displayed range exceed $0.589$ and are still growing, compared with the hyperplane value $0.56110\ldots$.  The cubic-multiplier limit explains the large-$d$ mechanism: under the $d^{-1/3}$ scaling, the radial phase creates the multiplier $m\mapsto e^{i(16/3)\kappa^3m^3}$ appearing in \eqref{eq:main-estimate}. The main point for the present section is the resulting explicit asymptotic source of non-hyperplane improvements for K\"onig's bilinear form.


%% file: figures/affine_radial_koenig.tex
\begin{figure}[t]
\centering
\includegraphics[width=0.82\textwidth]{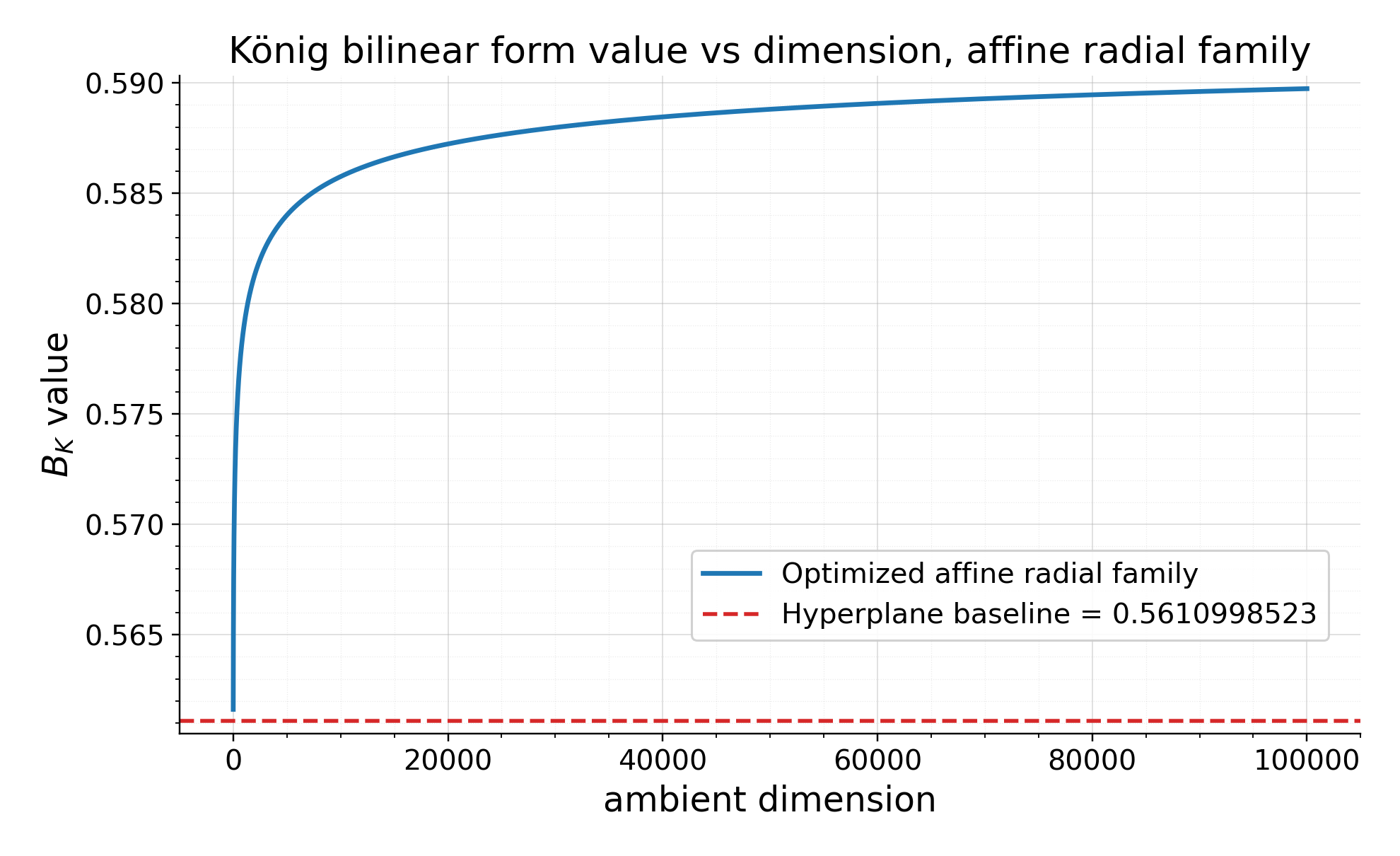}
\caption{Normalized K\"onig's bilinear form $\mathcal B_{\mathrm K}$ for affine radial rotating-hyperplane partitions as a function of the ambient dimension dimension $d+2$. The dashed horizontal line is the hyperplane benchmark value of $\tfrac{2}{\pi} \log(1+\sqrt2) =0.5610998523 \ldots$.}
\label{fig:koenig-dimension}
\end{figure}

%% file: paper/discussion.tex
\section{Discussion and Related Work}

\paragraph{Open Challenges}
Asymptotic results for Grothendieck's constant are of interest, with potential avenues being extensions of our results for the K\"onig bilinear form.
Improving the \(K_G\) lower bound is also of interest (see~{\cite{Davie1984LowerBoundKG, Reeds1991LowerBoundKG,heilman2026lowerboundgrothendiecksconstant, jones2026grothendieckconstantstrictlylarger}}), and may be amenable to AI-search, as common strategies involve the construction and analysis of particular semidefinite programs, akin to our efforts in partition search.


\paragraph{AI in Mathematics}
Much of this work was made possible due to tooling such as GPT-5's Deep Research feature~\cite{openai:2025:deepresearch} and Claude Code's AI agentic capabilities~\cite{anthropic:2026:opus4.7}.
More broadly, AI is increasingly used in mathematics research, with notable recent advances made in the disproof of the unit distance conjecture~\cite{openai2026planarpointsets} and solutions to a number of Erd\H{o}s problems~\cite{alexeev2026primitivesetsvonmangoldt,erdos,diez2026mathematical, tsoukalas2026advancingmathematicsresearchaidriven}.
Benchmarking mathematical performance is also of interest, with a focus on problems and tasks relevant to research~\cite{abouzaid2026firstproof,son2026soohak}.
Concurrently, there is ongoing development of AI agents and systems for mathematics research, with notable recent examples including~\cite{Trinh2024-alphageometry-1,Hubert2025,feng2026aletheiatacklesfirstproofautonomously}.

%% file: paper/acknowledgements.tex
\paragraph{Acknowledgements}

This research was partially funded by the NSF AI Institute for Foundations of Machine Learning (IFML), NSF EnCORE: Institute for Emerging CORE Methods in Data Science Award \#2217033, NSF award \#2403211, a Guggenheim Fellowship, and the Renaissance Philanthropy AI for Math Fund. We would like to thank Sabrina Reguyal for reading the manuscript and offering helpful comments and suggestions on the writing.


%% file: paper/appendix.tex
\newpage
\appendix

\section{Krivine Preprocessing and Mixed Rounding}
\label[appendix]{app:krivine-preprocessing}

This appendix records the preprocessing and projection arguments behind
Krivine rounding schemes.
The argument is the standard rounding construction
~\cite{krivine1977constante, braverman2011grothendieckconstantstrictlysmaller}.

For a real matrix \(A=(a_{ij})\in\mathbb R^{m\times n}\), write
\[
    \operatorname{SDP}(A)
    :=
    \sup_{\|x_i\|=\|y_j\|=1}
    \left|
        \sum_{i=1}^m\sum_{j=1}^n a_{ij}\langle x_i,y_j\rangle
    \right|
\]
and
\[
    \operatorname{OPT}(A)
    :=
    \max_{\varepsilon_i,\delta_j\in\{\pm 1\}}
    \left|
        \sum_{i=1}^m\sum_{j=1}^n a_{ij}\varepsilon_i\delta_j
    \right|.
\]
Thus proving \(\operatorname{SDP}(A)\le K\operatorname{OPT}(A)\) for all
\(A\) proves \(K_G\le K\).

\begin{theorem}[Krivine preprocessing and projection]
\label{thm:appendix-pure-krivine-rounding}
Let \(f,g:\mathbb R^k\to\{-1,1\}\) be odd measurable functions, and let
\(H=H_{f,g}\) be the arcsine-normalized correlation function
\[
    H(t)
    =
    \frac{\pi}{2}
    \mathbb E\left[
        f\left(G_1\right)
        g\left(tG_1+\sqrt{1-t^2}G_2\right)
    \right],
    \qquad -1<t<1,
\]
where \(G_1,G_2\) are independent standard Gaussian vectors in
\(\mathbb R^k\). Suppose that \(H\) has a local inverse at the origin,
\[
    H^{-1}(\zeta)=\sum_{r\ge 1} a_r\zeta^r,
\]
which is holomorphic on a neighborhood of \(\gamma\overline{\mathbb D}\),
and suppose that
\[
    \sum_{r\ge 1}|a_r|\gamma^r\le 1.
\]
Then
\[
    K_G\le \frac{\pi}{2\gamma}.
\]
\end{theorem}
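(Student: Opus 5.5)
We follow Krivine's classical preprocessing-and-projection argument. Fix a matrix \(A\) and unit vectors \(x_i,y_j\) nearly attaining \(\operatorname{SDP}(A)\). Write \(\sigma_r:=\operatorname{sgn}(a_r)\), with \(a_r=0\) allowed. The plan is to build new unit vectors \(u_i,v_j\) in a Hilbert space \(\mathcal H=\bigoplus_{r\ge1}(\mathbb R^{N})^{\otimes r}\) by
\[
    u_i:=\bigoplus_{r\ge1}\sqrt{|a_r|\gamma^r}\,x_i^{\otimes r},
    \qquad
    v_j:=\bigoplus_{r\ge1}\sigma_r\sqrt{|a_r|\gamma^r}\,y_j^{\otimes r}.
\]
These satisfy \(\langle u_i,v_j\rangle=\sum_r a_r\gamma^r\langle x_i,y_j\rangle^r=H^{-1}(\gamma\langle x_i,y_j\rangle)\). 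The series converges because \(|\langle x_i,y_j\rangle|\le1\) and \(\sum|a_r|\gamma^r\le1\). For the norms we get \(\|u_i\|^2=\|v_j\|^2=\sum_r|a_r|\gamma^r\le1\). If this sum is strictly less than \(1\), we append an extra orthogonal coordinate to both \(u_i\) and \(v_j\), mutually orthogonal between the \(u\)'s and \(v\)'s; this makes every vector a unit vector and leaves the inner products \(\langle u_i,v_j\rangle\) unchanged. All the vectors span a finite-dimensional subspace, so we may regard them as lying in \(\mathbb R^D\).

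Next we round. Let \(G\) be a \(k\times D\) matrix with i.i.d.\ standard Gaussian entries, and set \(\varepsilon_i:=f(Gu_i)\) and \(\delta_j:=g(Gv_j)\). For unit vectors \(u,v\) with \(\langle u,v\rangle=t\), the pair \((Gu,Gv)\) has the same joint law as \((G_1,\,tG_1+\sqrt{1-t^2}G_2)\) in \(\mathbb R^k\). Hence \(\mathbb E[\varepsilon_i\delta_j]=\frac{2}{\pi}H(t_{ij})\), where \(t_{ij}:=H^{-1}(\gamma\langle x_i,y_j\rangle)\).

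The main obstacle is justifying \(H(H^{-1}(\zeta))=\zeta\) at the real points \(\zeta=\gamma\langle x_i,y_j\rangle\in[-\gamma,\gamma]\). This requires two things. First, \(t_{ij}\) must lie in \((-1,1)\), where \(H\) is given by its probabilistic definition. Second, we must use the local inverse composition, not some other branch. Since \(H^{-1}\) is holomorphic on a neighborhood of \(\gamma\overline{\mathbb D}\), the identity \(H\circ H^{-1}=\mathrm{id}\) holds near \(0\) and then on the connected domain by analytic continuation, using that \(H\) is holomorphic on the strip \(S\) by \cref{lemma:correlation-profile-analytic}. For this continuation to make sense we need \(H^{-1}(\gamma\overline{\mathbb D})\subset S\). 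For real \(\zeta\) the bound \(|H^{-1}(\zeta)|\le\sum|a_r|\gamma^r\le1\) gives \(|t_{ij}|\le1\). Strictness near the endpoints needs care: if equality \(|t_{ij}|=1\) occurs, we would replace \(\gamma\) by \(\gamma'<\gamma\) and let \(\gamma'\to\gamma\) at the end, where \(\sum|a_r|\gamma'^r<1\). Along the real segment \((-\gamma',\gamma')\), \(H^{-1}\) then takes values in \((-1,1)\subset S\), so continuation along this path is legitimate. On that segment \(H\) is the real-analytic correlation function, so the identity theorem applied on the real interval gives the composition identity.

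Finally, taking expectations gives
\[
    \mathbb E\sum_{i,j}a_{ij}\varepsilon_i\delta_j
    =\frac{2\gamma'}{\pi}\sum_{i,j}a_{ij}\langle x_i,y_j\rangle .
\]
Some realization of the random signs attains at least this expectation, so \(\operatorname{OPT}(A)\ge\frac{2\gamma'}{\pi}\operatorname{SDP}(A)\). Letting \(\gamma'\to\gamma\) yields \(K_G\le\pi/(2\gamma)\).
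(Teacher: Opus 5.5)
Your proposal is correct and is essentially the paper's proof: the same tensor-power embedding, padded by extra orthogonal coordinates (the paper's $e_L,e_R$), the same Gaussian projection, and the same passage to $\gamma'<\gamma$ followed by $\gamma'\uparrow\gamma$. You also check explicitly that $|H^{-1}(\zeta)|\le\sum_r|a_r|\gamma'^r<1$ on $[-\gamma',\gamma']$, so that $H\circ H^{-1}=\mathrm{id}$ holds there by the identity theorem; the paper uses this step without comment.
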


\begin{proof}
It is enough to prove the conclusion with any \(0<\gamma'<\gamma\) in
place of \(\gamma\), and then let \(\gamma'\uparrow\gamma\). We therefore
fix such a \(\gamma'\). Put
\[
    M_{\gamma'}:=\sum_{r\ge 1}|a_r|(\gamma')^r < 1.
\]
Let \(A=(a_{ij})\in\mathbb R^{m\times n}\). Choose unit vectors
\(x_1,\ldots,x_m,y_1,\ldots,y_n\) such that
\[
    \sum_{i,j} a_{ij}\langle x_i,y_j\rangle
\]
is arbitrarily close to \(\operatorname{SDP}(A)\); if necessary, replace all
\(y_j\) by \(-y_j\) so that this quantity is nonnegative. We suppress this
arbitrarily small error, since it disappears at the end.

Let
\[
    \mathcal K
    :=
    \left(\bigoplus_{r\ge 1} \mathcal H^{\otimes r}\right)
    \oplus \mathbb R e_L\oplus \mathbb R e_R,
\]
where \(\mathcal H\) is the Hilbert space containing the original vectors,
and where \(e_L,e_R\) are unit vectors orthogonal to each other and to all
tensor summands. Define
\[
    I(x)
    :=
    \bigoplus_{r\ge 1}
        \sqrt{|a_r|}\,(\gamma')^{r/2}x^{\otimes r}
    \oplus \sqrt{1-M_{\gamma'}}\,e_L
    \oplus 0
\]
and
\[
    J(y)
    :=
    \bigoplus_{r\ge 1}
        \operatorname{sgn}(a_r)\sqrt{|a_r|}\,(\gamma')^{r/2}y^{\otimes r}
    \oplus 0
    \oplus \sqrt{1-M_{\gamma'}}\,e_R,
\]
with the convention that the summand corresponding to \(a_r=0\) is zero.
Then \(I(x)\) and \(J(y)\) are unit vectors whenever \(x,y\) are unit
vectors. Moreover,
\[
    \langle I(x),J(y)\rangle
    =
    \sum_{r\ge 1} a_r(\gamma')^r\langle x,y\rangle^r
    =
    H^{-1}\bigl(\gamma'\langle x,y\rangle\bigr).
\]
The first equality uses
\[
    \langle x^{\otimes r},y^{\otimes r}\rangle
    =
    \langle x,y\rangle^r,
\]
and the second equality uses the Taylor expansion of the inverse on
\(\gamma'\mathbb D\).

Set
\[
    u_i:=I(x_i),
    \qquad
    v_j:=J(y_j).
\]
The finitely many vectors \(u_i,v_j\) span a finite-dimensional subspace of
\(\mathcal K\). Identify this subspace with \(\mathbb R^N\). Let
\(\Gamma:\mathbb R^N\to\mathbb R^k\) be a random Gaussian linear map, i.e.
a \(k\times N\) matrix whose entries are independent standard Gaussian
random variables. Define random signs
\[
    \varepsilon_i
    :=
    f\left(\Gamma u_i\right),
    \qquad
    \delta_j
    :=
    g\left(\Gamma v_j\right).
\]
For each pair \((i,j)\), the vectors \(\Gamma u_i\) and \(\Gamma v_j\) are
standard Gaussian vectors in \(\mathbb R^k\) with coordinatewise correlation
\(\langle u_i,v_j\rangle\). Hence, by the definition of \(H\),
\[
    \mathbb E[\varepsilon_i\delta_j]
    =
    \frac{2}{\pi}
    H(\langle u_i,v_j\rangle).
\]
Using the preprocessing identity,
\[
    H(\langle u_i,v_j\rangle)
    =
    H\left(H^{-1}\bigl(\gamma'\langle x_i,y_j\rangle\bigr)\right)
    =
    \gamma'\langle x_i,y_j\rangle.
\]
Therefore
\[
\begin{aligned}
    \mathbb E\left[\sum_{i,j} a_{ij}\varepsilon_i\delta_j\right]
    &=
    \frac{2}{\pi}\sum_{i,j}a_{ij}
        H(\langle u_i,v_j\rangle) \\
    &=
    \frac{2\gamma'}{\pi}
    \sum_{i,j}a_{ij}\langle x_i,y_j\rangle.
\end{aligned}
\]
Since the maximum over deterministic signs is at least the expectation of
the randomized signs, we get
\[
    \operatorname{OPT}(A)
    \ge
    \frac{2\gamma'}{\pi}\operatorname{SDP}(A).
\]
Thus
\[
    \operatorname{SDP}(A)
    \le
    \frac{\pi}{2\gamma'}\operatorname{OPT}(A).
\]
Letting \(\gamma'\uparrow\gamma\) gives
\[
    \operatorname{SDP}(A)
    \le
    \frac{\pi}{2\gamma}\operatorname{OPT}(A).
\]
Since \(A\) was arbitrary, \(K_G\le \pi/(2\gamma)\).
\end{proof}

\begin{corollary}[Classical Krivine hyperplane bound]
\label{cor:appendix-classical-krivine}
Let \(h:\mathbb R\to\{-1,1\}\) be \(h(x)=\operatorname{sgn}(x)\). Then the
hyperplane scheme \((h,h)\) gives
\[
    K_G\le \frac{\pi}{2\log(1+\sqrt 2)}.
\]
\end{corollary}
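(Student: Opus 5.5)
The plan is to apply \cref{thm:appendix-pure-krivine-rounding} to the hyperplane pair \((h,h)\) with \(k=1\), taking \(\gamma=\log(1+\sqrt2)\). Everything then reduces to three facts: computing \(H_{h,h}\), identifying its inverse, and checking that the inverse majorant at \(\gamma\) equals exactly \(1\).

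First, compute the correlation function. For \(k=1\), \(G_1\) and \(tG_1+\sqrt{1-t^2}G_2\) are standard Gaussians with correlation \(t\). Sheppard's formula gives \(\mathbb E[\sgn(X)\sgn(Y)]=\frac{2}{\pi}\arcsin t\). Hence \(H_{h,h}(t)=\arcsin t\), as the paper already noted when introducing the change of variable \(F(w)=H(\sin w)\).

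Second, identify the inverse and its coefficients. The local inverse at the origin is \(H^{-1}(\zeta)=\sin\zeta\), which is entire, so it is holomorphic on a neighborhood of every closed disk \(\gamma\overline{\mathbb D}\). Its Taylor coefficients are \(a_{2r+1}=(-1)^r/(2r+1)!\), with the even coefficients vanishing. Taking absolute values gives
\[
\sum_{r\ge1}|a_r|\gamma^r=\sum_{r\ge0}\frac{\gamma^{2r+1}}{(2r+1)!}=\sinh\gamma .
\]

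Third, choose \(\gamma\) so the majorant is exactly \(1\). With \(\gamma=\arsinh(1)=\log(1+\sqrt2)\) we get \(\sinh\gamma=1\), so the hypothesis \(\sum|a_r|\gamma^r\le 1\) holds with equality. Note that \cref{thm:appendix-pure-krivine-rounding} allows equality, since its proof passes through \(\gamma'<\gamma\). It then yields \(K_G\le \pi/(2\log(1+\sqrt2))\).

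The only step that needs real justification is Sheppard's identity \(\mathbb P(\sgn X\neq\sgn Y)=\arccos(t)/\pi\). I would prove it by writing \(X=\langle G,u\rangle\) and \(Y=\langle G,v\rangle\) for a two-dimensional standard Gaussian \(G\) and unit vectors \(u,v\) at angle \(\arccos t\). By rotational invariance of \(G\), the signs differ exactly when the random line orthogonal to \(G\) separates \(u\) from \(v\), which happens with probability \(\arccos(t)/\pi\). Therefore
\[
\mathbb E[\sgn X\,\sgn Y]=1-\frac{2\arccos t}{\pi}=\frac{2}{\pi}\arcsin t .
\]
The remaining steps are routine.
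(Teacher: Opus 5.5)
Your proposal is correct and matches the paper's proof step for step: compute $H_{h,h}(t)=\arcsin t$, invert to $\sin\zeta$, note that the inverse majorant is $\sinh\gamma$, and apply Theorem~\ref{thm:appendix-pure-krivine-rounding} with $\gamma=\log(1+\sqrt2)$, where $\sinh\gamma=1$. The only addition is your rotational-invariance derivation of the identity $\mathbb E[\operatorname{sgn}X\,\operatorname{sgn}Y]=\frac{2}{\pi}\arcsin t$, which the paper simply cites as Grothendieck's identity.
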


\begin{proof}
For the hyperplane scheme, Grothendieck's identity gives
\[
    \mathbb E[\operatorname{sgn}(X)\operatorname{sgn}(Y)]
    =
    \frac{2}{\pi}\arcsin(t)
\]
whenever \(X,Y\) are standard Gaussians with correlation \(t\). In our
normalization,
\[
    H_{h,h}(t)=\arcsin(t).
\]
Thus
\[
    H_{h,h}^{-1}(\zeta)=\sin \zeta
    =
    \sum_{q\ge 0}\frac{(-1)^q}{(2q+1)!}\zeta^{2q+1}.
\]
Let
\[
    \rho_*:=\operatorname{arsinh}(1)=\log(1+\sqrt 2).
\]
Then
\[
    \sum_{q\ge 0}
    \frac{\rho_*^{2q+1}}{(2q+1)!}
    =
    \sinh(\rho_*)
    =
    1.
\]
Theorem~\ref{thm:appendix-pure-krivine-rounding} therefore applies with
\(\gamma=\rho_*\), giving
\[
    K_G\le \frac{\pi}{2\rho_*}
    =
    \frac{\pi}{2\log(1+\sqrt 2)}.
\]
\end{proof}

\begin{theorem}[Mixed Krivine preprocessing and projection]
\label{thm:appendix-mixed-krivine-rounding}
Let \(L\in\mathbb N\). For each \(\ell\in\{1,\ldots,L\}\), let
\(f_\ell,g_\ell:\mathbb R^{k_\ell}\to\{-1,1\}\) be odd measurable functions,
and let \(H_\ell\) be the corresponding arcsine-normalized correlation
function:
\[
    H_\ell(t)
    =
    \frac{\pi}{2}
    \mathbb E\left[
        f_\ell\left(G_1\right)
        g_\ell\left(tG_1+\sqrt{1-t^2}G_2\right)
    \right].
\]
Let \(\lambda_1,\ldots,\lambda_L\ge 0\) satisfy
\(\sum_{\ell=1}^L\lambda_\ell=1\), and define the mixed function
\[
    H_\lambda(t):=\sum_{\ell=1}^L\lambda_\ell H_\ell(t).
\]
Suppose that \(H_\lambda\) has a local inverse at the origin,
\[
    H_\lambda^{-1}(\zeta)=\sum_{r\ge 1} a_r\zeta^r,
\]
which is holomorphic on a neighborhood of \(\gamma\overline{\mathbb D}\),
and suppose that
\[
    M_\lambda(\gamma)
    :=
    \sum_{r\ge 1}|a_r|\gamma^r
    \le 1.
\]
Then
\[
    K_G\le \frac{\pi}{2\gamma}.
\]
\end{theorem}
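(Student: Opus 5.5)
We will follow the proof of Theorem~\ref{thm:appendix-pure-krivine-rounding} almost verbatim. The only genuinely new ingredient is an extra random selection of the component scheme. The key observation is that linearity of expectation turns the randomized choice of scheme into the mixed correlation function \(H_\lambda\).

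\textbf{Reduction and preprocessing.} As before, it suffices to prove the bound with any \(0<\gamma'<\gamma\) in place of \(\gamma\) and then let \(\gamma'\uparrow\gamma\). We then have \(M':=\sum_r|a_r|(\gamma')^r<1\), and the Taylor series of \(H_\lambda^{-1}\) converges on \(\gamma'\overline{\mathbb D}\), since the inverse is holomorphic on a neighborhood of \(\gamma\overline{\mathbb D}\).

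Fix \(A\) and near-optimal unit vectors \(x_i,y_j\), with the sign of the \(y_j\) chosen so that \(\sum a_{ij}\langle x_i,y_j\rangle\ge 0\). Build the same tensor-power embeddings \(I,J\) into \(\mathcal K\), now with the inverse coefficients \(a_r\) of \(H_\lambda^{-1}\). The padding coordinates \(e_L,e_R\) make \(u_i=I(x_i)\) and \(v_j=J(y_j)\) unit vectors, and
\[
\langle u_i,v_j\rangle=H_\lambda^{-1}(\gamma'\langle x_i,y_j\rangle).
\]
This value is real and lies in \([-1,1]\) because the vectors are unit vectors. One point needs care: \(H_\lambda\) is real and odd on \((-1,1)\), so the inverse coefficients are real. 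We should also check that \(\langle u_i,v_j\rangle\) stays strictly inside \((-1,1)\), or else handle the endpoint by continuity of each \(H_\ell\) on \([-1,1]\).

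\textbf{Randomized rounding.} Draw an index \(\ell\) with \(\Pr[\ell]=\lambda_\ell\), independently of everything else. Given \(\ell\), draw a Gaussian matrix \(\Gamma_\ell\in\mathbb R^{k_\ell\times N}\) and set \(\varepsilon_i=f_\ell(\Gamma_\ell u_i)\) and \(\delta_j=g_\ell(\Gamma_\ell v_j)\).

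Conditionally on \(\ell\), the vectors \(\Gamma_\ell u_i\) and \(\Gamma_\ell v_j\) are standard Gaussian vectors with coordinatewise correlation \(\langle u_i,v_j\rangle\). Hence
\[
\mathbb E[\varepsilon_i\delta_j\mid \ell]=\tfrac{2}{\pi}H_\ell(\langle u_i,v_j\rangle).
\]
Averaging over \(\ell\) gives
\[
\mathbb E[\varepsilon_i\delta_j]=\tfrac{2}{\pi}H_\lambda(\langle u_i,v_j\rangle)=\tfrac{2\gamma'}{\pi}\langle x_i,y_j\rangle.
\]
The last equality uses \(H_\lambda\circ H_\lambda^{-1}=\mathrm{id}\) on the branch. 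Summing against \(a_{ij}\) and comparing with the maximum over deterministic signs yields
\[
\operatorname{OPT}(A)\ge\tfrac{2\gamma'}{\pi}\operatorname{SDP}(A).
\]
Letting \(\gamma'\uparrow\gamma\) finishes the proof.

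\textbf{Main obstacle.} The main difficulty is justifying \(H_\lambda(H_\lambda^{-1}(s))=s\) for the real values \(s=\gamma'\langle x_i,y_j\rangle\in[-\gamma',\gamma']\). This requires that \(H_\lambda^{-1}\) maps \([-\gamma',\gamma']\) into the real interval \((-1,1)\) where \(H_\lambda\) is defined by the Gaussian expectation.

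We would handle this as follows. Realness of the coefficients makes \(H_\lambda^{-1}\) real on the real segment. The bound \(|H_\lambda^{-1}(s)|\le\sum|a_r|\gamma'^r<1\) keeps the value strictly inside \((-1,1)\), so it lies in the domain of the holomorphic extension \(S\) from Lemma~\ref{lemma:correlation-profile-analytic}. The composition identity holds near \(0\) by the definition of the local inverse. It then extends to the whole disk \(\gamma'\mathbb D\) by the identity theorem, since both sides are holomorphic there.
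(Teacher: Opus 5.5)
Your proposal is correct and follows the paper's proof essentially step for step: the same reduction to $\gamma'<\gamma$, the same tensor-power embeddings $I,J$ with padding vectors, a single global random index selecting the component scheme, and linearity of expectation to produce $H_\lambda$. You also justify $H_\lambda(H_\lambda^{-1}(s))=s$, which the paper uses without comment: the bound $|H_\lambda^{-1}(\zeta)|\le\sum_r|a_r|(\gamma')^r<1$ keeps the image inside the strip where $H_\lambda$ is holomorphic, and the identity theorem then extends the composition identity from a neighborhood of $0$ to the whole disk.
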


\begin{proof}
The preprocessing is identical to the proof of
Theorem~\ref{thm:appendix-pure-krivine-rounding}, but using the inverse
coefficients of \(H_\lambda^{-1}\). As before, first fix
\(0<\gamma'<\gamma\), and set
\[
    M_{\lambda,\gamma'}
    :=
    \sum_{r\ge 1}|a_r|(\gamma')^r
    <1.
\]
For unit vectors \(x,y\), define
\[
    I(x)
    :=
    \bigoplus_{r\ge 1}
        \sqrt{|a_r|}\,(\gamma')^{r/2}x^{\otimes r}
    \oplus \sqrt{1-M_{\lambda,\gamma'}}\,e_L
    \oplus 0
\]
and
\[
    J(y)
    :=
    \bigoplus_{r\ge 1}
        \operatorname{sgn}(a_r)\sqrt{|a_r|}\,(\gamma')^{r/2}y^{\otimes r}
    \oplus 0
    \oplus \sqrt{1-M_{\lambda,\gamma'}}\,e_R.
\]
Then \(I(x)\) and \(J(y)\) are unit vectors and
\[
    \langle I(x),J(y)\rangle
    =
    H_\lambda^{-1}\bigl(\gamma'\langle x,y\rangle\bigr).
\]

Now let \(A=(a_{ij})\) be arbitrary, and choose unit vectors
\(x_i,y_j\) that nearly attain \(\operatorname{SDP}(A)\), oriented so that
\(\sum_{i,j}a_{ij}\langle x_i,y_j\rangle\ge 0\). Put
\[
    u_i:=I(x_i),
    \qquad
    v_j:=J(y_j).
\]
Identify the span of the finitely many \(u_i,v_j\) with \(\mathbb R^N\).
For each \(\ell\), independently sample a Gaussian linear map
\[
    \Gamma_\ell:\mathbb R^N\to\mathbb R^{k_\ell}.
\]
Also sample a single global index \(\Lambda\in\{1,\ldots,L\}\), independent
of the Gaussian maps, with
\[
    \mathbb P[\Lambda=\ell]=\lambda_\ell.
\]
The word ``global'' is important: the same index \(\Lambda\) is used for
all \(i,j\). Define
\[
    \varepsilon_i
    :=
    f_\Lambda\left(\Gamma_\Lambda u_i\right),
    \qquad
    \delta_j
    :=
    g_\Lambda\left(\Gamma_\Lambda v_j\right).
\]
Conditioning on \(\Lambda=\ell\), the same rotation-invariance calculation
as in the pure case gives
\[
    \mathbb E[\varepsilon_i\delta_j\mid \Lambda=\ell]
    =
    \frac{2}{\pi}H_\ell(\langle u_i,v_j\rangle).
\]
Averaging over \(\Lambda\),
\[
    \mathbb E[\varepsilon_i\delta_j]
    =
    \frac{2}{\pi}
    \sum_{\ell=1}^L\lambda_\ell H_\ell(\langle u_i,v_j\rangle)
    =
    \frac{2}{\pi}H_\lambda(\langle u_i,v_j\rangle).
\]
Since
\[
    \langle u_i,v_j\rangle
    =
    H_\lambda^{-1}\bigl(\gamma'\langle x_i,y_j\rangle\bigr),
\]
we obtain
\[
    \mathbb E[\varepsilon_i\delta_j]
    =
    \frac{2\gamma'}{\pi}\langle x_i,y_j\rangle.
\]
Therefore
\[
\begin{aligned}
    \operatorname{OPT}(A)
    &\ge
    \mathbb E\left[
        \sum_{i,j}a_{ij}\varepsilon_i\delta_j
    \right] \\
    &=
    \frac{2\gamma'}{\pi}
    \sum_{i,j}a_{ij}\langle x_i,y_j\rangle.
\end{aligned}
\]
Taking the supremum over the original vectors gives
\[
    \operatorname{OPT}(A)
    \ge
    \frac{2\gamma'}{\pi}\operatorname{SDP}(A),
\]
and hence
\[
    \operatorname{SDP}(A)
    \le
    \frac{\pi}{2\gamma'}\operatorname{OPT}(A).
\]
Letting \(\gamma'\uparrow\gamma\) proves
\[
    \operatorname{SDP}(A)
    \le
    \frac{\pi}{2\gamma}\operatorname{OPT}(A).
\]
Since \(A\) was arbitrary, \(K_G\le \pi/(2\gamma)\).
\end{proof}

\begin{remark}
Theorem~\ref{thm:appendix-pure-krivine-rounding} is the special case
\(L=1\) of Theorem~\ref{thm:appendix-mixed-krivine-rounding}. We separated
the two statements because the pure case is the usual Krivine scheme, while
the mixed case emphasizes the randomized choice of a common scheme
\(\Lambda\) before projection.
\end{remark}

\section{Maximal Bounds}
\label[appendix]{app:maximal-bounds}

\begin{lemma}[A disk-to-strip bound for the sine map]
\label[lemma]{lemma:disk-to-strip-check}
Let \(R=1.1\). For all \(u,v\in\mathbb{R}\) with
\[
u^2+v^2\le R^2,
\]
one has
\[
|\sin u|\cosh v<0.946.
\]
In particular, if \(w=u+iv\) and \(|w|\le 1.1\), then
\[
|\Re(\sin w)|<0.946<1.
\]
\end{lemma}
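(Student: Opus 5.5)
We must show that $|\sin u|\cosh v<0.946$ whenever $u^2+v^2\le 1.1^2$. The plan is to reduce the disk constraint to a one-variable maximization on the boundary circle and then certify that maximum numerically.

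\textbf{Reduction to the boundary circle.} The function $|\sin u|\cosh v$ is even in both $u$ and $v$, so we may assume $u,v\ge 0$. We may also assume $u\le 1.1<\pi/2$; on this range $\sin u$ is increasing in $u$, and $\cosh v$ is increasing in $v\ge 0$. Hence the function is coordinatewise nondecreasing on the quarter disk. Given any point $(u,v)$ with $u,v\ge 0$ in the disk, we can increase $u$ until it reaches the circle $u^2+v^2=1.1^2$ without decreasing the value. So it suffices to bound
\[
\phi(\theta):=\sin(1.1\cos\theta)\cosh(1.1\sin\theta),\qquad \theta\in[0,\pi/2].
\]

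\textbf{Locating the maximum.} At the endpoints we have $\phi(0)=\sin(1.1)\approx 0.8912$ and $\phi(\pi/2)=0$. I expect an interior maximum slightly above $\sin 1.1$, since moving into the imaginary direction trades a second-order loss in $\sin$ for a second-order gain in $\cosh$. To find it, I would consider
\[
\phi'(\theta)=1.1\bigl[-\sin\theta\cos(1.1\cos\theta)\cosh(1.1\sin\theta)+\cos\theta\sin(1.1\cos\theta)\sinh(1.1\sin\theta)\bigr].
\]
I would then evaluate $\phi$ directly on a grid. For example, at $\theta$ around $0.6$–$0.7$ one gets $u\approx 0.87$–$0.91$ and $v\approx 0.62$–$0.71$. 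This gives $\sin u\approx 0.76$–$0.79$ and $\cosh v\approx 1.2$–$1.26$, so the product is roughly $0.93$–$0.945$. This places the maximum somewhere around $0.94$, which is consistent with the stated bound $0.946$ but leaves only a thin margin.

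\textbf{Rigorous certification.} This is the main obstacle: the margin is small, so a crude Lipschitz argument needs a reasonably fine grid. I would bound $|\phi'|$ on $[0,\pi/2]$ by
\[
1.1\bigl(\cosh(1.1)+\sinh(1.1)\bigr)=1.1\,e^{1.1}<3.31.
\]
I would then evaluate $\phi$ at grid points of spacing $h$ using \texttt{Arb} interval arithmetic. Each grid value must satisfy $\phi\le 0.946-3.31h/2$. With the peak near $0.94$ this requires roughly $h\le 10^{-3}$, giving a few thousand evaluations.

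Alternatively, one can evaluate $\phi$ by interval arithmetic directly on each subinterval of $[0,\pi/2]$ and bisect adaptively until every enclosure lies below $0.946$. This is the style of certificate the paper already uses.

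Either route gives $\max\phi<0.946$. The final claim then follows because $|\operatorname{Re}\sin w|=|\sin u|\cosh v$ for $w=u+iv$.
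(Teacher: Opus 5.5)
Your proof is correct, and it differs from the paper's mainly in how the one-variable maximum is certified. The reduction is the same in substance. The paper pushes $v$ rather than $u$ out to the circle, which gives $\phi(u)=\sin u\cosh s(u)$ with $s(u)=\sqrt{1.21-u^2}$.

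Instead of a grid, the paper proves unimodality analytically:
\[
\phi'(u)=\sin u\cosh s\,\bigl(\cot u-u\,\tfrac{\tanh s}{s}\bigr).
\]
The bracket $\cot u-u\,\tfrac{\tanh s}{s}$ is strictly decreasing, because $\tanh s/s$ decreases in $s$ while $s(u)$ decreases in $u$. The paper then proceeds as follows:
\begin{enumerate}
\item Two sign checks place the unique critical point $u_*$ in $(0.898,0.900)$.
\item The crude bound $|\phi''|<7$ on that interval, together with $\phi'(u_*)=0$, gives $\phi(u_*)\le\phi(0.899)+3.5\cdot10^{-6}$.
\item A single hand evaluation, $\phi(0.899)<0.945419$, finishes the proof.
\end{enumerate}
This yields a certificate checkable by hand with a handful of Taylor bounds. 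Because the derivative vanishes at the maximizer, the localization error enters only quadratically. Your route avoids the monotonicity argument entirely and is fully mechanical, at the cost of thousands of interval evaluations.

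One quantitative correction. The maximum is not ``around $0.94$'' but about $0.9453$, attained near $u\approx0.899$, $v\approx0.634$. So the margin below $0.946$ is only about $7\cdot10^{-4}$. With your Lipschitz constant $3.31$, the grid condition $\phi(\theta_j)\le 0.946-1.655\,h$ then forces $h\lesssim 4\cdot10^{-4}$. At your proposed $h=10^{-3}$ the check fails at grid points near the peak, since $0.9453+0.0017>0.946$.

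This is easily repaired: a uniform grid of roughly $4000$ points works, or your adaptive-bisection alternative, which will terminate because the true margin is positive. So it is not a gap in the argument, but the step size you proposed would not have produced a certificate.
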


\begin{proof}
Since \(R<\pi/2\), by symmetry we may assume \(0\le u\le R\) and
\(v\ge 0\). For fixed \(u\), the function \(v\mapsto \cosh v\) is
increasing for \(v\ge 0\). Hence the maximum over \(u^2+v^2\le R^2\)
is attained when
\[
v=\sqrt{R^2-u^2}.
\]
It therefore suffices to maximize
\[
\phi(u):=\sin u\cosh\sqrt{R^2-u^2},
\qquad 0\le u\le R.
\]

Put
\[
s(u):=\sqrt{R^2-u^2}.
\]
For \(0<u<R\), differentiation gives
\[
\phi'(u)
=
\sin u\cosh s(u)
\left(
\cot u-\frac{u}{s(u)}\tanh s(u)
\right).
\]
Define
\[
h(u):=\cot u-\frac{u}{s(u)}\tanh s(u).
\]
We claim that \(h\) is strictly decreasing on \((0,R)\). The function
\(u\mapsto \cot u\) is strictly decreasing. Also
\[
A(s):=\frac{\tanh s}{s}
\]
is strictly decreasing for \(s>0\), since
\[
A'(s)=\frac{s\operatorname{sech}^2s-\tanh s}{s^2}<0,
\]
the last inequality being equivalent to
\[
\sinh s\cosh s-s>0
\]
for \(s>0\). Since \(s(u)\) is decreasing, the function \(A(s(u))\)
is increasing, and hence \(uA(s(u))\) is increasing. Therefore
\[
h(u)=\cot u-uA(s(u))
\]
is strictly decreasing.

It remains to locate the unique critical point. Direct outward-rounded
Taylor bounds give
\[
h(0.898)>0.796708-0.793956>0
\]
and
\[
h(0.900)<0.793553-0.796524<0.
\]
For example, these inequalities follow from the alternating Taylor
bounds for \(\sin\) and \(\cos\), and from writing
\[
\frac{u}{s}\tanh s
=
u\,\frac{\sinh s}{s\cosh s},
\]
together with Taylor bounds for \(\sinh s/s\) and \(\cosh s\).
Since \(h\) is strictly decreasing, the unique maximizer \(u_*\) of
\(\phi\) lies in
\[
0.898<u_*<0.900.
\]

Now take \(c=0.899\). On the interval \([0.898,0.900]\), another
differentiation gives
\[
\phi''(u)
=
\sin u\cosh s
\left(\frac{u^2}{s^2}-1\right)
-2\frac{u}{s}\cos u\sinh s
-\frac{R^2}{s^3}\sin u\sinh s,
\]
where \(s=s(u)\). Throughout this interval,
\[
s^2\ge 0.4,\qquad s>0.632,\qquad s<0.636,
\]
and
\[
\cosh s<1.21,\qquad \sinh s<0.68.
\]
Substituting these bounds into the displayed formula for \(\phi''\)
gives
\[
|\phi''(u)|<7
\qquad (0.898\le u\le 0.900).
\]

Since \(u_*\in[0.898,0.900]\), we have \(|u_*-c|\le 10^{-3}\). Taylor's
theorem gives
\[
\phi(u_*)\le \phi(c)+\frac{7}{2}\cdot 10^{-6}.
\]
Next,
\[
s(c)^2=1.21-0.899^2=0.401799<0.6339^2,
\]
so
\[
\cosh s(c)<\cosh(0.6339)<1.20774.
\]
Also, by the alternating Taylor expansion of \(\sin\),
\[
\sin(0.899)
<
0.899-\frac{0.899^3}{6}+\frac{0.899^5}{120}
<0.7828.
\]
Therefore
\[
\phi(c)<0.7828\cdot 1.20774<0.945419.
\]
Consequently,
\[
\max_{u^2+v^2\le 1.1^2}|\sin u|\cosh v
=
\phi(u_*)
<
0.945419+0.0000035
<
0.946.
\]
This proves the claim.
\end{proof}

\begin{lemma}[Cosine-ratio bound in the sine coordinate]
\label[lemma]{lemma:cosine-ratio-check}
For every $w\in\C$ with $|w|\le 1.1$,
\[
 \frac{\pi}{2}\frac{|\cos w|^2}{1-(\Re \sin w)^2}<14.19.
\]
\end{lemma}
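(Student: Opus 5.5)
The plan is to reduce the two–dimensional supremum to a one–variable problem on the boundary circle $|w|=1.1$, and then certify the resulting one–variable bound by a finite interval subdivision, in the same spirit as \cref{lemma:disk-to-strip-check}.

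\textbf{Reduction to the boundary.} Write $w=u+iv$. The elementary identities
\[
|\cos w|^2=\cos^2u+\sinh^2v,
\qquad
\Re\sin w=\sin u\cosh v
\]
give the quantity to bound as
\[
\Psi(u,v):=\frac{\pi}{2}\,\frac{\cos^2u+\sinh^2v}{1-\sin^2u\cosh^2v}.
\]
By \cref{lemma:disk-to-strip-check}, the denominator is at least $1-0.946^2>0.105$ on the closed disk, so $\Psi$ is finite and continuous there. $\Psi$ is even in $u$ and in $v$, so we may assume $u,v\ge0$.

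For fixed $u$, consider the dependence on $v\ge0$. The numerator is increasing in $v$, and the denominator is positive and nonincreasing in $v$. Hence $\Psi$ is nondecreasing in $v$, and the supremum over $u^2+v^2\le1.21$ is attained at $v=s(u):=\sqrt{1.21-u^2}$. It therefore suffices to show
\[
\psi(u):=\Psi\bigl(u,s(u)\bigr)<14.19,
\qquad 0\le u\le1.1.
\]

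\textbf{Certified one–variable bound.} A heuristic evaluation near the maximizer of \cref{lemma:disk-to-strip-check} gives $\psi(0.9)\approx 12.4$, so there is a comfortable margin. Therefore no delicate critical-point analysis is needed, only a crude but rigorous enclosure.

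Partition $[0,1.1]$ into subintervals $[u_a,u_b]$. On such a cell, $\cos u$ and $s(u)$ are decreasing while $\sin u$ is increasing. This yields the cellwise bounds
\[
\psi(u)\le \frac{\pi}{2}\,\frac{\cos^2u_a+\sinh^2 s(u_a)}{1-\sin^2u_b\,\cosh^2 s(u_a)}.
\]
The denominator stays positive because the monotone bound is controlled by \cref{lemma:disk-to-strip-check} once the cells are fine enough. Each right-hand side is evaluated with outward-rounded enclosures, either via alternating Taylor bounds for $\sin,\cos,\sinh,\cosh$ as in \cref{lemma:disk-to-strip-check} or directly in \texttt{Arb}. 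We then check that every cell bound is below $14.19$.

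\textbf{Main obstacle.} The hard step is near $u\approx0.9$. There the denominator is smallest (about $0.106$) and most sensitive, so the decoupled bound loses the most. I would first try a uniform mesh of width around $10^{-3}$ and refine adaptively only where a cell bound exceeds $14.19$. Given the roughly $14\%$ margin over the true maximum, I expect a modest number of cells to suffice.
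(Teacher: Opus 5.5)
Your reduction to the circle $|w|=1.1$ is the same as the paper's. Your ``numerator increases, denominator decreases in $v\ge 0$'' argument is a quicker form of the paper's computation $\partial_y\frac{1-x+y}{1-x-xy}=\frac{1-x^2}{(1-x-xy)^2}>0$, with $x=\sin^2u$ and $y=\sinh^2v$. The one-variable step is where you differ. The paper shows that $\Phi=\frac{2}{\pi}\psi$ is unimodal: the sign of $\Phi'$ is that of $A(u)-1$, with $A$ strictly decreasing. It then traps the maximizer in $[0.8289364,0.8289372]$ with two interval evaluations of $A$, and encloses $\Phi$ once on that interval to get $\frac{\pi}{2}\cdot 9.03345993<14.18973$. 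You replace this with monotone cell enclosures and adaptive refinement. That route needs no calculus and can certify the bound.

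However, the quantitative premise behind your plan is wrong, and it is exactly why the paper does the critical-point analysis. The maximum of $\psi$ is not near $u\approx 0.9$, where the denominator is smallest. Along the circle the numerator $\cos^2u+\sinh^2 s(u)$ has slope about $-3.3$, so the ratio peaks earlier, at $u^*\approx 0.8289$, with $\psi(u^*)\approx 14.1895$. For comparison, $\psi(0.81)\approx 14.09$, $\psi(0.85)\approx 14.04$, and $\psi(0.9)\approx 12.44$. The margin below $14.19$ is therefore about $5\cdot 10^{-4}$, roughly $3\cdot 10^{-5}$ relative, not $14\%$.

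This changes what your scheme requires:
\begin{itemize}
\item Near $u^*$ your decoupled denominator bound loses $(\sin^2u_b-\sin^2u_a)\cosh^2 s(u_a)\approx 1.6\,h$ against a true denominator of about $0.119$. A cell bound is then about $\psi(u_a)(1+13.5\,h)$.
\item With $h=10^{-3}$ this gives about $14.38$, so every cell within roughly $0.025$ of $u^*$ fails.
\item To pass you need cell widths of order $2\cdot 10^{-6}$ at the peak, i.e.\ a few thousand cells evaluated at high precision.
\end{itemize}

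The true supremum is strictly below $14.19$ and your cell bounds converge to $\psi$ as $h\to 0$, so the adaptive procedure does terminate and the argument can be completed as an Arb computation. You should, however, drop the ``comfortable margin'' and ``no delicate analysis is needed'' claims. The hard region is $u\approx 0.83$, not $0.9$, and doing the cells by hand with alternating Taylor bounds is not realistic.

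The trade-off between the two routes is this. The paper's unimodality argument costs a monotonicity proof for $A$ and buys a constant number of certified evaluations. Yours avoids that proof at the cost of a very fine, machine-only subdivision.
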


\begin{proof}
Write $w=u+iv$ and put $\rho=1.1$. We need to maximize over
\[
 u^2+v^2\le \rho^2.
\]
We have
\[
 \Re(\sin w)=\sin u\cosh v
\]
and
\[
 |\cos w|^2
 =
 |\cos u\cosh v-i\sin u\sinh v|^2
 =
 \cos^2u+\sinh^2v.
\]
Hence it is enough to bound
\[
 Q(u,v):=
 \frac{\cos^2u+\sinh^2v}{1-\sin^2u\cosh^2v}
\]
on the disk $u^2+v^2\le \rho^2$. By the previous bound
\[
 |\sin u|\cosh v<1
\]
on this disk, the denominator is positive.

Set
\[
 x:=\sin^2u,\qquad y:=\sinh^2v.
\]
Then
\[
 Q(u,v)=\frac{1-x+y}{1-x-xy}.
\]
For fixed $u$, equivalently fixed $x$, we compute
\[
 \frac{\partial}{\partial y}
 \frac{1-x+y}{1-x-xy}
 =
 \frac{1-x^2}{(1-x-xy)^2}>0.
\]
Therefore, for each fixed $u$, the maximum occurs when $|v|$ is as large as possible, namely
\[
 v=\sqrt{\rho^2-u^2}.
\]
By evenness in $u$, it remains to maximize the one-variable function
\[
 \Phi(u):=
 \frac{\cos^2u+\sinh^2s(u)}
 {1-\sin^2u\cosh^2s(u)},
 \qquad
 s(u):=\sqrt{\rho^2-u^2},
 \qquad
 0\le u\le \rho.
\]

For $0<u<\rho$, differentiating gives
\[
 \Phi'(u)
 =
 \frac{\sinh(2s)\cos u}
 {\bigl(1-\sin^2u\cosh^2s\bigr)^2}
 \left[
 \sin u\,\tanh s\,(1+\cosh^2s)
 -
 \frac{u}{s}\cos u\,(1+\sin^2u)
 \right],
\]
where $s=s(u)$. The prefactor is positive, since $\rho<\pi/2$. Thus the sign of
$\Phi'(u)$ is the sign of $A(u)-1$, where
\[
 A(u):=
 \frac{s(u)\tan u\,\tanh s(u)\,(1+\cosh^2s(u))}
 {u(1+\sin^2u)}.
\]

We claim that $A$ is strictly decreasing on $(0,\rho)$. Let
\[
 B(s):=s\tanh s(1+\cosh^2s).
\]
Then
\[
 \frac{d}{du}\log A(u)
 =
 \left(\frac{1}{\sin u\cos u}-\frac1u\right)
 -\frac{\sin(2u)}{1+\sin^2u}
 -\frac{u}{s}\frac{B'(s)}{B(s)}.
\]
Moreover,
\[
 \frac{B'(s)}{B(s)}
 =
 \frac1s+\frac{1}{\sinh s\cosh s}
 +
 \frac{2\sinh s\cosh s}{1+\cosh^2s}.
\]
Since
\[
 \frac{2\sinh s\cosh s}{1+\cosh^2s}\ge \tanh s
\]
and
\[
 \frac{1}{\sinh s\cosh s}+\tanh s=\coth s>\frac1s,
\]
we get
\[
 \frac{B'(s)}{B(s)}>\frac2s.
\]
On the other hand, for $0<u<\rho$,
\[
 \sin u\cos u=\frac12\sin(2u)>u-\frac23u^3.
\]
Therefore
\[
 \frac{1}{\sin u\cos u}-\frac1u
 =
 \frac{u-\sin u\cos u}{u\sin u\cos u}
 <
 \frac{2u}{3-2u^2}
 <
 \frac{2u}{\rho^2-u^2}
 =
 \frac{2u}{s^2}.
\]
Combining these estimates,
\[
 \frac{d}{du}\log A(u)
 <
 \frac{2u}{s^2}
 -
 \frac{\sin(2u)}{1+\sin^2u}
 -
 \frac{2u}{s^2}
 <0.
\]
Thus $A$ is strictly decreasing. Since $A(u)>1$ for small $u>0$ and
$A(u)\to0$ as $u\uparrow\rho$, the function $\Phi$ increases up to the unique
point where $A(u)=1$ and decreases afterward.

It remains only to certify the location and value of this maximum. A direct outward-rounded
interval computation gives
\[
 A(0.8289364)>1.0000019,
 \qquad
 A(0.8289372)<0.9999992.
\]
Hence the unique maximizer of $\Phi$ lies in
\[
 I=[0.8289364,\,0.8289372].
\]
Evaluating $\Phi$ on this interval, again with outward-rounded interval arithmetic, gives
\[
 \Phi(I)\subset [\,9.0332175,\;9.03345993\,].
\]
Consequently,
\[
 \sup_{|w|\le 1.1}
 \frac{\pi}{2}\frac{|\cos w|^2}{1-(\Re\sin w)^2}
 =
 \frac{\pi}{2}\sup_{0\le u\le \rho}\Phi(u)
 \le
 \frac{\pi}{2}\cdot 9.03345993
 <
 14.18973
 <
 14.19.
\]
This proves the claim.
\end{proof}

\section{Coefficient Vectors and Mixing Parameter.}
\label[appendix]{app:explicit-mixture-coefficients}

The coefficient vectors and hyperplane weights by degree are listed below. 

\begin{center}
\renewcommand{\arraystretch}{1.3}
\small
\setlength{\tabcolsep}{4pt}
\begin{tabular}{|>{\centering\arraybackslash}p{1.2cm}|l|}
\hline
$d$ & Coefficients (rounded; full version in repository) \\
\hline
$3$ &
  $a^{(3)} = (+0.001558,\; +0.076353)$ \\
    & $b^{(3)} = (-0.000315,\; -0.076352)$ \\
    & $p_3 = 0.287269$ \\ \hline 
$5$ &
  $a^{(5)} = (+0.004898,\; +0.073363,\; +0.008537)$ \\
    & $b^{(5)} = (+0.002490,\; -0.072991,\; -0.008345)$ \\
    & $p_5 = 0.275000$ \\ \hline 
$7$ &
  $a^{(7)} = (+0.091892,\; -0.068852,\; -0.006028,\; -0.003536)$ \\
    & $b^{(7)} = (+0.096058,\; +0.083071,\; +0.025794,\; +0.014605)$ \\
    & $p_7 = 0.275923$ \\ \hline 
$9$ &
  $a^{(9)} = (-0.069245,\; -0.083730,\; -0.034509,\; -0.030311,\; -0.010654)$ \\
    & $b^{(9)} = (-0.064714,\; +0.068967,\; +0.006236,\; +0.001519,\; -0.002667)$ \\
    & $p_9 = 0.273360$ \\
\hline
\end{tabular}
\end{center}

\newpage 
\section{Parameter Table for Certifying the Rounding Schemes}
\label[appendix]{app:parameter-table}

All five certificates share the constants 
\[
  M = 0.98,\qquad R = 0.975,\qquad r_0 = 1.1,\qquad B = 15.3,
  \qquad \rho_* = \log(1+\sqrt2)=0.8813735870\ldots
\]

\begin{table}[ht]
\centering
\renewcommand{\arraystretch}{1.25}
\setlength{\tabcolsep}{3pt}
\scriptsize
\begin{adjustbox}{max width=\textwidth}
\begin{tabular}{@{}l c c c c c@{}}
\toprule
 & \textbf{\cref{thm:pure-h3} ($\mathrm{He}_3$)} & \textbf{deg 3} & \textbf{deg 5} & \textbf{deg 7} & \textbf{deg 9} \\
\midrule
$N_0$ (cond.\ 2 head deg.) & $144$ & $106$ & $106$ & $106$ & $106$ \\
$N_1$ (cond.\ 3\&4 head deg.) & $220$ & $196$ & $196$ & $180$ & $192$ \\
$\gamma-\rho_*$ & $4.8\cdot10^{-6}$ & $1.403\cdot10^{-5}$ & $2.755\cdot10^{-5}$ & $2.945\cdot10^{-5}$ & $2.987\cdot10^{-5}$ \\
\midrule
\multicolumn{6}{@{}l}{\emph{Condition 2 (Rouch\'e inverse disk), at $M=0.98$:}}\\
$\displaystyle\sum_{n\le N_0}|d_n|M^n$
 & $<0.00451411$ & $<0.00414862$ & $<0.00359494$ & $<0.00356722$ & $<0.00346558$ \\
$\displaystyle B\frac{(M/r_0)^{N_0+1}}{1-M/r_0}$
 & $<7.4601\cdot10^{-6}$ & $<6.0128\cdot10^{-4}$ & $<6.0128\cdot10^{-4}$ & $<6.0128\cdot10^{-4}$ & $<6.0128\cdot10^{-4}$ \\
\midrule
\multicolumn{6}{@{}l}{\emph{Conditions 3 \& 4 (admissibility at $\gamma$):}}\\
$\displaystyle\sum_{n\le N_1}|a_n|\gamma^n$
 & $\le 0.999999957755$ & $\le 0.999999957887$ & $\le 0.999999966102$ & $\le 0.999999857455$ & $\le 0.999999950702$ \\
$T=1-\text{head}$
 & $4.224\cdot10^{-8}$ & $4.211\cdot10^{-8}$ & $3.389\cdot10^{-8}$ & $1.425\cdot10^{-7}$ & $4.929\cdot10^{-8}$ \\
$\displaystyle \sinh(M)\frac{(\gamma/R)^{N_1+1}}{1-\gamma/R}\ (\le T)$
 & $\le 2.44\cdot10^{-9}$ & $\le 2.757\cdot10^{-8}$ & $\le 2.765\cdot10^{-8}$ & $\le 1.391\cdot10^{-7}$ & $\le 4.143\cdot10^{-8}$ \\
\midrule
$K_G\le$
 & $1.782204273$ & $1.782185609$ & $1.782158272$ & $1.782154430$ & $1.782153581$ \\
improvement over $\tfrac{\pi}{2\rho_*}$
 & $\ge 9.705\cdot10^{-6}$ & $\ge 2.836\cdot10^{-5}$ & $\ge 5.570\cdot10^{-5}$ & $\ge 5.954\cdot10^{-5}$ & $\ge 6.039\cdot10^{-5}$ \\
\bottomrule
\end{tabular}
\end{adjustbox}
\caption{Certified parameters for the five rounding schemes. Shared constants:
$M=0.98$, $R=0.975$, $r_0=1.1$, $B=15.3$, $\rho_*=\log(1+\sqrt2)$. For each scheme
$\gamma=\rho_*+(\gamma-\rho_*)$. Condition~2 truncates the perturbation series at
degree $N_0$ and bounds the tail $n>N_0$ analytically; conditions 3 \& 4 truncate
the inverse-majorant at $N_1$ with Cauchy remainder $\le T=1-\text{head}$. All
head/tail/remainder/$K_G$ entries are rigorous upper bounds; the improvement is a
rigorous lower bound.}
\label{tab:scheme-parameters}
\end{table}